\def\llncs{0}
\def\fullpage{1}
\def\anonymous{0}
\def\authnote{0}
\def\notxfont{0}
\def\submission{0}
\def\llncs{1}
\definecolor{darkblue}{rgb}{0,0,0.6}
\definecolor{darkgreen}{rgb}{0,0.5,0}
\definecolor{maroon}{rgb}{0.5,0.1,0.1}
\definecolor{dpurple}{rgb}{0.2,0,0.65}
\DeclareMathAlphabet{\mathpzc}{OT1}{pzc}{m}{it}
\newtheoremstyle{thicktheorem}%
{\topsep}
{\topsep}
{\itshape}{}%
{\bfseries}%
{.}
{ }%
{\thmname{#1}\thmnumber{ #2}%
		\thmnote{ (#3)}%
}
\newtheoremstyle{remark}
{\topsep}
{\topsep}
	{}
	{}
	{}
	{.}
	{ }
	{\textit{\thmname{#1}}\thmnumber{ #2}
			\thmnote{ (#3)}%
	}
	\theoremstyle{thicktheorem}
	\newtheorem{theorem}{Theorem}[section]
	\newtheorem{lemma}[theorem]{Lemma}
	\newtheorem{corollary}[theorem]{Corollary}
	\newtheorem{definition}[theorem]{Definition}
	\theoremstyle{remark}
	\newtheorem{remark}[theorem]{Remark}
\newtheorem{MyClaim}[theorem]{Claim}
\Crefname{MyClaim}{Claim}{Claims}
	\crefname{theorem}{Theorem}{Theorems}
	\crefname{assumption}{Assumption}{Assumptions}
	\crefname{construction}{Construction}{Constructions}
	\crefname{corollary}{Corollary}{Corollaries}
	\crefname{conjecture}{Conjecture}{Conjectures}
	\crefname{definition}{Definition}{Definitions}
	\crefname{exmaple}{Example}{Examples}
	\crefname{experiment}{Experiment}{Experiments}
	\crefname{counterexample}{Counterexample}{Counterexamples}
	\crefname{lemma}{Lemma}{Lemmata}
	\crefname{observation}{Observation}{Observations}
	\crefname{proposition}{Proposition}{Propositions}
	\crefname{remark}{Remark}{Remarks}
	\crefname{claim}{Claim}{Claims}
	\crefname{fact}{Fact}{Facts}
	\crefname{note}{Note}{Notes}
 \crefname{appendix}{App.}{Appendices}
 \crefname{section}{Sec.}{Sections}
\renewcommand*{\backref}[1]{}
	\renewcommand*{\backref}[1]{(Cited on page~#1.)}
\newcommand{\mor}[1]{}
\newcommand{\minki}[1]{}
\newcommand{\takashi}[1]{}
\newcommand{\dakshita}[1]{}
\newcommand{\kabir}[1]{}
\newcommand{\shira}[1]{}
\newcommand{\alper}[1]{}
\newcommand{\revise}[1]{}
\newcommand{\taiga}[1]{}
\newcommand{\mor}[1]{$\ll$\textsf{\color{red} Tomoyuki: { #1}}$\gg$}
\newcommand{\dakshita}[1]{$\ll$\textsf{\color{magenta} Dakshita: { #1}}$\gg$}
\newcommand{\takashi}[1]{$\ll$\textsf{\color{orange} Takashi: { #1}}$\gg$}
\newcommand{\kabir}[1]{$\ll$\textsf{\color{darkgreen} K: {#1}}$\gg$}
\newcommand{\revise}[1]{{\color{purple}#1}} 
\newcommand{\taiga}[1]{$\ll$\textsf{\color{violet} Taiga: { #1}}$\gg$}
\newcommand{\shira}[1]{$\ll$\textsf{\color{red} Yuki: { #1}}$\gg$}
\newcommand{\alper}[1]{$\ll$\textsf{\color{teal} Alper: { #1}}$\gg$}
\newcommand{\eval}{\mathsf{Eval}}
\newcommand{\SD}{\mathsf{SD}} 
\newcommand{\puzz}{\mathsf{puzz}}
\newcommand{\ans}{\mathsf{ans}}
\newcommand{\Samp}{\algo{Samp}}
\newcommand{\cA}{\mathcal{A}}
\newcommand{\cB}{\mathcal{B}}
\newcommand{\cC}{\mathcal{C}}
\newcommand{\cD}{\mathcal{D}}
\newcommand{\cE}{\mathcal{E}}
\newcommand{\cG}{\mathcal{G}}
\newcommand{\cH}{\mathcal{H}}
\newcommand{\cM}{\mathcal{M}}
\newcommand{\cO}{\mathcal{O}}
\newcommand{\cP}{\mathcal{P}}
\newcommand{\cQ}{\mathcal{Q}}
\newcommand{\cR}{\mathcal{R}}
\def\makeuppercase#1{
\expandafter\newcommand\csname tl#1\endcsname{\widetilde{#1}}
}
\def\makelowercase#1{
\expandafter\newcommand\csname tl#1\endcsname{\widetilde{#1}}
}
\newcommand{\N}{\mathbb{N}}
\newcommand{\bbI}{\mathbb{I}}
\newcommand{\bbS}{\mathbb{S}}
\newcommand{\bbX}{\mathbb{X}}
\newcommand{\bbO}{\mathbb{O}}
\newcommand{\regC}{\mathbf{C}}
\newcommand{\regD}{\mathbf{D}}
\newcommand{\regB}{\mathbf{B}}
\newcommand{\regA}{\mathbf{A}}
\newcommand{\secp}{\lambda}
\newcommand{\advB}{\mathcal{B}}
\newcommand*{\pp}{\keys{pp}}
\newcommand*{\keys}[1]{\mathsf{#1}}
\newcommand*{\algo}[1]{\ensuremath{\mathsf{#1}}}
\newenvironment{boxfig}[2]{\begin{figure}[#1]\fbox{\begin{minipage}{0.97\linewidth}
                        \vspace{0.2em}
                        \makebox[0.025\linewidth]{}
                        \begin{minipage}{0.95\linewidth}
            {{
                        #2 }}
                        \end{minipage}
                        \vspace{0.2em}
                        \end{minipage}}}{\end{figure}}
\newcommand{\bit}{\{0,1\}}
\newcommand{\Setup}{\algo{Setup}}
\newcommand{\Gen}{\algo{Gen}}
\newcommand{\Ver}{\algo{Ver}}
\newcommand{\TD}{\algo{TD}}
\newcommand{\Eval}{\algo{Eval}}
\newcommand{\negl}{{\mathsf{negl}}}
\newcommand{\poly}{{\mathsf{poly}}}
\newcommand{\bin}{\{0,1\}}
\DeclareMathOperator*{\Exp}{\mathbb{E}}
\DeclareRobustCommand
\newcommand{\reg}[1]{\mathbf{#1}}    
\newcommand{\adv}{\mathcal{A}}
\newcommand{\obf}{\mathsf{Obf}}      
\DeclareMathOperator*{\Prr}{Pr}
\newcommand{\CCol}{\mathsf{Col}}
\newcommand{\DCol}{\mathsf{DCol}}
\newcommand{\QCol}{\mathsf{QCol}}
\newcommand{\DQ}{\mathsf{DQ}}
\newcommand{\obfC}{\widetilde{C}}
\title{
On the Cryptographic Futility of 
Non-Collapsing Measurements}
\title{On the Cryptographic Futility of 
Non-Collapsing Measurements}
\author{\empty}\institute{\empty}
\author{}
\author{
Alper Cakan\inst{1},
Dakshita Khurana\inst{2}, Tomoyuki Morimae\inst{3}, Yuki Shirakawa\inst{3},
Kabir Tomer\inst{4}, and Takashi Yamakawa\inst{5}
}
\institute{
 Yukawa Institute for Theoretical Physics, Kyoto University, Kyoto, Japan 
}
\institute{University of Illinois, Urbana-Champaign}
\author{
Alper \c{C}akan\thanks{ Carnegie Mellon University, USA. \url{acakan@andrew.cmu.edu}}~~~~~~
 Dakshita Khurana\thanks{NTT Research, USA and University of Illinois Urbana-Champaign, USA. \url{dakshita@illinois.edu}}~~~~~~
 Tomoyuki Morimae\thanks{Yukawa Institute for Theoretical Physics, Kyoto University, Japan. \url{tomoyuki.morimae@yukawa.kyoto-u.ac.jp}, \url{yuki.shirakawa@yukawa.kyoto-u.ac.jp}}
\\ 
Yuki Shirakawa$^{\ddagger}$~~~~~~
Kabir Tomer\thanks{University of Illinois Urbana-Champaign, USA. \url{ktomer2@illinois.edu}}~~~~~~  Takashi Yamakawa\thanks{NTT Social Informatics Laboratories, Tokyo, Japan. \url{takashi.yamakawa@ntt.com}}
}
\author[1]{Alper \c{C}akan}
\author[2,4]{Dakshita Khurana}
\author[3]{Tomoyuki Morimae}
\author[3]{Yuki Shirakawa}
\author[4]{Kabir Tomer}
\author[5,3]{Takashi Yamakawa}
\affil[1]{{\small Carnegie Mellon University, USA}\authorcr{\small acakan@andrew.cmu.edu}}
\affil[2]{{\small NTT Research, USA}\authorcr{\small dakshita@illinois.edu}}
\affil[3]{
{\small Yukawa Institute for Theoretical Physics, Kyoto University, Japan}
\authorcr{{\small \{tomoyuki.morimae,yuki.shirakawa\}@yukawa.kyoto-u.ac.jp} }}
\affil[4]{{\small University of Illinois Urbana-Champaign, USA}\authorcr{\small ktomer2@illinois.edu}}
\affil[5]{{\small NTT Social Informatics Laboratories, Tokyo, Japan}\authorcr{\small takashi.yamakawa@ntt.com}}
\date{}
\begin{document}

\maketitle

\begin{abstract}
We investigate quantum analogues of collision resistance and 
obtain separations between quantum ``one-way'' and ``collision-resistant'' primitives. 

\begin{itemize}
\item Our first result studies one-wayness versus collision-resistance defined over quantum circuits that output classical strings. 
We show that there is a classical oracle $\cO$ relative to which (sub-exponentially secure) indistinguishability obfuscation and one-way permutations exist even against adversaries that make quantum queries to a non-collapsing measurement oracle, $\cQ^{\cO}$.
Very roughly, $\cQ^{\cO}$ outputs the result of multiple non-collapsing measurements on the output of any quantum $\cO$-aided circuit.

This rules out fully black-box {\em quantum} constructions of $Y$ from $X$ for any $X \in \{$indistinguishability obfuscation and one-way permutations, public-key encryption, deniable encryption, oblivious transfer, non-interactive ZK, trapdoor permutations, quantum money$\}, Y \in \{$collision-resistant hash functions, hard problems in SZK, homomorphic encryption, distributional collision-resistant puzzles$\}$.

\item 
Our second result studies one-wayness versus collision-resistance defined over quantum states. Here, we show that relative to the same classical oracle $\cO$, (sub-exponentially secure) indistinguishability obfuscation and one-way permutations exist even against adversaries that make quantum queries to a {\em cloning unitary} $\mathsf{QCol}^\cO$. Very roughly, this latter oracle implements a well-defined, linear operation to clone a subset of the qubits output by any quantum $\cO$-aided circuit.

This rules out fully black-box 
constructions of quantum lightning from public-key quantum money.

\end{itemize}

\end{abstract}

\ifnum\llncs=0
\newpage
\tableofcontents 
\newpage
\fi

\section{Introduction}

The relationship between \emph{one-wayness} and \emph{collision resistance} is foundational in cryptography: while one-way functions are minimal for a wide range of cryptographic tasks~\cite{FOCS:ImpLub89}, collision-resistant hash functions are believed to be strictly stronger, and underpin advanced primitives such as ``hash-and-sign'' digital signatures~\cite{hashandsign}, merkle trees~\cite{hashandsign} and two-round statistically-hiding commitments~\cite{C:HalMic96}. 
There is strong evidence for this belief, in the form of classical black-box separations between (indistinguishability obfuscation (iO) and) one-way functions or one-way permutations (OWPs), and collision-resistant hash functions~\cite{EC:Simon98,FOCS:AshSeg15,TCC:BitDeg19}. These qualitative separations have guided the design and analysis of cryptosystems for decades. However, with the advent of quantum computing, many classical assumptions and separations require re-evaluation.\\


\noindent{\bf Quantum Computable Collision-Resistant Hash Functions.}  
Quantum adversaries may exploit uniquely quantum resources, such as superposition queries, potentially voiding classical separations or creating new ones. In particular, defining and separating quantum analogues of collision resistance and one-wayness are essential both for understanding the foundations of quantum cryptography and for building secure quantum protocols.
While~\cite{JC:HosYam24} showed that there is no quantum fully-black-box reduction from quantumly computable collision-resistant hash functions to OWPs or even trapdoor permutations, 
their proof is unwieldy and fails to separate collision-resistance from iO, leaving it open to prove such a separation.\\

\noindent{\bf Cloning Resistance.}
At the same time, quantum primitives such as quantum lightning exhibit collision-resistance-like properties, yet do not immediately yield collision-resistant hash functions. To appreciate this, it is helpful to start with the simpler notion of public-key quantum money~\cite{wiesner83,STOC:AarChr12}. Here, a bank generates quantum banknotes that anyone can verify but no adversary can counterfeit. More precisely, given a valid banknote, it should be infeasible to produce another with the same serial number. This is best understood as a form of targeted cloning resistance: the challenge state that must be cloned is the specific one generated by the bank. 

Quantum lightning~\cite{JC:Zhandry21} strengthens this idea. Instead of restricting minting to the bank, anyone can generate a valid state along with a serial number, yet it remains infeasible to produce two states with the same serial number. This property corresponds to general cloning resistance, where the adversary wins if they clone {\em any} of the banknotes in the support of the sampler.

Despite the surface-level similarity with collision-resistant hash functions, constructing quantum lightning has proven incredibly challenging, and the only known construction~\cite{SZ25} uses powerful generic assumptions such as indistinguishability obfuscation along with concrete algebraic assumptions such as learning with errors.
On the other hand, ``just'' (post-quantum) iO and one-way functions suffice to build quantum money.
Then, given the relative ease of building quantum money, it is natural to ask whether one can separate quantum lightning from quantum money--alternatively, rule out constructions of quantum lightning from ``just'' (post-quantum) iO and OWPs.


\subsection{Our Results}
The primary contribution of this work is to formulate a unified theory of one-wayness versus collision-resistance and cloning-resistance for quantum primitives. 
In doing so, we also explore connections with the complexity classes $\mathsf{PDQP}$ (Product-Dynamic Quantum Polynomial Time)~\cite{ITCS:ABFL16}  and $\mathsf{SampPDQP}$~\cite{MorShiYam_PDQP} which relax $\mathsf{BQP}$ to allow multiple, non-collapsing measurements during computation. We discuss these connections later in this section.\\

\noindent{\bf A Collision-Finding Oracle.}
For our first separation, we  introduce a random oracle $\mathcal{O}$ and show that one-wayness of $\mathcal{O}$ is preserved even given (quantum) queries to a quantum collision-finding oracle $\mathsf{Col}^{\cO}$.
For any oracle $\cO$, $\mathsf{Col}^{\cO}$ is a classical oracle
that, given a classical description of a quantum circuit $C^{\cO}$ with $\cO$-gates
such that $C^{\cO}\ket{0...0} = \sum_s \alpha_s \ket{s}\ket{\psi_s}$, generates the state
$\sum_s \alpha_s \ket{s}\ket{\psi_s}^{\otimes2}$, measures all qubits in the computational basis, and outputs the measurement result. (For the definition of $\mathsf{Col}$, see \cref{def:Col}.)  
$\mathsf{Col}$ can break 
distributional collision-resistant hashing (dCRH)~\cite{STOC:DubIsh06,EC:BHKY19} 
(and even its weaker quantum variant, distributional collision-resistant puzzles (dCRPuzzs)~\cite{MorShiYam_PDQP}).\footnote{For the definitions
of dCRH and dCRPuzzs, see \cref{def:dCRH} and \cref{def:dCRPuzzs}, respectively.}
This also implies that $\mathsf{Col}$ can decide $\mathsf{SZK}$~\cite{ITCS:ABFL16}, 
and thus break homomorphic encryption, as well as non-interactive computational private
information retrieval~\cite{C:BogLee13,TCC:LiuVai16}.

We show the following.
\begin{theorem}[Informal]
\label{thm:informalCol}
There exists a classical oracle $\mathcal{O}$ relative to which (sub-exponentially secure) indistinguishability obfuscation and (sub-exponentially secure) one-way permutations exist, even against adversaries equipped with quantum queries to $\mathsf{Col}^\mathcal{O}$.
\end{theorem}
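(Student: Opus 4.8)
The plan is to exhibit $\cO$ explicitly and follow the standard template for oracle separations in three steps: fix a random oracle $\cO$; describe the candidate iO and one-way permutation built from $\cO$; and prove that a sub-exponential-query quantum adversary with access to both $\cO$ and $\mathsf{Col}^{\cO}$ breaks neither of them except with sub-exponentially small advantage, after which a standard averaging argument fixes a single good $\cO$. All the technical content lies in the last step, i.e., in showing that the non-collapsing power of $\mathsf{Col}^{\cO}$ is useless for inverting a random permutation or for de-obfuscating.

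\textbf{The oracle and the candidates.} I would take $\cO$ to be the union of (a) a uniformly random permutation family $f=\{f_n\}_n$, the candidate one-way permutation, and (b) an obfuscation pair in the style of Asharov--Segev~\cite{FOCS:AshSeg15}: a uniformly random \emph{injective, length-expanding} map $\cO_{\mathrm{obf}}$ taking $(C,r)$, where $C$ is a circuit with $f$-gates and $r$ a long random tag, to a handle $h$, together with an evaluation oracle $\cO_{\mathrm{eval}}$ that on $(h,x)$ outputs $C^{f}(x)$ when $h$ is a valid handle (its preimage $(C,r)$ is then unique) and $\bot$ otherwise. The candidate obfuscator maps $C$ to $\cO_{\mathrm{obf}}(C,r)$ for fresh $r$ and evaluates via $\cO_{\mathrm{eval}}$; perfect correctness and functionality preservation are immediate, and indistinguishability of $\cO_{\mathrm{obf}}(C_0,r_0)$ and $\cO_{\mathrm{obf}}(C_1,r_1)$ for functionally equivalent, equal-size $C_0,C_1$ holds because a handle is a uniformly random string that carries no information about the circuit and $\cO_{\mathrm{eval}}$ returns identical values on the two handles at every input --- the only way to learn which circuit is to query $\cO_{\mathrm{obf}}$ on the right preimage, which requires guessing the random tag and happens with sub-exponentially small probability. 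Finally $\mathsf{Col}^{\cO}$ is layered on top of all of $\cO$ as in \cref{def:Col}.

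\textbf{Surviving $\mathsf{Col}^{\cO}$.} The crux is that a $\mathsf{Col}^{\cO}$ query on a circuit $C$ of size $S$ is no more useful for finding a preimage under $f$ (or a handle preimage) than $\mathrm{poly}(S)$ ordinary queries to $\cO$: although $\mathsf{Col}$ performs a non-physical clone-and-measure, it acts on the state $C^{\cO}\ket{0\cdots0}=\sum_s\alpha_s\ket{s}\ket{\psi_s}$, which $C$ produced using at most $S$ oracle gates, and the classical triple $\mathsf{Col}$ returns lives entirely in $C$'s output registers, so it cannot contain information about $f$ that $C$ itself did not already extract. I would formalize this by running the whole experiment against a lazily sampled $\cO$ (a compressed/recording oracle for the permutation part, lazy sampling for the obfuscation part), maintaining the set $\mathcal T$ of revealed points of $f$, and arguing that every direct query, every $\cO_{\mathrm{eval}}$ evaluation (which internally runs an obfuscated circuit with $\mathrm{poly}$ many $f$-gates), and every $\mathsf{Col}$ query enlarges $\mathcal T$ by only $\mathrm{poly}(S)$; hence after a sub-exponential-time interaction $|\mathcal T|$ is still sub-exponential. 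Since the challenge $y^{*}$ is uniform and $f$ is a random permutation, $f^{-1}(y^{*})$ is uniform and independent of the adversary's view until it enters $\mathcal T$, so by a standard lazy-sampling argument the inversion probability is at most $|\mathcal T|/(2^{n}-|\mathcal T|)$, which is sub-exponentially small. (With no $\mathsf{Col}$ queries this recovers the familiar fact that a random permutation is post-quantumly one-way; one also folds in here that the obfuscation sub-oracles do not help, since every $\cO_{\mathrm{eval}}$ answer is computed from a circuit the adversary itself obfuscated.) The iO side is analogous: a $\mathsf{Col}$ query, like an ordinary query, can recover the preimage of the challenge handle only by guessing its random tag, so the distinguishing advantage remains sub-exponentially small.

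\textbf{Main obstacle.} The delicate point is making rigorous that a single $\mathsf{Col}^{\cO}$ query reveals only $\mathrm{poly}$ points of $\cO$. Unlike an ordinary query, $\mathsf{Col}$'s clone-and-measure is not an $\cO$-oracle gate --- the isometry $\ket{s}\ket{\psi_s}\mapsto\ket{s}\ket{\psi_s}^{\otimes 2}$ depends on $\cO$ --- so one must verify that it commutes appropriately with the compressed-oracle bookkeeping and, crucially, that cloning $C$'s output register cannot amplify knowledge of $f$ beyond what $C$'s own $\mathrm{poly}$ queries provide (in particular that the two ``garbage'' outcomes carry no inverse information $C$ lacked). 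A secondary, purely bookkeeping, nuisance is the nesting of structures --- $\mathsf{Col}$ queries on circuits that call $\cO_{\mathrm{eval}}$, which runs obfuscated circuits that themselves query $f$ --- which inflates the effective query count only polynomially but must be tracked uniformly through the hybrid.
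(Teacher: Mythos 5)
Your oracle choice and high-level template match the paper's (random permutation $f$, random injective expanding $\obf$, an $\eval$ oracle that inverts $\obf$ and evaluates, with $\mathsf{Col}$ layered on top), but the core of the proof is missing, and the sketch you put in its place would not go through as stated. The central claim -- that the output of a $\mathsf{Col}^{\cO}$ query ``cannot contain information about $f$ that $C$ itself did not already extract,'' so each query enlarges a revealed set $\mathcal{T}$ by only $\mathrm{poly}(S)$ -- is exactly the point you flag as the ``main obstacle,'' and it is not merely unproved but wrong as an information statement: $\mathsf{Col}$ outputs correlated collision samples that a single run of $C$ cannot produce (this is precisely why it breaks dCRH and SZK-type problems). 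The statement that is actually true, and that the paper proves, is an indistinguishability (one-way-to-hiding) statement: if the $\mathsf{Col}$ responses under $\cO$ and under a punctured $\cO'$ differ noticeably on a queried circuit $C$, then $C$ itself must place noticeable query mass on points where $\cO$ and $\cO'$ differ (\cref{clm:abcd} together with \cref{thm:ow2h-comp} and \cref{lem:bbbv}), with a fourth-root-type quantitative loss ($c=1/4$ in \cref{thm:CCol-punc}), not a ``$\mathrm{poly}$ points revealed per query'' bound. Moreover, since the adversary makes \emph{superposition} queries to $\mathsf{Col}$, one cannot even phrase a recording argument without first purifying $\mathsf{Col}$ itself as a compressed oracle for the product distribution $\DCol^{\cO}$ and working with its compression unitary (\cref{thm:comp-oracle}, \cref{def:comp_U}); your lazily-sampled-$\cO$ bookkeeping has no mechanism for this, which is why the obstacle you name remains unresolved in your proposal.

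The downstream arguments also cannot be carried out by the classical-style lazy-sampling reasoning you invoke. Against quantum queries, a random permutation does not admit straightforward lazy sampling, and ``$f^{-1}(y^*)$ is independent of the view until it enters $\mathcal{T}$'' is not a valid step for superposition queries; the paper instead runs a hybrid over punctured oracles $S_1,\dots,S_4$ with a two-point symmetry argument between $x$ and $x'$, using OW2H-compatibility to move between adjacent hybrids (\cref{thm:owp}). Likewise, for iO your one-line argument (``the only way to learn which circuit is to guess the random tag'') hides the real difficulty that $\eval$ internally uses $\obf^{-1}$; the paper handles this without adaptive reprogramming by puncturing $\obf$ on all inputs $(*,r^*)$ while leaving $\eval$ intact and arguing exact symmetry of the two obfuscations in that hybrid (\cref{thm:iO_security}, Claim~\ref{clm:obf-1}), again invoking OW2H-compatibility to switch hybrids in the presence of $\mathsf{Col}$. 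So the gap is concrete: you need (and do not supply) the purification of $\mathsf{Col}$ and an OW2H-type theorem for its compression unitary, which is the paper's main technical contribution for this theorem.
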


Furthermore, we show that relative to these oracles, one can actually build iO for oracle-aided circuits, thereby capturing existing techniques that combine iO with {\em non-black-box} use of one-way functions/permutations to obtain public key encryption~\cite{FOCS:GGHRSW13,STOC:SahWat14}, oblivious transfer~\cite{FOCS:GGHRSW13,STOC:SahWat14}, quantum money~\cite{JC:Zhandry21}
and a variety of advanced primitives like deniable encryption~\cite{STOC:SahWat14}, non-interactive zero-knowledge proofs~\cite{STOC:SahWat14} and injective trapdoor functions~\cite{STOC:SahWat14}.
As such, we obtain the following corollary.

\begin{corollary}[Informal]
There does not exist a fully black-box construction of any $Y \in \{$collision-resistant hash functions, distributional collision-resistant puzzles$\}$ from 
$X \in \{$(indistinguishability obfuscation and one-way permutations), public-key encryption, deniable encryption, oblivious transfer, non-interactive ZK, trapdoor permutations, quantum money$\}$.
\end{corollary}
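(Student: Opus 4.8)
The corollary reduces, via the two-oracle (Hsiao--Reyzin) paradigm, to \cref{thm:informalCol}: it suffices to show (a) that each listed $X$ is realized relative to $\mathcal{O}$ and stays secure against $\mathsf{Col}^{\mathcal{O}}$, and (b) that $\mathsf{Col}^{\mathcal{O}}$ breaks every $\mathcal{O}$-aided implementation of each listed $Y$; then a fully black-box construction of $Y$ from $X$, composed with its security reduction fed $\mathsf{Col}^{\mathcal{O}}$ as the $Y$-adversary, would break $X$, contradicting (a). Part (b) is immediate from \cref{def:dCRPuzzs}: a (quantum, $\mathcal{O}$-aided) hash is attacked by submitting to $\mathsf{Col}$ a circuit that coherently evaluates the hash with its output value in the ``$s$''-register, so that the two non-collapsing measurements return a random image together with two near-uniform preimages of it. Part (a) holds because $i\mathcal{O}$ and one-way permutations come straight from $\mathcal{O}$ (below), while public-key encryption, deniable encryption, oblivious transfer, non-interactive ZK, trapdoor permutations and quantum money all follow from $i\mathcal{O}$ for oracle-aided circuits plus a one-way function through known, relativizing, black-box-secure reductions that use the one-way function only non-black-box~\cite{FOCS:GGHRSW13,STOC:SahWat14,JC:Zhandry21}. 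So the substance is \cref{thm:informalCol}, which I would prove as follows.

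Let $\mathcal{O}$ have three independent parts: (i) uniformly random permutations $\pi=\{\pi_n\colon\{0,1\}^n\to\{0,1\}^n\}_n$, the one-way permutation; (ii) a uniformly random length-increasing injection $\mathcal{O}_{\mathrm{obf}}$ mapping (oracle-aided circuit $C$, randomness $r\in\{0,1\}^{\secp}$) to a handle $\hat C$; and (iii) an evaluation oracle $\mathsf{Eval}$ that on $(\hat C,x)$ recovers the unique $\mathcal{O}_{\mathrm{obf}}$-preimage $(C,r)$ of $\hat C$ (or returns $\bot$) and outputs $C^{\mathcal{O}}(x)$. The obfuscator is $i\mathcal{O}(C)\mapsto\mathcal{O}_{\mathrm{obf}}(C,r)$ for fresh $r$, evaluated via $\mathsf{Eval}$; correctness is immediate. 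To support oracle-aided circuits without $\mathsf{Eval}$ depending circularly on itself, I would stratify $\mathcal{O}_{\mathrm{obf}}$ and $\mathsf{Eval}$ into polynomially many levels with level-$j$ circuits calling $\mathsf{Eval}$ only at levels ${<}\,j$ --- a routine but bookkeeping-heavy variant of the standard ``$i\mathcal{O}$ relative to an oracle'' construction.

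The technical core is one query lemma: inverting a uniformly random function on its range stays sub-exponentially hard for a quantum algorithm with (superposition) access to $\mathcal{O}$ and $\mathsf{Col}^{\mathcal{O}}$. Precisely, any algorithm making $q$ oracle calls --- each $\mathsf{Col}^{\mathcal{O}}$ call on a circuit of size ${\le}\,q$ --- inverts a random $\pi_n$, or finds an $\mathcal{O}_{\mathrm{obf}}$-preimage of a fresh handle with $\secp$-bit preimage-randomness, with probability $\mathrm{poly}(q)\cdot 2^{-\Omega(n)}$ (resp. $\mathrm{poly}(q)\cdot 2^{-\Omega(\secp)}$), hence $2^{-n^{\Omega(1)}}$ (resp. $2^{-\secp^{\Omega(1)}}$) for all sub-exponential $q$. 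Stage one: $(\mathcal{O}_{\mathrm{obf}},\mathsf{Eval})$ can be simulated away for these tasks --- $\mathcal{O}_{\mathrm{obf}}$ is a $\pi$-independent random injection (sampled lazily), and each $\mathsf{Eval}(\hat C,x)$ call becomes $|C|$ direct $\pi$-queries plus local computation when the caller produced $\hat C$ itself, and $\bot$ otherwise (the image of $\mathcal{O}_{\mathrm{obf}}$ is sparse), with the same inlining applied to $\mathsf{Eval}$-gates inside circuits submitted to $\mathsf{Col}^{\mathcal{O}}$ --- at a polynomial cost in queries. Stage two: for an algorithm querying only $(\pi,\mathsf{Col}^{\pi})$, invoke a multi-query strengthening of the $\mathsf{PDQP}$/$\mathsf{SampPDQP}$ search lower bound of~\cite{ITCS:ABFL16,MorShiYam_PDQP} --- within one $\mathsf{Col}^{\pi}$ call on a $t$-query circuit a BBBV-style hybrid bounds the probability weight on any fixed preimage by $O(t^2/2^n)$ regardless of how the non-collapsing copies split into the ``$s$'' and ``$\psi_s$'' registers, and a hybrid across the $q$ calls (noting the classical outcomes returned between calls carry negligible information about the hidden preimage) gives overall success $\mathrm{poly}(q)\cdot 2^{-\Omega(n)}$.

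Granting the lemma, \cref{thm:informalCol} follows: one-wayness of $\pi$ against $\mathsf{Col}^{\mathcal{O}}$-aided adversaries is exactly its inversion case; and for $i\mathcal{O}$-security, given functionally equivalent equal-size $C_0\equiv C_1$ and challenge $\hat C=\mathcal{O}_{\mathrm{obf}}(C_b,r)$, the $b{=}0$ and $b{=}1$ oracle distributions coincide except at the $\mathcal{O}_{\mathrm{obf}}$-inputs $(C_0,r),(C_1,r)$ --- and even $\mathsf{Eval}$ on the two affected handles agrees in both worlds since $C_0\equiv C_1$ --- so any distinguisher, including through its $\mathsf{Col}^{\mathcal{O}}$ calls, must effectively find $r$ with $\mathcal{O}_{\mathrm{obf}}(C_b,r)=\hat C$, which by the preimage-finding case succeeds with probability $\le 2^{-\secp^{\Omega(1)}}$; taking $\secp$ polynomial in the security parameter gives sub-exponentially secure $i\mathcal{O}$, and the leveled oracle gives $i\mathcal{O}$ for oracle-aided circuits. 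I expect the main difficulty to be the multi-query search lower bound in stage two: showing that interleaving many adaptive non-collapsing-measurement queries with ordinary queries yields at most a mild polynomial speedup for unstructured inversion is the crux, and is where the paper's $\mathsf{PDQP}$/$\mathsf{SampPDQP}$ machinery does the real work.
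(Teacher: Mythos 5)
Your high-level architecture matches the paper: the same oracle $S=(f,\obf,\eval^f)$ (\cref{def:oracle_set}), the same collision-finder $\mathsf{Col}^S$ (\cref{def:Col}) that trivially breaks any $S$-aided dCRPuzz/CRH by one query on the purified sampler (\cref{thm:dCRPuzz_Col}), the same composition of the black-box reduction with this breaker, and even the same non-adaptive idea for iO security of puncturing $\obf$ at all inputs with the challenge randomness while leaving $\eval$ (which answers identically for functionally equivalent circuits) untouched. But there is a genuine gap at exactly the point you yourself flag as the crux: you assert, rather than prove, that one-wayness and iO security survive queries to $\mathsf{Col}^{\cO}$, and the route you propose would not go through. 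First, your ``stage two'' appeals to a ``multi-query strengthening of the $\mathsf{PDQP}$/$\mathsf{SampPDQP}$ search lower bound'' of prior work; no such result exists to invoke, and your analysis implicitly treats the $\mathsf{Col}$ calls as classical (``the classical outcomes returned between calls carry negligible information''). To rule out fully black-box \emph{quantum} constructions and reductions the adversary must be allowed superposition queries to $\mathsf{Col}^{\cO}$, and there the standard reasoning collapses: a single quantum query touches every circuit in the domain, and $\mathsf{Col}$ is an inefficient oracle whose every answer depends globally on $\cO$, so one cannot argue ``negligible information per call'' by BBBV alone. Supplying this missing step is the paper's main technical contribution: it purifies $\mathsf{Col}$ via the compressed-oracle technique, introduces the compression unitary and the OW2H-compatibility framework (\cref{sec:io}), proves a hybrid-over-queries theorem for such swap-type unitaries (\cref{thm:ow2h-comp}), and then shows via \cref{clm:abcd} (the bound $\|\ket{\psi^{O}_C}-\ket{\psi^{O'}_C}\|\le 3\|C^{O}\ket{0}-C^{O'}\ket{0}\|$, which controls the effect of the doubling $\ket{s}\ket{\psi_s}\ket{\psi_s}$) plus BBBV that the $\mathsf{Col}$ compression unitary is OW2H-compatible (\cref{thm:CCol-punc}); only then do \cref{thm:owp,thm:iO_security} give security against $\mathsf{Col}^S$.

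Second, your ``stage one'' (simulate away $(\obf,\eval)$, answering $\eval$ with $\bot$ unless ``the caller produced $\hat C$ itself'') is not the paper's route and is itself problematic under superposition access: there is no well-defined event of the caller having produced a handle when its queries to the random injection are in superposition, and inside circuits submitted to $\mathsf{Col}$ this bookkeeping is unavailable to you. The paper avoids this entirely by keeping $S$ intact and doing non-adaptive puncturing hybrids ($S_1,\dots,S_4$ in \cref{thm:owp}, and the $\obf_{(*,r)}$ hybrids in \cref{thm:iO_security}), with each hop justified by OW2H-compatibility via \cref{thm:ow2h-punc-full}. Relatedly, your leveled $\eval$ stratification is unnecessary: the paper sidesteps circularity by only obfuscating $f$-aided circuits, which is all the separation needs. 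So while your skeleton is right, the proposal defers the one step that constitutes the actual proof.
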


Note that Shmueli and Zhandry~\cite{SZ25} recently constructed one-shot signatures from (subexponentially secure) iO and learning with errors. 
One-shot signatures allow a signer holding a quantum signing key to sign a single message, only once, with respect to a classical verification key. Since one-shot signatures (black-box) imply dCRPuzzs~\cite{MorShiYam_PDQP}, the above corollary shows that their result cannot be improved to rely only on iO and OWPs (at least when using iO in a black-box way).\\

\noindent{\bf A Cloning Oracle.}
Next, we investigate what it means to find collisions on {\em quantum states} themselves. Our motivation is to better understand ``fully quantum'' collision resistant primitives such as quantum lightning~\cite{JC:Zhandry21}, which do not appear to imply classical or quantum computable collision resistant hashing, and do not seem to be broken by a 
collision-finding oracle.
Recall that quantum lightning allows the public generation of quantum states, called ``lightning bolts'', each associated with an efficiently verifiable serial number, and where it is infeasible for anyone (even the generator) to create two valid quantum states (lightning bolts) with the same serial number.


We introduce a cloning unitary oracle $\mathsf{QCol}$ that given the description of a quantum circuit, measures part of its output and clones the remaining part. More formally, $\mathsf{QCol}^{\cO}$ is a unitary that given a circuit $C$, swaps $\ket{\bot}$ (for some special symbol $\bot$) and $\sum_s \alpha_s \ket{s}\ket{\psi_s}^{\otimes2}$, where $C^\cO\ket{0...0} = \sum_s \alpha_s \ket{s}\ket{\psi_s}$.
(For the definition of $\mathsf{QCol}$, see \cref{def:QCol}.)
In more detail, letting $\cO$ be the same set of oracles as before that enable one-wayness (and indistinguishability obfuscation), we also provide access to $\mathsf{QCol}^{\cO}$.
The presence of $\mathsf{QCol}^{\cO}$ clearly breaks any quantum lightning scheme. 
However, we show that (sub-exponentially secure) OWPs and (sub-exponentially secure) iO continue to exist even against adversaries that make quantum queries--indeed, even inverse, conjugate, or transpose queries--to $\mathsf{QCol}^{\cO}$. 

\begin{theorem}[Informal]
There exists a classical oracle $\mathcal{O}$ relative to which (sub-exponentially secure) indistinguishability obfuscation and (sub-exponentially secure) one-way permutations exist, even against adversaries equipped with quantum queries to the unitary $\mathsf{QCol}^\mathcal{O}$ (and its conjugate, transpose and inverse).
\end{theorem}

While the first two separations used classical oracles, the above theorem relies on a unitary oracle, thereby capturing a smaller class of ``black-box'' techniques. However, by allowing access to the unitary along with its conjugate, transpose and inverse, we obtain the strongest form of unitary separation~\cite{Zhamodel}. We are unaware of meaningful techniques that evade such a separation but do not evade a black-box separation with a classical oracle.
This theorem also gives the following corollary:
\begin{corollary}
    There does not exist a fully black-box construction of quantum lightning from (subexponentially secure) $X$, where $X \in \{$(indistinguishability obfuscation and one-way functions), public-key encryption, deniable encryption, oblivious transfer, non-interactive ZK, trapdoor permutations, quantum money$\}$.
\end{corollary}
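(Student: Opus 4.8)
The plan is to derive the corollary from the preceding theorem (together with the accompanying fact that iO for $\cO$-aided circuits exists relative to $\cO$ and survives $\mathsf{QCol}^\cO$ queries) via the standard two-oracle methodology for ruling out fully black-box constructions. Recall that a fully black-box construction of quantum lightning from a primitive $X$ is a pair $(\mathsf{QL}^{(\cdot)},R^{(\cdot,\cdot)})$ where $\mathsf{QL}^{P}$ is a correct quantum lightning scheme whenever its oracle $P$ implements $X$, and $R^{P,A}$ breaks $P$ as an instance of $X$ whenever $A$ breaks $\mathsf{QL}^{P}$. To rule this out it suffices to exhibit a quantum oracle world in which some implementation $\Pi_X$ of $X$ is secure, yet $\mathsf{QL}^{\Pi_X}$ is broken by an adversary all of whose queries can be answered inside that world.

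First I would fix the world $(\cO,\mathsf{QCol}^\cO)$ from the theorem and argue that every $X$ in the list has a secure implementation $\Pi_X$ relative to it. For $X\in\{$iO, one-way permutations$\}$ this is exactly the statement of the theorem. For the remaining choices of $X$, I would invoke the fact that iO for $\cO$-aided circuits exists relative to $\cO$ and remains secure against adversaries with $\mathsf{QCol}^\cO$ access, and combine it with the known transformations that build public-key encryption, oblivious transfer, deniable encryption, non-interactive zero-knowledge and injective trapdoor functions from iO and one-way functions~\cite{FOCS:GGHRSW13,STOC:SahWat14}, and public-key quantum money from iO and one-way functions~\cite{JC:Zhandry21}. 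These transformations use the one-way function \emph{non-black-box} — which is precisely why iO for oracle-aided circuits is needed — but their \emph{security reductions} are black-box and hence relativize, so a $\mathsf{QCol}^\cO$-aided break of $\Pi_X$ would yield a $\mathsf{QCol}^\cO$-aided break of iO or of the underlying one-way permutation, contradicting the theorem. Hence $\Pi_X$ is (subexponentially) secure relative to $(\cO,\mathsf{QCol}^\cO)$ for every $X$ in the list.

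Next I would build the quantum-lightning adversary $A$, which uses a single forward query to $\mathsf{QCol}^\cO$. Write $\mathsf{QL}^{\Pi_X}.\mathsf{Gen}$ as an $\cO$-aided circuit $C$ (possible since $\Pi_X$ is itself $\cO$-aided), with the convention that measuring the first output register of $C^\cO\ket{0\cdots0}=\sum_s\alpha_s\ket{s}\ket{\psi_s}$ yields the serial number and $\ket{\psi_s}$ is the corresponding bolt. The adversary initializes $\ket{\bot}$ in the appropriate register, queries $\mathsf{QCol}^\cO$ on the description of $C$ to obtain $\sum_s\alpha_s\ket{s}\ket{\psi_s}^{\otimes2}$, measures the serial register to get $s^\ast$ together with two registers in state $\ket{\psi_{s^\ast}}^{\otimes2}$, and outputs $(s^\ast,\rho_1,\rho_2)$. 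By correctness of the quantum lightning scheme, each of $\rho_1,\rho_2$ passes verification under serial $s^\ast$ with overwhelming probability, so $A$ outputs two valid bolts with the same serial number and breaks $\mathsf{QL}^{\Pi_X}$ with noticeable probability. (Here one must be slightly careful about internal measurements inside $\mathsf{Gen}$ and mixed-state bolts; purifying $\mathsf{Gen}$ and appealing to correctness handles this.) Note $A$ makes one $\mathsf{QCol}^\cO$ query plus polynomially many $\cO$-queries, so $A$ is a $(\cO,\mathsf{QCol}^\cO)$-aided algorithm.

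Finally, the reduction $R^{\Pi_X,A}$ breaks $\Pi_X$ as an instance of $X$; but $R$ makes only black-box calls to $\Pi_X$ (an $\cO$-aided object) and to $A$ (a $(\cO,\mathsf{QCol}^\cO)$-aided algorithm), so unrolling, its whole execution is a $(\cO,\mathsf{QCol}^\cO)$-aided quantum algorithm of comparable size and success probability, contradicting the security of $\Pi_X$ established above. Hence no fully black-box construction of quantum lightning from $X$ exists for any $X$ in the list. The step I expect to be the real obstacle is \emph{not} this final packaging but the ingredient it leans on: establishing that iO for oracle-aided circuits exists relative to $\cO$ and withstands $\mathsf{QCol}^\cO$ queries (including inverse, transpose and conjugate), so that the non-black-box transformations for public-key encryption, oblivious transfer, deniable encryption, non-interactive ZK, trapdoor permutations and quantum money can be absorbed into the oracle world; a secondary subtlety is checking that the security reductions of those transformations genuinely relativize in the presence of the unitary oracle $\mathsf{QCol}^\cO$.
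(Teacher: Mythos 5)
Your proposal is correct and follows essentially the same route as the paper: the same single-query $\mathsf{QCol}^\cO$ adversary that clones the purified $\mathsf{Gen}$ output and wins by correctness, combined with the theorem that OWPs and iO for oracle-aided circuits survive $\mathsf{QCol}^\cO$ (and its inverse/conjugate/transpose) queries, so the black-box reduction yields a contradiction. The only cosmetic difference is that the paper formally proves the separation only from iO and OWPs and then extends to the other primitives $X$ by composing fully black-box constructions of $X$ from iO and OWPs, whereas you unroll that composition by instantiating $X$ directly in the oracle world; the two arguments rely on the same facts.
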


\noindent{\bf Connections with PDQP.}
Finally, we consider a generalization of our collision-finding oracle $\mathsf{Col}$, which makes the connection with the complexity classes $\mathsf{PDQP}$ and $\mathsf{SampPDQP}$ more explicit.
These classes are defined using an oracle $\cQ$, which is called a non-collapsing measurement oracle~\cite{ITCS:ABFL16}, that has the hypothetical ability to sometimes make multiple non-collapsing measurements on a quantum state.
In more detail, $\cQ$ is equipped not only with unitary operations but also with two types of measurement processes:
(i) Projective (or standard) measurements, which collapse the quantum state in accordance with the conventional laws of quantum mechanics; and
(ii) Non-collapsing measurements, which yield a sample from the corresponding output distribution, while leaving the quantum state unaffected and able to be measured again with an independent outcome each time.
(For a definition of the non-collapsing measurement oracle $\cQ$, see \cref{def:noncollapsingmeasurementoracle}.)
Then $\mathsf{PDQP}$ and $\mathsf{SampPDQP}$ are classes of decision (resp., sampling) problems that can be solved with
a polynomial-time classical deterministic algorithm that can make a single query to $\cQ$.

Also, note that $\cQ$ can be used to implement $\mathsf{Col}$, and strictly generalizes $\mathsf{Col}$ by allowing repeated unitaries to be applied interspersed with multiple non-collapsing measurements. It is natural to ask whether this more powerful oracle that decides $\mathsf{PDQP}$ can break one-wayness.
We show the following.
\begin{theorem}[Informal]
There exists a classical oracle $\mathcal{O}$ relative to which (sub-exponentially secure) indistinguishability obfuscation and (sub-exponentially secure) one-way permutations exist, even against adversaries equipped with quantum queries to $\mathsf{\cQ}^\mathcal{O}$.
\end{theorem}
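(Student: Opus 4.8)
The plan is to reuse the oracle $\cO$ from \cref{thm:informalCol} and upgrade that argument from $\mathsf{Col}^{\cO}$ to the strictly more powerful $\cQ^{\cO}$. Recall that $\cO$ consists, for each $\lambda$, of a uniformly random permutation $\pi_\lambda$ on $\{0,1\}^{\lambda}$ (the candidate one-way permutation), a uniformly random length-tripling injection $\mathsf{O}$ sending a pair $(C,r)$ --- with $C$ a padded description of an $\cO$-aided circuit and $r$ a random tape --- to a handle $h$, and an evaluation oracle $\mathsf{E}(h,x)$ that returns $C^{\cO}(x)$ for the unique preimage $(C,r)=\mathsf{O}^{-1}(h)$, or $\bot$ if none exists. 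Since $(\mathsf{O},\mathsf{E})$ is a candidate $\iO$ for $\cO$-aided circuits, it will suffice to show that, against a quantum adversary given $\cO$ and (superposition) access to $\cQ^{\cO}$, (a) $\pi$ stays one-way and (b) $\mathsf{O}$ stays an indistinguishability obfuscator. As a preliminary step I would fix the internal measurement randomness of $\cQ$ with an additional uniform string folded into $\cO$, so that $\cQ^{\cO}$ is a deterministic classical oracle and superposition queries to it are well defined.

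\textbf{Reduction to search.} Next I would reduce both (a) and (b) to a single ``hardness of search against a random function'' statement. For (a) this is immediate. For (b), the only way the adversary can learn the obfuscated bit $b$ from the challenge $h=\mathsf{O}(C_b,r)$ with $C_0\equiv C_1$ is to learn (i) the functionality of $C_b$, which is identical in both worlds and is exactly what $\mathsf{E}(h,\cdot)$ leaks, or (ii) the preimage $(C_b,r)$ itself; recovering $(C_b,r)$, or even recognizing any handle it was not explicitly handed, is a search problem against the random injection $\mathsf{O}$, and, crucially, $\mathsf{O}$ being \emph{injective} defeats the collision-/preimage-finding power of $\cQ$ on \emph{targeted} points (a single $\cQ$-query can sample a uniformly random preimage of $\mathsf{O}$, but has only $2^{-\Omega(\lambda)}$ amplitude on the preimage of a prescribed $h$). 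Hence it suffices to prove: any quantum algorithm making $q=\poly(\lambda)$ direct $\cO$-queries and $k=\poly(\lambda)$ queries to $\cQ^{\cO}$, each on a submitted computation of size at most $s=\poly(\lambda)$, hits a uniformly random target (the challenge image $\pi(x^{*})$, or the preimage of a fresh random handle) with probability at most $\poly(q,k,s)\cdot 2^{-\Omega(\lambda)}$.

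\textbf{The core bounded-database argument.} This is where the $\mathsf{Col}$ proof must be generalized. I would purify $\cO$ as a compressed-oracle database (or an equivalent query-record) and track the set of $\cO$-points the adversary could have learned so far. A direct $\cO$-query grows this set by one point; the new ingredient is that a single $\cQ^{\cO}$-query --- which runs a submitted $\cO$-aided computation, namely a poly-length sequence of $\cO$-aided unitaries interspersed with collapsing and \emph{non-collapsing} measurements, on the purified oracle and returns the classical outcomes --- grows the set by only $O(s)$ points, since the submitted computation issues at most $s$ oracle queries and its returned outcomes are classical strings of length $\poly(s)$. Thus after the whole interaction the ``known set'' has size $q+O(ks)=\poly(\lambda)$, and I would then run the $\mathsf{PDQP}$/non-collapsing-measurement search lower bound of \cite{ITCS:ABFL16,MorShiYam_PDQP} as a progress/potential argument over the $q+k$ interaction steps (rather than by black-box composition), concluding that hitting a uniformly random target requires $\Omega(2^{\lambda/3})$ steps. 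As $2^{\lambda/3}$ is super-polynomial --- indeed it beats $2^{\lambda^{\varepsilon}}$ for every $\varepsilon<1/3$ --- this gives that $\pi$ is one-way and $\mathsf{O}$ is a secure $\iO$, both with sub-exponential security, relative to $(\cO,\cQ^{\cO})$; the same reasoning yields $\iO$ for $\cO$-aided circuits, hence the stated corollaries.

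\textbf{Main obstacle.} The delicate point is the database analysis of $\cQ^{\cO}$: the non-collapsing measurements $\cQ$ performs on intermediate states are non-physical, so the state returned to the adversary is \emph{not} the image of any CPTP map applied to the purified-oracle register, and the usual compressed-oracle bookkeeping --- which relies on unitarity/CPTP-ness on the joint state --- does not literally cover that returned state. The way out is that $\cQ^{\cO}$ is still a well-defined classical-outcome procedure that factors through running the submitted computation on the purified oracle a bounded number of times, so the ``known set'' still grows by only $O(s)$ per query even though the post-measurement state is irregular, and the bad events we must exclude (inverting $\pi$; recovering an $\mathsf{O}$-preimage) depend only on this bounded set. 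A secondary subtlety, inherited from the $\mathsf{Col}$ case, is that submitted computations may themselves contain $\mathsf{O}$- and $\mathsf{E}$-gates (needed for $\iO$ of $\cO$-aided circuits), so the database must also carry obfuscation entries and keep them consistent under $\mathsf{E}$; this is handled exactly as there.
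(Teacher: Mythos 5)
You have the right oracle (the paper uses exactly $S=(f,\obf,\eval^f)$ from \cref{def:oracle_set}, with $\cQ^S$ sampled from the product distribution of \cref{def:q-with-oracle}, which plays the role of your ``fold the measurement randomness into the oracle''), but the core of your argument has a genuine gap precisely at the step you flag as the ``main obstacle.'' The claim that a single superposition query to $\cQ^{\cO}$ ``grows the known set by only $O(s)$ points'' is not justified and is the whole difficulty: the adversary queries $\cQ^{\cO}$ on a superposition over exponentially many circuits $C$, each of which may query $\cO$, $\obf$, and $\eval$ anywhere, and the oracle's answer is the output of an inefficient sampling procedure (including non-collapsing measurements) that is correlated with the entire truth table. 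There is no CPTP map on the purified-oracle register whose query record you can track, so a database/progress argument in the style of the compressed-oracle bookkeeping does not apply, and saying the procedure ``factors through running the submitted computation a bounded number of times'' does not repair this, because the returned classical outcomes can in principle depend on the oracle globally. The paper resolves this very differently: it never runs $\cQ$ on a purified oracle. For each fixed classical $O$ it hands the adversary the compression unitary $W^O$ for $\DQ^O$ (which simulates $\cQ^O$ by \cref{thm:comp-oracle}), and proves an ``OW2H-compatibility'' statement (\cref{thm:PDQP-punc}): if the adversary's final states under $W^O$ and $W^{O'}$ differ, then measuring a random query yields a circuit $C$ with $\SD(\DQ^O_C,\DQ^{O'}_C)$ noticeable (\cref{thm:ow2h-dist}), and such a $C$ must place noticeable query mass on points where $O,O'$ differ --- this last step needs the Markov-chain decomposition of the non-collapsing outcome sequence (\cref{lem:markov-tv}), a chain of trace-distance estimates, and only then BBBV (\cref{lem:bbbv}) applied to $C$, not to the adversary. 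Your appeal to the existing $\mathsf{PDQP}$ search lower bounds as a black box also does not substitute for this, since those bounds are for classical access to $\cQ$, not for quantum queries in the presence of the structured oracles $\obf,\eval$.

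A second, smaller gap is the iO half: ``the only way to learn $b$ is to learn the functionality or the preimage $(C_b,r)$'' is exactly the statement that needs proof, and it is delicate because $\eval$ continues to invert the \emph{unpunctured} $\obf$ and because the circuits fed to $\cQ$ may themselves contain $\obf$- and $\eval$-gates. The paper proves it without adaptive reprogramming by a sequence of hybrids that puncture $\obf$ on the whole slice $(*,r^*)$, a symmetry claim showing the punctured game is independent of $b$, and repeated invocations of \cref{thm:ow2h-punc-full} (the generic consequence of OW2H-compatibility) to move between hybrids; the OWP proof (\cref{thm:owp}) has the same structure with $f$ punctured at $\{x,x'\}$. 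So while your high-level reduction-to-search intuition matches the paper's outcome, the two load-bearing steps --- the handling of superposition queries to the non-collapsing oracle and the puncturing-based iO argument --- are missing or asserted, and the ``bounded database'' route you propose would fail as stated.
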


Over the past decade, iO has served as a useful tool to build hard problems in various complexity classes, such as $\mathsf{PPAD}$~\cite{FOCS:BitPanRos15}.
This theorem shows that black-box use of OWPs and iO (even for circuits that can implement the permutation) cannot be used to obtain hard problems in $\mathsf{PDQP}$.
Furthermore since queries to $\mathsf{\cQ}^{\mathcal{O}}$ can break $\mathsf{SZK}$, homomorphic encryption and non-interactive computational private
information retrieval, we have the following corollary.

\begin{corollary}[Informal]
There does not exist a fully black-box construction of hard problems in $\mathsf{SZK}$, homomorphic encryption and non-interactive computational private
information retrieval from (subexponentially secure) $X$, where $X \in \{$(indistinguishability obfuscation and one-way functions), public-key encryption, deniable encryption, oblivious transfer, non-interactive ZK, trapdoor permutations, quantum money$\}$.
\end{corollary}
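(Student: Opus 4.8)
The corollary will follow from the preceding theorem together with two auxiliary facts, so I would organize the argument in three layers: (1) the theorem itself---the existence, relative to a classical oracle $\cO$, of a subexponentially secure one-way permutation and subexponentially secure indistinguishability obfuscation \emph{for $\cO$-aided circuits}, both of which remain secure against adversaries with quantum access to $\cQ^\cO$; (2) the black-box implications of $i\cO$ (for oracle-aided circuits) plus one-way functions; and (3) the fact that $\cQ^\cO$ breaks the target primitives. For (2), each $X$ in the list (public-key encryption, oblivious transfer, deniable encryption, non-interactive ZK, trapdoor permutations, quantum money) admits a fully black-box construction from $i\cO$ for oracle-aided circuits together with one-way functions~\cite{FOCS:GGHRSW13,STOC:SahWat14,JC:Zhandry21}, so relative to $\cO$ each such $X$ exists, and---since the reduction is black-box and, by the theorem, $\cQ^\cO$ does not help against $i\cO$ or the OWP---each $X$ remains secure against $\cQ^\cO$-query adversaries. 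For (3), a single classical query to $\cQ^\cO$ decides $\mathsf{SZK}$~\cite{ITCS:ABFL16} and hence breaks homomorphic encryption and non-interactive computational PIR via their $\mathsf{SZK}$-hardness~\cite{C:BogLee13,TCC:LiuVai16}. Combining: a fully black-box construction of a hard $\mathsf{SZK}$ problem (resp.\ HE, NICPIR) from $X$ would, instantiated relative to $\cO$, yield such a primitive secure against $\cQ^\cO$-adversaries, contradicting (3). So the real work is to prove the theorem.

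\textbf{The oracle and the easy directions.}
I would take $\cO=(\pi,\mathsf{Obf},\mathsf{Eval})$, where $\pi=\{\pi_n\}_n$ is a uniformly random family of permutations (the OWP), $\mathsf{Obf}$ maps a description of an $\cO$-aided circuit $C$ together with randomness $r$ to a fresh uniformly random handle $h$ (implemented as a random injection), and $\mathsf{Eval}(h,x)$ returns $C(x)$ for the circuit registered under $h$; the self-reference (circuits carrying $\mathsf{Obf}/\mathsf{Eval}$ gates) is handled in the standard leveled fashion so that the recording argument below is well-founded. Relative to $\cO$, $\mathsf{Obf}$ is literally an ideal obfuscator, so plain $i\cO$ security holds, and $\pi$ is one-way against plain polynomial-query quantum adversaries by the quantum lower bound for inverting a random permutation, with a $2^{-\Omega(n)}$ advantage bound, i.e.\ subexponential security.

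\textbf{Security against $\cQ^\cO$.}
The substance is showing these bounds survive when the adversary additionally makes polynomially many quantum queries to $\cQ^\cO$. Two structural observations drive this. First, a single $\cQ^\cO$-query on a circuit $C$ triggers at most $|C|$ queries to $\cO$ when $\cQ$ runs $C^\cO$, and the (possibly many) non-collapsing measurements it then performs make \emph{no} further $\cO$-queries---they only repeatedly read the diagonal of an already-prepared intermediate state. Hence the total number of $\cO$-queries across the whole execution stays polynomial, even though the number of classical bits extracted does not. Second, the only new power $\cQ^\cO$ grants is exactly this ability to read the full diagonal (with independent fresh samples) of intermediate states of $\cO$-aided computations. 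So it suffices to prove the following: a quantum state prepared with $q$ queries to $\pi$ (and the ideal-obfuscation oracle) has, on any register, its diagonal supported---up to weight $\negl(\lambda)$---on a set determined by the points queried so far; consequently reading that diagonal repeatedly cannot reveal (i) the preimage $\pi^{-1}(y)$ of a challenge $y$ never queried, nor (ii) the challenge bit in an ideal-obfuscation game, since the preimages of a fresh handle under the injection underlying $\mathsf{Obf}$ are uniform and unqueried. I would establish this via a compressed/recording-oracle simulation of $(\pi,\mathsf{Obf},\mathsf{Eval})$---maintaining a database of queried points and lifting it consistently to the circuits executed inside $\cQ$-calls---and showing that a ``non-collapsing read'' of a register commutes, up to $\negl(\lambda)$ trace distance, with the recording simulation, so the simulated view never contains an un-queried $x$ with $\pi(x)=y$ except with probability $q^2 2^{-\Omega(n)}$. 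An interpolation over the polynomially many $\cQ$-queries then gives the theorem with subexponential parameters.

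\textbf{Main obstacle.}
I expect the hard part to be exactly the commutation claim: making rigorous that, in the recording picture, an unbounded number of independent non-collapsing measurements of an intermediate register cannot stumble onto oracle-correlated information unavailable to a polynomial-query quantum algorithm. Unlike $\mathsf{Col}$, which measures once, $\cQ^\cO$ interleaves adaptively chosen unitaries with many non-collapsing measurements, so the state being read is itself a function of earlier read-outs; one must track the joint state of (adversary, recording database, transcript of non-collapsing outcomes) and argue inductively that the database stays ``ahead of'' the diagonal support. A secondary subtlety is the self-referential definition of $\mathsf{Obf}$ on $\cO$-aided circuits, which I would tame with a leveled construction (bounding oracle-gate depth and bootstrapping) so that the recording argument terminates and the query accounting feeding the random-permutation lower bound remains polynomial.
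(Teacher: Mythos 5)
Your outer three-layer decomposition (an oracle relative to which iO and OWPs survive $\cQ$-queries; known constructions of each $X$ from iO plus one-way functions; the fact that $\cQ$ decides $\mathsf{SZK}$ and hence breaks homomorphic encryption and non-interactive computational PIR) is exactly the paper's route, and your oracle $(\pi,\mathsf{Obf},\mathsf{Eval})$ essentially matches the paper's $S=(f,\obf,\eval^f)$ (the paper sidesteps your self-reference/leveling issue by only obfuscating $f$-aided circuits, which suffices for all the listed applications). The genuine gap is in the core theorem, which you only sketch and whose hardest step you explicitly leave open. First, the separation must hold against reductions that make \emph{quantum} (superposition) queries to $\cQ^{\cO}$, since a fully black-box reduction is given quantum access to the breaking oracle; your query accounting (``a single $\cQ^{\cO}$-query on a circuit $C$ triggers at most $|C|$ queries to $\cO$'') and your plan to record the base-oracle queries made inside $\cQ$-calls only make sense for classical queries to $\cQ$. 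Because $\cQ^{\cO}$ is an inefficient, randomized oracle depending on all of $\cO$, a single superposition query can spread amplitude over every circuit, and it is unclear your recording picture is even well defined. The paper resolves this by purifying $\cQ^{\cO}$ itself as an oracle drawn from the product distribution $\DQ^{\cO}$, simulating it with the corresponding compression unitary, and proving that this unitary is OW2H-compatible.

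Second, the ``commutation claim'' you name as the main obstacle is precisely the missing key lemma, and your formulation (the diagonal is supported on already-queried points) is not the right statement: the threat from non-collapsing measurements is not that one sample contains the hidden preimage, but that the repeated samples leak \emph{distributional} information distinguishing $\cO$ from a reprogrammed $\cO'$. The paper's quantitative lemma bounds $\SD(\DQ^{O}_C,\DQ^{O'}_C)$ by the query mass of $C$ on points where $O$ and $O'$ differ, using the Markov-chain lemma for the interleaved measurement outcomes, a hybrid over the adaptively chosen unitaries, the phase-oracle trick, and BBBV; this is then combined with the compression-unitary hybrid theorem and, for both the permutation and iO, with a symmetry argument over pairs of punctured oracles (one cannot simply puncture a random permutation, and even without $\cQ$ the ideal-obfuscation security is not ``literal'' because $\eval$ exposes $\obf^{-1}$-based functionality, contrary to your claim that plain iO security is immediate). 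None of this machinery, nor a workable substitute, appears in your plan, so as written the proposal does not establish the theorem and hence does not yield the corollary.
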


\section{Technical Overview} 

\noindent{\bf Background on Black-Box Separations.}
A fully black-box construction of a primitive $\cP$ from another primitive $\cP'$ consists of two algorithms: a construction $C$ and a reduction $R$, each of which are ``black-box''. 
Namely, for any oracle $P'$ that enables a correct implementation of the primitive $\cP'$, $C^{P'}$ instantiates a construction of $\cP$, i.e., $C^{P'}$ satisfies the correctness of $\cP$. 
Additionally, 
for any oracle $A$ that breaks $\cP$-security of $C^{P'}$, the (efficient) reduction $R^{A,P'}$ breaks the $\cP'$-security of $P'$. 

The standard framework (e.g.in~\cite{EC:Simon98,TCC:BitDeg19}) to rule out fully black-box constructions is to construct oracles $(S,\Gamma)$ such that the following hold.
\begin{itemize}
    \item \textbf{Property 1:} There exists a construction $C_1^{S}$ that satisfies the correctness of $\cP'$, and is $\cP'$-secure against efficient adversaries that may query $S$ and $\Gamma$.
    \item \textbf{Property 2:} For any construction $C_2^{S}$ that satisfies correctness of $\cP$, there exists an efficient adversary $A^{S,\Gamma}$ that breaks the $\cP$-security of $C_2^{S}$.
\end{itemize}
Suppose there exists a fully black-box construction $(C,R)$ of $\cP$ from $\cP'$. Since $C_1^{S}$ satisfies $\cP'$-correctness, $C_2^{S} := C^{(C_1^{S})}$ satisfies $\cP$-correctness. Therefore by Property 2, there exists an adversary $A^{S,\Gamma}$ that breaks the $\cP$-security of $C_2^{S}$. By the definition of black-box reductions, $R^{A^{S,\Gamma},C_1^{S}}$ must break the $\cP'$-security of $C_1^{S}$. However, this gives an efficient adversary that queries $S$ and $\Gamma$ and breaks the $\cP'$-security of $C_1^{S}$, contradicting the first property above.

This paper will follow a similar methodology. We will show that a variety of primitives cannot be constructed from iO and OWPs (where the iO supports oracle-aided circuits, to capture existing techniques that combine iO with non-black-box use of one-way functions/permutations). The oracle $S$ is formally defined in Section \ref{sec:io}, while the oracle $\Gamma$ will depend on the primitive being separated, and will be defined for each primitive in Sections \ref{sec:CR}, \ref{sec:light}, and \ref{sec:PDQP}.

Before we explain what the oracles $(S, \Gamma)$ will look like for our separations, it is useful to recap the classical separation between one-wayness and collision resistance~\cite{EC:Simon98}. In what follows, we present a simplified informal exposition of Simon's classical separation, which to our knowledge, has not appeared in prior work.
Then, we describe the issues that arise when quantum queries are introduced.
\\

\noindent{\bf The Classical Setting.}
The overall idea when classically separating one-way functions/permutations and collision resistant hash functions, is to introduce a ``collision-finding'' oracle $\mathsf{Col}$ that on input a circuit $C$ will find and output {\em distributional} collisions in the circuit. 
Namely, the oracle will evaluate $C(\cdot)$ on a uniformly chosen input $x$ (that is fixed as a function of $C$), and then output random $x'$ such that $C(x) = C(x')$.

Let us now describe the separating oracles in the classical setting.
Since we eventually want to argue that one-way functions/permutations exist, we set $S= \cO$, a random oracle from $\{0,1\}^{n} \rightarrow \{0,1\}^{3n}$. We will set $\Gamma$ to be $\mathsf{Col}^{\cO}$. We will also allow queries to $\mathsf{Col}$ to be circuits $C^{\cO}$ that have $\cO$-oracle gates in them. As before, on input a circuit $C$ with polynomially many $\cO$-gates in it, $\mathsf{Col}^{\cO}$ will sample uniform input $x$, and output a pair $(x, x')$ such that $C^{\cO}(x) = C^{\cO}(x')$ (and where $x'$ is essentially a random preimage of $C^{\cO}(x)$).

The ``one-way'' security game involves picking a random $x$, and then challenging the adversary to invert $y = \cO(x)$, with only polynomially many queries to these oracles.
One way to argue that inversion is hard is to consider an experiment where the oracle $\cO$ is replaced with a different oracle $\cO'$. 
The oracle $\cO'$ is identical to $\cO$, except that for any $z$ such that $\cO(z) = y$, $\cO'(z)$ outputs a uniformly random value $y'$ (note that w.h.p. due to $\cO$ being expanding, there is only a single such $z$). 
Then, one-wayness can be proved by contradiction by combining the following arguments.
\begin{enumerate}
\item Any adversary that inverts $y$, given access to $\cO$ can be used to distinguish between $\cO$ and $\cO'$ (by checking the image of the adversary's output). 
\item The only way distinguish between $\cO$ and $\cO'$ is to actually query $\cO'$ on $z$ s.t. $\cO(z) = y$.
\item Since $\cO'$ is independent of $y$, the adversary cannot possibly query $\cO'$ on $z$ s.t. $\cO(z) = y$ (except with negligible probability).
\end{enumerate}

When the adversary can additionally access $\mathsf{Col}$, the second step of the argument above is not necessarily immediate.
Fortunately, we can show that a variant of the second step can be made to work by relying the following lemma (which is loosely inspired by a lemma developed in the context of separating 
$\mathsf{PDQP}$ from $\mathsf{NP}$~\cite{ITCS:ABFL16})\footnote{Note that abstracting out this lemma also helps obtain a simplified understanding of the classical separation~\cite{EC:Simon98} of collision-resistant hash functions from one-way functions. Also, we do not formally prove this lemma in this work, instead we prove a stronger quantum generalization--please refer to (Informal) Lemma \ref{lem:overview-2}.}.
\begin{lemma}[Informal]
\label{lem:overview-1}
For every pair of oracles $\cO$ and $\cO'$, for every oracle-aided circuit $C$ and every input $x$, the statistical distance between $\mathsf{Col}^{\cO}(C)$ and $\mathsf{Col}^{\cO'}(C)$ given the truth table of $\cO$ is at most thrice the statistical distance between $C^\cO(x)$ and $C^{\cO'}(x)$ for uniform $x$ given the truth table of $\cO$. 
\end{lemma}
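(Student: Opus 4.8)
Fix the two oracles $\cO,\cO'$ (so in particular everything below holds conditioned on the truth table of $\cO$), and fix the oracle-aided circuit $C$. The plan is to track, for each oracle $\cP\in\{\cO,\cO'\}$, two distributions. Writing the output state of $C^{\cP}$ on a uniform input as $\sum_s\alpha^{\cP}_s\ket{s}\ket{\psi^{\cP}_s}$, where $s$ ranges over the computational-basis values of the designated ``serial-number'' register $\mathsf{A}$ and $\ket{\psi^{\cP}_s}$ lives on the remaining register $\mathsf{B}$, set $p_{\cP}(s):=\abs{\alpha^{\cP}_s}^2$ and let $q_{s,\cP}$ be the distribution of a computational-basis measurement of $\ket{\psi^{\cP}_s}$. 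Then a computational-basis measurement of \emph{all} output qubits of $C^{\cP}$ yields the distribution $\mu_{\cP}(s,m)=p_{\cP}(s)\,q_{s,\cP}(m)$ --- this is the quantity on the right-hand side of the lemma, i.e.\ (the measurement of) $C^{\cP}(x)$ for uniform $x$ --- while $\mathsf{Col}^{\cP}(C)$ outputs $(s,m_1,m_2)$ distributed as $\nu_{\cP}(s,m_1,m_2)=p_{\cP}(s)\,q_{s,\cP}(m_1)\,q_{s,\cP}(m_2)$, i.e.\ two copies that are conditionally i.i.d.\ given $s$. The goal is therefore $\SD(\nu_{\cO},\nu_{\cO'})\le 3\,\SD(\mu_{\cO},\mu_{\cO'})$.

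The heart of the argument is a coupling. Start from an optimal coupling of $\mu_{\cO}$ and $\mu_{\cO'}$: it produces pairs $(S,M_1)$ that coincide with probability $1-\SD(\mu_{\cO},\mu_{\cO'})$, and in that event the common value $(s,m_1)$ occurs with probability $\min\{\mu_{\cO}(s,m_1),\mu_{\cO'}(s,m_1)\}$. Lift this to a coupling of $\nu_{\cO}$ and $\nu_{\cO'}$: conditioned on the two first-samples having agreed on some $(s,m_1)$, draw the two second-samples $M_2^{(1)}\sim q_{s,\cO}$ and $M_2^{(2)}\sim q_{s,\cO'}$ from their own optimal coupling (and, off this event, draw them independently). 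The marginals are correct because in either scenario the conditional law of the second sample given $s$ depends only on $s$ and the oracle, not on $m_1$. Under this coupling the two triples disagree only if (i) the first samples already disagree, which happens with probability $\SD(\mu_{\cO},\mu_{\cO'})$, or (ii) the first samples agree on some $(s,\cdot)$ but the second samples disagree; since $\sum_{m_1}\min\{\mu_{\cO}(s,m_1),\mu_{\cO'}(s,m_1)\}\le p_{\cO'}(s)$, event (ii) contributes at most $T:=\sum_s p_{\cO'}(s)\,\SD(q_{s,\cO},q_{s,\cO'})$. Hence $\SD(\nu_{\cO},\nu_{\cO'})\le \SD(\mu_{\cO},\mu_{\cO'})+T$.

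It then remains to show $T\le 2\,\SD(\mu_{\cO},\mu_{\cO'})$, which I would prove termwise in $s$. For each $s$, two applications of the (reverse) triangle inequality to $\mu_{\cO}(s,m)-\mu_{\cO'}(s,m)=p_{\cO'}(s)\bigl(q_{s,\cO}(m)-q_{s,\cO'}(m)\bigr)+\bigl(p_{\cO}(s)-p_{\cO'}(s)\bigr)q_{s,\cO}(m)$ give $\sum_m\abs{\mu_{\cO}(s,m)-\mu_{\cO'}(s,m)}\ge \abs{p_{\cO}(s)-p_{\cO'}(s)}$ and also $\ge 2p_{\cO'}(s)\SD(q_{s,\cO},q_{s,\cO'})-\abs{p_{\cO}(s)-p_{\cO'}(s)}$; since a nonnegative quantity that is at least $A$ and at least $B$ is at least $(A+B)/2$, this yields $\sum_m\abs{\mu_{\cO}(s,m)-\mu_{\cO'}(s,m)}\ge p_{\cO'}(s)\SD(q_{s,\cO},q_{s,\cO'})$. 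Summing over $s$ gives $2\,\SD(\mu_{\cO},\mu_{\cO'})\ge T$, and therefore $\SD(\nu_{\cO},\nu_{\cO'})\le \SD(\mu_{\cO},\mu_{\cO'})+2\,\SD(\mu_{\cO},\mu_{\cO'})=3\,\SD(\mu_{\cO},\mu_{\cO'})$. (The factor $3$ splits as $1$ for coupling the ``collapsing'' outcome $s$ plus $2$ for the single extra, non-collapsed copy; with $k$ copies the same argument gives $2k-1$.)

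The step I expect to be the main obstacle is extracting the constant $3$ rather than a larger one. The naive route --- regard $\nu_{\cP}$ as the product $p_{\cP}\otimes q_{\cdot,\cP}\otimes q_{\cdot,\cP}$ and swap the three factors one at a time through hybrids --- loses a factor of $2$ at the $q$-steps (total $2T$) and still pays $\SD(p_{\cO},p_{\cO'})$, yielding only $5\,\SD(\mu_{\cO},\mu_{\cO'})$. Getting down to $3$ relies on two refinements: in the coupling one pays the residual mismatch $\SD(q_{s,\cO},q_{s,\cO'})$ only weighted by $p_{\cO'}(s)$ (because once the first outcome is pinned to $(s,m_1)$, both scenarios re-derive the residual from $s$ alone), together with the two-sided bound $T\le 2\,\SD(\mu_{\cO},\mu_{\cO'})$ above. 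A secondary subtlety worth flagging is that the right-hand side must be the statistical distance between the \emph{full} computational-basis measurements of $C^{\cO}(x)$ and $C^{\cO'}(x)$, over both registers $\mathsf{A}$ and $\mathsf{B}$: comparing only the serial-number register (i.e.\ only $p_{\cO}$ versus $p_{\cO'}$) would make the claim false, since one can have $p_{\cO}=p_{\cO'}$ yet $\nu_{\cO}\neq\nu_{\cO'}$.
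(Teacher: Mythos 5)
Your proof is correct, and it takes a genuinely different route from the paper. In fact the paper never proves \cref{lem:overview-1} directly: a footnote notes that only the stronger quantum generalization (\cref{lem:overview-2}) is proved, formally as \cref{clm:abcd} inside the proof of \cref{thm:CCol-punc}. There the factor $3$ arises from a purely linear-algebraic hybrid over purified states $\ket{\phi_0},\ket{\phi_1},\ket{\phi_2},\ket{\phi_3}$ (replacing, one tensor factor at a time, $\sqrt{p_s},\ket{\psi_s}$ by $\sqrt{p'_s},\ket{\psi'_s}$), with each of the three hops bounded in Euclidean norm by $\|C^{\cO}\ket{0}-C^{\cO'}\ket{0}\|$ via inner-product estimates. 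Your argument instead formalizes the classical statement (with $\mathsf{Col}$ read as in \cref{def:Col}, restricted to classical outputs) and proves it by a coupling: a factor $1$ from coupling the collapsed outcome together with the first sample, plus a factor $2$ from the pointwise inequality $\sum_m|\mu_{\cO}(s,m)-\mu_{\cO'}(s,m)|\ge p_{\cO'}(s)\,\SD(q_{s,\cO},q_{s,\cO'})$, which I checked and which is what saves you from the naive constant $5$. Your reading of the right-hand side as the statistical distance of the \emph{full} output measurement is the correct one (it matches the quantum version, which compares the full states $C^{\cO}\ket{0}$ and $C^{\cO'}\ket{0}$), and your coupling is legitimate because the conditional law of the second sample depends only on $s$ and the oracle. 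What each approach buys: your proof gives the classical lemma with a linear dependence on the statistical distance and elementary probabilistic tools, whereas deriving it from the paper's quantum claim would incur a square-root loss when translating between Euclidean/trace distance and statistical distance; conversely, your coupling does not extend to the setting the paper actually needs, where the two ``copies'' are quantum residual states $\ket{\psi_s}$ rather than conditionally i.i.d.\ samples, so the fidelity/inner-product hybrid of \cref{clm:abcd} is still required for the separation results.
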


This says that, if given a single query to the collision-finder, an adversary can find a point on which $\cO$ and $\cO'$ differ, then the circuit $C$ (provided as input to $\mathsf{Col}$) must already be distinguishing between $\cO$ and $\cO'$, and therefore already querying $\cO'$ on a point where $\cO$ and $\cO'$ differ. 
This fact can then be used to complete a proof similar to the one sketched above.

A similar analysis extends--via a standard hybrid argument--even when the distinguisher is allowed multiple queries to the collision-finder. It is possible to come up with a sequence of hybrids where the first $i$ oracle queries of the collision-finder are answered via $\cO$, and the following queries are simulated using $\cO'$ (while ensuring that any duplicate queries are answered consistently). Any distinguisher that eventually detects a switch from $\cO$ to $\cO'$ must also do so in some adjacent set of hybrids, which then allows us to argue about the existence of a circuit that finds a differing input between $\cO$ and $\cO'$ (as before).\\

\noindent{\bf Quantum Queries: Challenges and Solutions.}
First, let us understand how one-way functions can be obtained from random oracles with security under superposition queries (in the absence of any collision finder). The one-way function $f(x)$ is simply defined as $\cO(x)$.
Given a challenge $y = \cO(x)$ for a random $x$, we can create a punctured version of $\cO$  that outputs independent random values on $z$ such that $\cO(z) = y$ (again w.h.p. there is a single such $z$), and call this oracle $\cO'$. 
As before, an inverter can be used to derive a contradiction by combining the following three observations.
\begin{enumerate}
    \item Any adversary that inverts $y$ can be easily used to distinguish between $\cO$ and $\cO'$ (by checking the image of the adversary's output). 
    \item The quantum one-way-to-hiding lemma~\cite{EC:Unruh12} says that the only way to distinguish between $\cO$ and $\cO'$ is to query $\cO'$ (with noticeable query mass) on some $z$ such that $O(z) = y$. 
    \item Since $\cO'$ is independent of $y$, the adversary cannot possibly query $\cO'$ (with noticeable query mass) on $z$ s.t. $\cO(z) = y$ (except with negligible probability).
\end{enumerate}

Of course, our goal is to rule out {\em quantum} constructions of collision-resistant hash functions from one-way functions/permutations. 
To do this, we must also provide a collision-finding oracle to the adversary. To rule out quantum constructions, the collision-finding oracle $\mathsf{Col}$ from before must be updated to find collisions in {\em quantum} circuits that make superposition queries to $\cO$. That is, without loss of generality, an input to $\mathsf{Col}$ will take the form of a quantum circuit $C$ with $\cO$-gates, and denoting $C^{\cO}\ket{0...0} = \sum_s \alpha_s \ket{s}\ket{\psi_s}$, $\mathsf{Col}$ will output the result of measuring all qubits in the state $\sum_s \alpha_s \ket{s}\ket{\psi_s}^{\otimes 2}$ in the computational basis.
Furthermore to rule out quantum reductions, we will have to argue that quantum queries to $\mathsf{Col}$ will not help break one-wayness.

Unfortunately, it is unclear why the one-way-to-hiding property above would hold in the presence of quantum queries to $\mathsf{Col}$. 
The very first quantum query to $\mathsf{Col}$ could already distribute a small fraction of amplitude across every point in its domain. Furthermore, since $\mathsf{Col}$ is itself inefficient, it is not at all clear why its response to a superposition query would avoid leaking information about whether the oracle was $\cO$ or $\cO'$.





To address, this, we take inspiration from the compressed oracle technique~\cite{C:Zhandry19} and {\em purify} $\mathsf{Col}$ to obtain a superposition over truth tables, where every input $C$ is mapped to an output pure state in this ``purified'' oracle. Furthermore, queries to $\mathsf{Col}$ can be simulated lazily given the ability to compute/uncompute the corresponding pure state.
This ability is captured by a ``compression'' unitary, denoted by $W_{\cO}$, that swaps $\ket{\bot}$ to the corresponding pure state.
Our next step is to prove the following lemma, which is the quantum analogue of Lemma \ref{lem:overview-1}, and which says that a variant of one-way-to-hiding continues to hold in the presence of quantum queries to (the compression unitary of) $\mathsf{Col}$.
\begin{lemma}[Informal]
\label{lem:overview-2}
For every pair of oracles $\cO$ and $\cO'$, and every quantum query $\ket{\psi}$, if the trace distance between 
$W_{\cO}\ket{\psi}$
and
$W_{\cO'}\ket{\psi}$
(given $\cO$) is noticeable,
then $\ket{\psi}$
 has noticeable mass on a circuit $C$ for which the statistical distance between  $C^\cO\ket{0...0}$ and $C^{\cO'}\ket{0...0}$ is noticeable, given $\cO$. 
\end{lemma}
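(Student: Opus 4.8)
The plan is to purify $\mathsf{Col}$ into a circuit-controlled unitary, control its sensitivity to the oracle block by block, and finish with a Markov-type threshold argument. Following the compressed-oracle picture, $W_\cO$ acts on a register $\mathbf C$ holding a classical circuit description and a register $\mathbf Y$; on the $\mathbf C=C$ subspace it is the reflection $V_{C,\cO}$ that swaps $\ket{\bot}$ (a dedicated symbol, hence orthogonal to every $\ket{\Phi_{C,\cO}}$) with the fully specified, phase-free pure state $\ket{\Phi_{C,\cO}}:=\sum_s \alpha_s\ket{s}\ket{\psi_s}^{\otimes 2}$, where $C^\cO\ket{0\cdots 0}=\sum_s \alpha_s\ket{s}\ket{\psi_s}$ with $\alpha_s\ge 0$. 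Since distinct circuit descriptions are orthogonal, $W_\cO=\sum_C \ket{C}\bra{C}\otimes V_{C,\cO}$; writing an arbitrary query as $\ket{\psi}=\sum_C \ket{C}_{\mathbf C}\otimes\ket{\eta_C}$ (the $\ket{\eta_C}$ ranging over $\mathbf Y$ and the adversary's workspace, with $\sum_C\|\ket{\eta_C}\|^2=1$), orthogonality gives $\|(W_\cO-W_{\cO'})\ket{\psi}\|^2=\sum_C\|(V_{C,\cO}-V_{C,\cO'})\ket{\eta_C}\|^2\le\sum_C\|V_{C,\cO}-V_{C,\cO'}\|_{\mathrm{op}}^2\|\ket{\eta_C}\|^2$. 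Moreover, the hypothesis that the trace distance of $W_\cO\ket{\psi}$ and $W_{\cO'}\ket{\psi}$ is at least $\epsilon$ forces $\|(W_\cO-W_{\cO'})\ket{\psi}\|\ge\epsilon$, since the trace distance of two pure states never exceeds their Euclidean distance.

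The heart of the argument is to bound $\|V_{C,\cO}-V_{C,\cO'}\|_{\mathrm{op}}$ by the distance between the outputs of $C^\cO$ and $C^{\cO'}$. Since $V_{C,\cO}$ and $V_{C,\cO'}$ are the ``swap-with-$\ket{\bot}$'' reflections attached to $\ket{\Phi_{C,\cO}}$ and $\ket{\Phi_{C,\cO'}}$, a direct calculation (using $\|\ket{u}\bra{u}-\ket{v}\bra{v}\|_{\mathrm{op}}\le\|\ket{u}-\ket{v}\|$ for unit vectors) gives
\[
  \|V_{C,\cO}-V_{C,\cO'}\|_{\mathrm{op}}\le 3\,\|\ket{\Phi_{C,\cO}}-\ket{\Phi_{C,\cO'}}\|\le 6\,\|C^\cO\ket{0\cdots 0}-C^{\cO'}\ket{0\cdots 0}\| ,
\]
where the second inequality is the crux: the map $C^\cO\ket{0\cdots 0}\mapsto\ket{\Phi_{C,\cO}}$ that ``clones the residual conditioned on the measured register'' is Lipschitz (constant $2$) on state vectors. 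I would prove this by decomposing over the value $s$ of the measured register, so that cloning acts branch-by-branch: it suffices to show, for each $s$, that $\|\alpha_s\ket{\psi_s}^{\otimes 2}-\alpha'_s\ket{\psi'_s}^{\otimes 2}\|\le 2\,\|\alpha_s\ket{\psi_s}-\alpha'_s\ket{\psi'_s}\|$ and then sum the squared inequalities. The per-$s$ estimate comes from the triangle inequality after inserting a hybrid term, with a case split according to whether $\alpha_s\le\alpha'_s$ (insert $\alpha_s\ket{\psi_s}\otimes\ket{\psi'_s}$) or $\alpha_s\ge\alpha'_s$ (insert $\alpha'_s\ket{\psi'_s}\otimes\ket{\psi_s}$), using in each case the elementary bound $\min(\alpha_s,\alpha'_s)\cdot\|\ket{\psi_s}-\ket{\psi'_s}\|\le\|\alpha_s\ket{\psi_s}-\alpha'_s\ket{\psi'_s}\|$; branches of zero weight contribute $0$ by convention. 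This case split, and the constant it produces, is exactly the quantum analogue of the factor ``thrice'' in the classical \cref{lem:overview-1}.

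Substituting into the block bound from the first paragraph, $\epsilon^2\le 36\sum_C d(C)^2\|\ket{\eta_C}\|^2$, where $d(C):=\|C^\cO\ket{0\cdots 0}-C^{\cO'}\ket{0\cdots 0}\|\le 2$. Splitting the circuits according to whether $d(C)\ge\delta$, bounding $d(C)\le 2$ on the first part and $d(C)<\delta$ on the second (whose total weight is at most $1$), one gets $\epsilon^2\le 36(4m+\delta^2)$, where $m$ is the weight $\ket{\psi}$ places (when $\mathbf C$ is measured) on circuits with $d(C)\ge\delta$; taking $\delta:=\epsilon/24$ gives $m=\Omega(\epsilon^2)$. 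Hence $\ket{\psi}$ has $\Omega(\epsilon^2)$ mass on circuits $C$ for which $\|C^\cO\ket{0\cdots 0}-C^{\cO'}\ket{0\cdots 0}\|=\Omega(\epsilon)$ — that is, circuits whose output already distinguishes $\cO$ from $\cO'$ — which is the asserted conclusion (and the multi-query version then follows from the standard hybrid argument sketched after \cref{lem:overview-1}). The step I expect to be the main obstacle is the Lipschitz bound for the cloning map: the naive triangle inequality does not close without the $\min(\alpha_s,\alpha'_s)$ case split, and one must handle zero-weight branches together with the phase-free convention for $\ket{\Phi_{C,\cO}}$; a secondary technical point, which properly belongs to the setup rather than to this lemma, is formalizing $W_\cO$ as a genuine unitary whose lazy simulation reproduces $\mathsf{Col}$.
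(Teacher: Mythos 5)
Your proposal is correct, and it proves exactly what the paper needs from this informal lemma, but the central estimate is obtained by a genuinely different decomposition. In the paper, the formal counterpart is \cref{thm:ow2h-comp} (specialized to a single query) combined, inside the proof of \cref{thm:CCol-punc}, with \cref{clm:real-distances-are-smaller} and \cref{clm:abcd}: the distinguishing advantage against the swap unitaries is turned into an expectation of $\|\ket{\DCol^{\cO}_C}-\ket{\DCol^{\cO'}_C}\|$ over a measured query, and the cloning step is handled by one global three-term hybrid (switch the amplitudes and the third register, then the amplitudes back, then the second register), whose inner-product computation yields the constant $3$. You instead bound each circuit block in operator norm via $\|V_{C,\cO}-V_{C,\cO'}\|\le 3\,\|\ket{\Phi_{C,\cO}}-\ket{\Phi_{C,\cO'}}\|$ and prove the cloning-Lipschitz bound branch-by-branch using the $\min(\alpha_s,\alpha'_s)$ case split and summing squares; I checked the per-branch inequality and it does hold (and even gives constant $2$ rather than $3$). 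Your final threshold step also delivers the ``noticeable mass on a noticeably-far circuit'' form directly, whereas the paper keeps an expectation and defers thresholding to the \cref{lem:bbbv} step. Two caveats, neither fatal. First, the paper's $W_{\cO}$ for $\mathsf{Col}$ is the compression unitary of \cref{def:comp_U} for $\DCol^{\cO}$, so its blocks swap $\ket{\bot}$ with $\ket{\DCol^{\cO}_C}=\sum_z\sqrt{\Pr[z]}\,\ket{z}$ rather than with the pure cloned state $\ket{\Phi_{C,\cO}}$ you use (yours is the $\QCol$ flavor); covering the paper's version needs one extra line, exactly \cref{clm:real-distances-are-smaller}, which slots into your operator-norm bound unchanged. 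Second, your conclusion lower-bounds the Euclidean distance $\|C^{\cO}\ket{0}-C^{\cO'}\ket{0}\|$ rather than a statistical/trace distance as the informal wording suggests (large Euclidean distance can come from phases alone), but this is precisely what the paper itself establishes and then feeds into \cref{lem:bbbv} to extract query mass on differing points, so it is the right target.
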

Intuitively, this lemma follows by a hybrid argument where the joint distribution of the outputs of $\mathsf{Col}$ are modified one part at a time (we refer the reader to the proof of \cref{clm:abcd} for a complete description).




The lemma above says that the adversary helps find a quantum circuit $C$ that without ever querying the compression unitary $W_{\cO}$, will distinguish between $\cO$ and $\cO'$. This in turn means that measuring a random query of $C$ (to $\cO'$) will, with noticeable probability, yield $z$ such that $\cO(z) = y$.
The above lemma suffices to show that a {\em single} quantum query to the collision-finder, even on quantum circuits, does not break one-wayness.
Due to the way these compression unitaries are defined, we are then able to switch $W^{\cO}\ket{\psi}$ to $W^{\cO'}\ket{\psi}$ query-by-query, and rely on a quantum hybrid argument~\cite{BBBV} to argue that even sub-exponentially many queries to $\mathsf{Col}$ do not help break one-wayness.  

We are also able to extend this to allow multiple non-collapsing measurements in between computation. That is, our oracle can be strengthened from $\mathsf{Col}$ to a non-collapsing-measurement oracle, that applies multiple unitaries interspersed with non-collapsing (and collapsing) measurements.
We do this by defining a class of unitary oracles which we refer to as one-way-to-hiding (OW2H) compatible, in \cref{sec:io}, and showing that one-way-to-hiding continues to hold in the presence of such unitaries. Subsequently we prove that the compression unitary for $\mathsf{Col}$, as well as for the more general non-collapsing measurement oracle, is OW2H-compatible.
\\

\noindent{\bf Enabling Permutations.}
Recall that our goal is to not only prove that one-way functions exist, but also that OWPs and iO exist in the presence of queries to $\mathsf{Col}$, $\mathsf{QCol}$, or a non-collapsing measurement oracle.

The argument for the existence of OWPs follows a similar template as one-way functions, except that the oracle $\cO$ is a random permutation oracle. To argue that $y = \cO(x)$ is hard to invert, we will again replace $\cO$ with a different oracle $\cO'$. The oracle $\cO'$ is identical to $\cO$, except that on a random $x'$, $O(x')$ outputs $y$. Then, one-wayness of the permutation can be proved by contradiction by combining the following arguments.
\begin{enumerate}
\item Any adversary that inverts $y$ given $\cO$ will without loss of generality query $\cO$ on $x$.
\item One-way-to-hiding: The only way to distinguish between $\cO$ and $\cO'$ is to query $\cO$ (with noticeable query mass) on $x'$.
\item Since $\cO$ is independent of $x'$, the adversary cannot possibly query $\cO$ on $x'$ (except with negligible probability).
\item By 2 and 3, the adversary cannot query $\cO'$ on $x'$ (except with negligible probability). But by 1, 2 and 3, the adversary must query $\cO'$ on $x$ (with noticeable probability). However, to an adversary that makes oracle calls to $\cO'$, $x$ and $x'$ appear symmetric: thus an adversary cannot query one with higher probability than the other.
\end{enumerate}
As before, the second step of this argument is not immediate when given additional oracles (like a collision-finding oracle) that depend on $\cO$ or $\cO'$. Fortunately, a OW2H-compatible unitary allows us to recover this step just like the case of one-way functions.\\

\noindent{\bf Enabling Obfuscation.}
To build obfuscation, following~\cite{FOCS:AshSeg15} and \cite{TCC:BitDeg19} we introduce an expanding random oracle $\mathsf{Obf}$ that maps input circuit-randomness pairs $(\mathfrak{C},r)$ to $\mathfrak{C}'$, which is interpreted as an obfuscation of $\mathfrak{C}$. In addition, an $\mathsf{Eval}$ oracle given $\mathfrak{C}'$ and input $x$ inverts $\mathsf{Obf}$ on $\mathfrak{C}'$ to recover $\mathfrak{C}$, then outputs $\mathfrak{C}(x)$.

Recall that to prove obfuscation security, for every pair of functionally equivalent circuits $\mathfrak{C}_0$ and $\mathfrak{C}_1$, we would like to show that an adversary given $\mathsf{Obf}(\mathfrak{C}_b;r)$ for uniform $b, r$ cannot predict $b$, even with access to $\mathsf{Eval}$ and the collision-finding oracle $\mathsf{Col}$.
Furthermore, any circuits $C$ on which $\mathsf{Col}$ is queried can now have $\mathsf{Obf}$ and $\mathsf{Eval}$ oracle gates as well.

In the classical literature this is accomplished by a series of hybrids involving reprogramming the oracles adaptively. Intuitively, we move to a world where the adversary is given a random string $z$ instead of the obfuscated circuit. The $\Eval$ oracle is then modified such that querying the oracle on $(z,x)$ returns $\mathfrak{C}_0(x)$.  Replicating a similar argument appears to be considerably more challenging in the quantum setting.

Thus, we take a different approach. We prove the security of the obfuscation scheme without requiring adaptive reprogramming in any of our hybrids. This yields a new proof idea that is arguably simpler than the existing classical analysis (eg., in~\cite{TCC:BitDeg19}).

Taking a step back, the adversary's goal in the iO security game is to distinguish $\obf(\mathfrak{C}_0; r^*)$ from $\obf(\mathfrak{C}_1; r^*)$ where $r^*$ is the randomness used by the challenger. If the adversary only had access to $\obf$ and not $\Eval$, this would be easy to show by the same techniques we use to show one-wayness of a random oracle. The presence of $\Eval$ complicates this approach, since $\Eval$ uses $\obf^{-1}$ to ``unobfuscate'' circuits in order to evaluate them.

Our goal will be to (indistinguishably) move to a world where $\mathsf{Obf}$ is replaced by $\mathsf{Obf}'$ which is identical to $\mathsf{Obf}$ except that $\mathsf{Obf}'(\widehat{\mathfrak{C}};r^*) = \bot$ for all strings $\widehat{\mathfrak{C}}$, where $r^*$ refers to the randomness used by the challenge obfuscation. Suppose we could accomplish this while allowing $\Eval$ to still use $\obf^{-1}$ to invert its inputs. 
Then for any two functionally equivalent circuits $\mathfrak{C}_0$ and $\mathfrak{C}_1$, strings of the form $\obf(\mathfrak{C}_b; r^*)$ are \textit{identically} distributed in the adversary's view, independent of the number of queries made by the adversary (and hence independent of $\mathsf{Col}$ which can be simulated using unbounded queries). This means the view of the adversary in the security game is independent of the challenger's choice bit, implying that the adversary has zero advantage in the distinguishing game. This is because $\obf'$ queries cannot help distinguish the obfuscations, while $\Eval$ behaves identically for obfuscations of functionally equivalent circuits. 

The argument above gets more complicated when we introduce a collision-finder. Recall that we have already discussed why the compression unitary for $\mathsf{Col}$ is OW2H-compatible; all that remains is to prove that the security of obfuscation is retained in the presence of any OW2H-compatible unitary. 
In particular, switching from $\obf$ to $\obf'$ in the presence of such a unitary is non-trivial.
Nevertheless, we show that this is possible 
by relying on the fact that the unitary is OW2H-compatible: namely 
distinguishing $\mathsf{Col}$ that uses $\mathsf{Obf}$ from $\mathsf{Col}$ that uses $\mathsf{Obf}'$ leads to an efficient algorithm that finds points where the two oracles differ. 



\paragraph{Breaking Quantum Lightning.}

Quantum lightning is a form of publicly verifiable quantum money which allows for the public generation of unclonable states. Namely, anyone can produce quantum bank-notes/``lightning'' states along with an efficiently verifiable classical serial number, such that it is infeasible for anyone (even the generator) to create two states that verify against the same serial number.
Quantum lightning is an instance of collision/cloning resistance over quantum states, as opposed to classical strings. Our final technical goal is to demonstrate that this type of cloning resistance {\em also} cannot be based on iO and OWPs.

To achieve this, we will first construct a quantum oracle capable of breaking any quantum lightning scheme. Imagine an oracle that on input quantum circuits representing a  minting algorithm, ran the minting algorithm to obtain a serial number $s$ and a banknote $\ket{\psi_s}$, and then returned $s$ along with \textit{two} copies of $\ket{\psi_s}$. Such an oracle would break any lightning scheme. However, it does not seem powerful enough to break any quantum money scheme, since in quantum money one needs to clone banknotes corresponding to an externally sampled serial number $s$.
This difference is crucial since publicly verifiable quantum money \textit{can} be built from iO and OWPs~\cite{JC:Zhandry21}. 

Can we just use this oracle? The first issue with defining an oracle this way is that the oracle is not unitary. We can, however, consider a purified version of this oracle that outputs a pure state $\ket{\phi}:= \sum_s \alpha_s \ket{s}\ket{\psi_s}\ket{\psi_s}$, where the output of the (purified) minting algorithm was $\sum_s \alpha_s \ket{s}\ket{\psi_s}$.
Now, measuring the first register of $\ket{\phi}$ yields the same marginal distribution over serial numbers as the mint algorithm. 

In order for this oracle to be unitary, we define it as the operation that swaps between states $\ket{\bot}$ and $\ket{\phi}$. Note here the serendipitous similarity to the compression unitary for $\mathsf{Col}$, which similarly swaps $\ket{\bot}$ to the purified state of $\mathsf{Col}$! In fact, we can show that this lightning cloning unitary also satisfies the ``one-way to hiding'' (OW2H) compatibility described before. This, combined with the fact that iO and OWPs continue to remain secure even given queries to a OW2H unitary, helps complete our proof.

\paragraph{Follow-up Work.}
An upcoming work~\cite{personalcomm} builds on our technical toolkit, using it to separate constant-round proofs of quantumness from 
iO and OWPs. They crucially rely on the fact that our arguments do not require adaptive reprogramming.

\paragraph{Roadmap.} 
The rest of this paper is arranged as follows. In Section \ref{sec:preliminaries} we define notation, known primitives and lemmas that will be useful later. 
In Section \ref{sec:qq}, we introduce some helpful techniques to analyse algorithms that make quantum queries to oracles. 

Section \ref{sec:io} defines the class of  OW2H-compatible unitaries and shows how to realize a construction of iO and OWPs that {\em remains secure} in the presence of a class of unitary oracles which we refer to as one-way-to-hiding compatible.  

The remaining sections rule out black-box constructions of various primitives from iO and OWPs by constructing oracles that break the security of these primitives. The oracles are either OW2H-compatible or can be simulated by OW2H-compatible oracles, and therefore do not break the security of iO and OWPs.

Section \ref{sec:CR} contains the separations for collision-resistant primitives, Section \ref{sec:light} contains the separations for quantum lightning, and finally in Section \ref{sec:PDQP} we show that iO and OWPs remain secure even against (quantum) queries to a non-collapsing-measurement oracle.\\

\section{Preliminaries}\label{sec:preliminaries}

\subsection{Basic Notations} 
We use standard notations of quantum computing and cryptography.
For a bit string $x$, $|x|$ is its length.
$\mathbb{N}$ is the set of natural numbers.
We use $\secp$ as the security parameter.
$[n]$ means the set $\{1,2,...,n\}$.
For a finite set $S$, $x\gets S$ means that an element $x$ is sampled uniformly at random from the set $S$.
$\negl$ is a negligible function, and $\poly$ is a polynomial.
All polynomials appear in this paper are positive, but for simplicity we do not explicitly mention it.
PPT stands for (classical) probabilistic polynomial-time and QPT stands for quantum polynomial-time. 
For an algorithm $\cA$, $y\gets \cA(x)$ means that the algorithm $\cA$ outputs $y$ on input $x$.
If $\cA$ is a classical probabilistic or quantum algorithm that takes $x$ as input and outputs bit strings,
we often mean $\cA(x)$ by the output probability distribution of $\cA$ on input $x$.
When $\cA$ is a classical probabilistic algorithm, $y=\cA(x;r)$ means that the output of $\cA$ is $y$ if it runs on input $x$ and with the random seed $r$.
For two probability distributions $P\coloneqq\{p_i\}_i$ and $Q\coloneqq\{q_i\}_i$, 
$\SD(Q,P)\coloneqq\frac{1}{2}\sum_i|p_i-q_i|$ is their statistical distance.
$\TD(\rho,\sigma)$ is the trace distance between two states $\rho$ and $\sigma$. For pure states $\ket{\phi}$ and $\ket{\psi}$ we sometimes abuse notation and use $\TD(\ket{\phi}, \ket{\psi})$ to refer to $\TD(\ket{\phi}\!\bra{\phi}, \ket{\psi}\!\bra{\psi})$.

\subsection{Cryptography}

\begin{definition}[Security Game for iO for Oracle-Aided Classical Circuits]
\label{def:io-game}
\mor{It would be better to syncronyze the notaions with \cref{def:oracle_set} where $\mathsf{Obf}_\lambda$ was introduced and $\mathsf{Obf}(C,r)$ not $\mathsf{Obf}(C;r)$}
For $b\in \bin, \lambda\in \N, \obf:\bin^\lambda \times \bin^\lambda \rightarrow \bin^*, f\in \bin^\lambda \rightarrow \bin^*$, and an adversary $\cA$, define the game $\cG_{b,\lambda}^{\obf, f}(\cA)$ as:

 \begin{enumerate}
            \item $(C_0,C_1)\leftarrow A(1^\secp)$ where $C_0$ and $C_1$ are oracle-aided classical circuits.
            \item If $C^f_0$ is not functionally equivalent to $C^f_1$, or $C_1\notin\{0,1\}^{\lambda}$, or $C_0\notin\{0,1\}^{\lambda}$, the game outputs $\bot$. 
            \item $\obfC\gets\obf(C_b;r)$ for $r\leftarrow \bin^\lambda$.
            \item $b'\gets \cA(\obfC)$. 
            The game outputs $b'$.
        \end{enumerate}
    
\end{definition}
\begin{definition}[Indistinguishability Obfuscation for Oracle-Aided Classical Circuits \cite{JACM:BGIRSVY12}]
\label{def:oracle_iO}
\mor{It would be better to syncronyze the notaions with \cref{def:oracle_set} where $\mathsf{Obf}_\lambda$ was introduced and $\mathsf{Obf}(C,r)$ not $\mathsf{Obf}(C;r)$}
    For an oracle $\cO$, a pair of oracle-aided QPT algorithms $(\obf,\eval)$ is an indistinguishability obfuscator for $\cO$-aided classical circuits
    if the following conditions are satisfied.
    \begin{itemize}
        \item \textbf{Correctness:} For all $\secp\in\N$ and for all oracle-aided classical circuit $C\in\bit^\secp$, 
        \begin{align}
            \forall x,r,~ \eval^\cO(\obf^\cO(C;r),x)=C^{\cO}(x).
        \end{align}
        \item \textbf{Security:} 
        For any oracle-aided QPT adversary $\cA$,
        \begin{align}
            \left| \Pr[1\gets\cG^{\obf^\cO, \cO}_{0,\secp}(\cA^\cO)] - \Pr[1\gets\cG^{\obf^\cO, \cO}_{1,\secp}(\cA^\cO)] \right| \le \negl(\secp).
        \end{align}
        where $\cG$ is the game from \cref{def:io-game}.
    \end{itemize}
\end{definition}
\kabir{do we need to define more things?}

\subsection{Non-Collapsing Measurements and $\mathsf{PDQP}$}
\begin{definition}[Non-Collapsing Measurement Oracle $\cQ$ \cite{ITCS:ABFL16}] 
\label{def:noncollapsingmeasurementoracle}
    A non-collapsing measurement oracle $\cQ$ is an oracle that behaves as follows:
    \begin{enumerate}
        \item Take (a classical description of) a quantum circuit $C=(U_1,M_1,...,U_\tau,M_\tau)$ and an integer $\ell>0$ as input. 
        Here each $U_i$ is a unitary operator on $\ell$ qubits and each $M_i$ is a computational-basis projective measurement on $m_i$ qubits such that $0\le m_i\le\ell$. (When $m_i=0$, this means that no measurement is done.)
        \item Let $|\psi_0\rangle:=|0^\ell\rangle$. 
        Run $C$ on input $|\psi_0\rangle$,
        and obtain $(u_1,...,u_T)$, where
        $u_t\in\bit^{m_t}$ denotes the outcome of the measurement $M_t$ for each $t\in[T]$.
        Let $\tau_t:=(u_1,...,u_t)$.
        For each $t\in [T]$, 
        let $|\psi_t^{\tau_t}\rangle$ be the (normalized) post-measurement state immediately after the measurement $M_t$, i.e., 
        \begin{align}
            |\psi_t^{\tau_t}\rangle := \frac{ (|u_t\rangle\langle u_t|\otimes I)U_t|\psi_{t-1}^{\tau_{t-1}}\rangle }{ \sqrt{ \langle \psi_{t-1}^{\tau_{t-1}}|U_t^\dagger (|u_t\rangle\langle u_t|\otimes I)U_t|\psi_{t-1}^{\tau_{t-1}}\rangle } }.
        \end{align}
        \item 
        For each $t\in[T]$, 
        sample 
        $v_t\in\bit^\ell$ with probability
        $|\langle v_t|\psi_t^{\tau_t}\rangle|^2$.
        \item Output $(v_1,...,v_T)$.
    \end{enumerate}
\end{definition}

\begin{definition}[PDQP \cite{ITCS:ABFL16}]
    A language $L$ is in $\mathsf{PDQP}$ if there exists a polynomial-time classical deterministic Turing machine $R$ with a single query to a non-collapsing measurement oracle $\cQ$ such that
    \begin{itemize}
        \item For all $x\in L$, $\Pr[1\gets R^\cQ(x)]\ge \alpha(|x|)$,
        \item For all $x\notin L$, $\Pr[1\gets R^\cQ(x)]\le \beta(|x|)$,
    \end{itemize}
    where $\alpha,\beta$ are functions such that $\alpha(|x|)-\beta(|x|)\ge1/\poly(|x|)$.\footnote{Like $\mathsf{BQP}$, we can amplify the gap.}
\end{definition}

\begin{definition}[Sampling Problems~\cite{Aar14,ITCS:AarBuhKre24}]
\label{def:Samplingproblems}
A (polynomially-bounded) sampling problem is a collection $\{D_x\}_{x\in\bit^*}$ 
of probability distributions, 
where $D_x$ is a distribution over $\bit^{p(|x|)}$, for some fixed polynomial $p$.
\end{definition}

The sampling complexity class, $\mathsf{SampBQP}$, is defined as follows.
\begin{definition}[$\mathsf{SampBQP}$~\cite{Aar14,ITCS:AarBuhKre24}] 
\label{def:SampBQP}
$\mathsf{SampBQP}$ is the class of (polynomially-bounded) sampling problems 
$\{D_x\}_{x\in\bit^*}$ for which there exists a QPT algorithm $\cB$ such that for all $x$ and all $\epsilon>0$, 
$\SD(\cB(x,1^{\lfloor 1/\epsilon \rfloor}),D_x) \le\epsilon$, 
where $\cB(x,1^{\lfloor 1/\epsilon\rfloor})$ is the output probability distribution of 
$\cB$ on input $(x, 1^{\lfloor 1/\epsilon\rfloor})$. 
\end{definition}

We define $\mathsf{SampPDQP}$ as follows.
\begin{definition}[$\mathsf{SampPDQP}$~\cite{MorShiYam_PDQP}] 
\label{def:SampBQP}
$\mathsf{SampPDQP}$ is the class of (polynomially-bounded) sampling problems 
$\{D_x\}_{x\in\bit^*}$ for which there exists a classical deterministic polynomial-time 
algorithm $\cB$ that makes a single query to the non-collapsing measurement oracle $\cQ$ such that for all $x$ and all $\epsilon>0$, 
$\SD(\cB(x,1^{\lfloor 1/\epsilon \rfloor}),D_x) \le\epsilon$, 
where $\cB(x,1^{\lfloor 1/\epsilon\rfloor})$ is the output probability distribution of 
$\cB$ on input $(x, 1^{\lfloor 1/\epsilon\rfloor})$. 
\end{definition}

\subsection{Useful Lemmas}
In this paper, we use the following lemmas.
\begin{lemma}\label{lem:bbbv}{\cite{BBBV}}
Let $A$ be a quantum algorithm running in time $T$ with oracle access to $\cO$. 
Let $\epsilon > 0$ and let $S \subseteq [1,T] \times \{0,1\}^n$ be a set of time-string pairs such that 
$
\sum_{(t,r) \in S} q_r(|\phi_t\rangle) \leq \epsilon,
$
where $q_r(|\phi_t\rangle)$ is the magnitude squared of $r$ in the superposition of query $t$. 
If we modify $\cO$ into an oracle $\cO'$ which answers each query $r$ at time $t$ by providing the same string $R$ (which has been independently sampled at random), then the Euclidean distance between the final states of $A$ when invoking $\cO$ and $\cO'$ is at most $\sqrt{T\epsilon}$.
\end{lemma}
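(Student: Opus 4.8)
The plan is to prove this by the standard hybrid (telescoping) argument over the $T$ oracle calls. For each $j\in\{0,1,\dots,T\}$ I would define a hybrid execution $A_j$ of $A$ that answers its first $j$ queries using $\cO$ and its last $T-j$ queries using $\cO'$, and let $\ket{\chi_j}$ denote its final state. Then $\ket{\chi_T}$ is the final state of $A$ run with $\cO$ and $\ket{\chi_0}$ is the final state of $A$ run with $\cO'$, so by the triangle inequality it suffices to show
\[
\sum_{j=1}^{T}\norm{\ket{\chi_j}-\ket{\chi_{j-1}}}\le\sqrt{T\epsilon}.
\]

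Fix $j$. The executions $A_j$ and $A_{j-1}$ are literally identical up to and including the unitary applied immediately before the $j$-th query, and both answer the first $j-1$ queries with $\cO$; hence the state entering the $j$-th query is, in both cases, the \emph{honest} query state $\ket{\phi_j}$ of the $\cO$-execution. (This is precisely why the hybrids must interpolate from $\cO$ down to $\cO'$ as $j$ decreases: the lemma measures the query weight $q_r(\ket{\phi_t})$ of the honest run, not of any hybrid run, and only this ordering makes the honest states appear — this bookkeeping is the one subtle point of the whole argument.) After the $j$-th query, $A_j$ applies the oracle unitary for $\cO$ and $A_{j-1}$ the oracle unitary for $\cO'$; everything after that — the remaining unitaries and the remaining ($\cO'$-)queries — is identical and norm-preserving, so $\norm{\ket{\chi_j}-\ket{\chi_{j-1}}}$ equals the norm of the difference of the two oracle unitaries applied to $\ket{\phi_j}$. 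Since $\cO$ and $\cO'$ act identically on every basis state whose query register avoids the bad set $B_j:=\{r:(j,r)\in S\}$, this difference vector is supported entirely on the component of $\ket{\phi_j}$ with query register in $B_j$, whose squared norm is $\sum_{r\in B_j}q_r(\ket{\phi_j})$; bounding the action of the oracle difference on that subspace then yields $\norm{\ket{\chi_j}-\ket{\chi_{j-1}}}\le\big(\sum_{r\in B_j}q_r(\ket{\phi_j})\big)^{1/2}$.

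Finally, summing over $j$ and applying Cauchy--Schwarz,
\[
\sum_{j=1}^{T}\Big(\sum_{r\in B_j}q_r(\ket{\phi_j})\Big)^{1/2}\le\Big(T\sum_{j=1}^{T}\sum_{r\in B_j}q_r(\ket{\phi_j})\Big)^{1/2}\le\sqrt{T\epsilon},
\]
where the last inequality uses $\sum_{j}\sum_{r\in B_j}q_r(\ket{\phi_j})=\sum_{(t,r)\in S}q_r(\ket{\phi_t})\le\epsilon$; combined with the telescoping bound above this proves the claim. I do not expect any real obstacle here beyond the careful bookkeeping already flagged — getting the honest-run amplitudes $q_r(\ket{\phi_t})$ (rather than hybrid-run amplitudes) into the estimate is exactly what fixes the direction of interpolation and what one must line up with the normalization in the statement; the norm-preservation of the post-query evolution, the support argument for the oracle difference, and the Cauchy--Schwarz step are all routine.
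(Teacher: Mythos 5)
Your route is the standard hybrid argument, which is exactly the argument behind the cited result; the paper itself does not reprove this lemma but imports it verbatim from \cite{BBBV}, so the comparison is with that classical proof, and your bookkeeping matches it: consecutive hybrids share the $\cO$-answered prefix, so the state entering the $j$-th query is the honest-run query state, the post-query evolution is common and norm-preserving, and Cauchy--Schwarz turns a sum of per-query errors into $\sqrt{T\epsilon}$-type bounds. All of that is fine.

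The gap is the per-step estimate. Writing $\ket{E_j}$ for the component of the query state with query register in $B_j$, the support observation only gives $\norm{\ket{\chi_j}-\ket{\chi_{j-1}}}=\norm{(\cO-\cO')\ket{E_j}}\le\norm{\cO\ket{E_j}}+\norm{\cO'\ket{E_j}}=2\norm{\ket{E_j}}$: the operator $\cO-\cO'$ restricted to the bad subspace has operator norm $2$, not $1$, so your claimed bound $\norm{\ket{\chi_j}-\ket{\chi_{j-1}}}\le\bigl(\sum_{r\in B_j}q_r(\ket{\phi_j})\bigr)^{1/2}$ does not follow and is in fact false in general. Concretely, with a single query on the state $\sqrt{1-\epsilon}\,\ket{g}\ket{0}+\sqrt{\epsilon}\,\ket{r^*}\ket{0}$, where the two oracles agree on $g$ and differ on $r^*$, the final states are at Euclidean distance $\sqrt{\epsilon}\,\norm{\ket{\cO(r^*)}-\ket{\cO'(r^*)}}=\sqrt{2\epsilon}$, exceeding $\sqrt{T\epsilon}$ for $T=1$. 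So what your argument actually proves is the bound $2\sqrt{T\epsilon}$; the constant $1$ in the statement (inherited from the phrasing in \cite{BBBV}, where $T$ is total running time rather than query count) is not attainable by this route. This is immaterial for the paper, since every invocation of the lemma (e.g., in the OW2H-compatibility proofs) tolerates constant and even polynomial factors, but as a proof of the literal statement you must either carry the factor $2$ and weaken the conclusion accordingly, or explicitly flag the discrepancy rather than absorb it in the phrase ``bounding the action of the oracle difference on that subspace.''
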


\begin{lemma}
\label{lem:markov-tv}~\cite{ITCS:ABFL16}
Suppose that $\tau \geq 1$, and that $v = (v_0, \ldots, v_\tau)$ is a random variable governed by a Markov distribution.  
That is, for all $1 \leq i \leq \tau$, $v_i$ is independent of $v_0, \ldots, v_{i-2}$ conditioned on a particular value of $v_{i-1}$.  
Let $w = (w_0, \ldots, w_\tau)$ be another random variable governed by a Markov distribution.  
If $\SD(\cdot, \cdot)$ denotes the total variation distance between random variables, then
\begin{align}
\SD(v,w) \;\leq\; 2 \sum_{i=1}^\tau \SD\big( (v_{i-1},v_i), (w_{i-1},w_i) \big).
\end{align}
\end{lemma}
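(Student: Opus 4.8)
The plan is a \emph{hybrid argument} that replaces one transition kernel at a time, interpolating from the law of $v$ to the law of $w$. Since both $v$ and $w$ are Markov chains, their joint laws factor into an initial distribution and transition kernels: write $P_0(x_0) := \Pr[v_0 = x_0]$ and $P_i(x_i \mid x_{i-1}) := \Pr[v_i = x_i \mid v_{i-1} = x_{i-1}]$, so that $\Pr[v = (x_0,\ldots,x_\tau)] = P_0(x_0)\prod_{i=1}^\tau P_i(x_i\mid x_{i-1})$, and define $Q_0, Q_i$ analogously for $w$. For $k \in \{1,\ldots,\tau\}$ let $R_k$ be the distribution on $(x_0,\ldots,x_\tau)$ given by $Q_0(x_0)\bigl(\prod_{i=1}^{k} Q_i(x_i\mid x_{i-1})\bigr)\bigl(\prod_{i=k+1}^\tau P_i(x_i\mid x_{i-1})\bigr)$, and let $R_0$ be the law of $v$; then $R_\tau$ is the law of $w$, and $\SD(v,w) \le \sum_{k=1}^\tau \SD(R_{k-1}, R_k)$ by the triangle inequality.

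Each term is bounded using the data-processing inequality (applying a common Markov channel to two distributions cannot increase their statistical distance). For $k=1$, the distributions $R_0$ and $R_1$ both have the form: (joint law of the first two coordinates) followed by the common channel $x_1 \mapsto (x_2,\ldots,x_\tau)$ implemented by $\prod_{i=2}^\tau P_i$; hence $\SD(R_0,R_1) \le \SD\bigl((v_0,v_1),(w_0,w_1)\bigr)$. For $k \ge 2$, the distributions $R_{k-1}$ and $R_k$ agree on the marginal of $(x_0,\ldots,x_{k-1})$ --- namely the law of $(w_0,\ldots,w_{k-1})$ --- differ only in whether step $k$ uses kernel $P_k$ or $Q_k$, and then share the tail channel $\prod_{i=k+1}^\tau P_i$; hence, averaging over the value $y$ of $x_{k-1}$ drawn from the law of $w_{k-1}$, we get $\SD(R_{k-1},R_k) \le \Exp_{y}\bigl[\SD\bigl(P_k(\cdot\mid y), Q_k(\cdot\mid y)\bigr)\bigr]$.

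The core estimate is to relate this averaged pointwise kernel distance to the pairwise marginal distance $\SD\bigl((v_{k-1},v_k),(w_{k-1},w_k)\bigr)$. Writing out $2\,\SD\bigl((v_{k-1},v_k),(w_{k-1},w_k)\bigr) = \sum_{y,z}\bigl|\Pr[v_{k-1}{=}y]\,P_k(z\mid y) - \Pr[w_{k-1}{=}y]\,Q_k(z\mid y)\bigr|$ and inserting the term $\pm\,\Pr[w_{k-1}{=}y]\,P_k(z\mid y)$, the triangle inequality in the form $|a+b| \ge |b| - |a|$ gives $2\,\Exp_{y}\bigl[\SD\bigl(P_k(\cdot\mid y),Q_k(\cdot\mid y)\bigr)\bigr] \le 2\,\SD\bigl((v_{k-1},v_k),(w_{k-1},w_k)\bigr) + 2\,\SD(v_{k-1},w_{k-1})$, where the second term on the right arises because $\sum_z P_k(z\mid y)=1$. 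Since $\SD(v_{k-1},w_{k-1}) \le \SD\bigl((v_{k-2},v_{k-1}),(w_{k-2},w_{k-1})\bigr)$ by marginalization, summing over $k$ charges each pairwise term $\SD\bigl((v_{i-1},v_i),(w_{i-1},w_i)\bigr)$ at most twice (once from the hybrid step at index $i$ and once from the step at index $i+1$), yielding $\SD(v,w)\le 2\sum_{i=1}^\tau \SD\bigl((v_{i-1},v_i),(w_{i-1},w_i)\bigr)$.

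I expect the only delicate point to be the third paragraph: the mismatch between the marginal laws of $v_{k-1}$ and $w_{k-1}$ forces the extra $\SD(v_{k-1},w_{k-1})$ term, and one must route it into a neighboring term of the target sum so the constant stays at $2$ rather than degrading to $3$. Everything else is routine bookkeeping with the triangle inequality and the data-processing inequality, so I would present those parts tersely (this matches the argument of \cite{ITCS:ABFL16}, which we use here only as a black box).
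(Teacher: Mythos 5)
Your proposal is correct: the kernel-by-kernel hybrid, the data-processing step for the shared tail channel, and the core estimate $\Exp_{y\sim w_{k-1}}[\SD(P_k(\cdot\mid y),Q_k(\cdot\mid y))] \le \SD((v_{k-1},v_k),(w_{k-1},w_k)) + \SD(v_{k-1},w_{k-1})$ all check out, and the final bookkeeping does keep the constant at $2$ (each pairwise term is charged once at index $i$ and once at index $i+1$, with the boundary cases $k=1$ and $i=\tau$ handled correctly). The paper itself gives no proof of this lemma, citing \cite{ITCS:ABFL16} as a black box, and your argument is essentially the standard hybrid/induction proof from that source, so it is a faithful self-contained reconstruction rather than a genuinely different route.
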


\begin{lemma}
\label{thm:trace-dist-and-euclidean-dist}
    Let $\ket{\psi}$ and $\ket{\phi}$ be two pure states such that $\|\ket{\psi} - \ket{\phi}\| \leq \epsilon$. Then 
    $
    \TD\left(\ket{\psi}\!\bra{\psi}, \ket{\phi}\!\bra{\phi}\right) \leq \epsilon.
    $
    Here, $\TD(\rho,\sigma)$ is the trace distance between two states $\rho$ and $\sigma$.
\end{lemma}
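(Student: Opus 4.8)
The plan is to reduce the statement to the standard closed form for the trace distance between two pure states, followed by a one-line elementary inequality; there is no real difficulty here, so the "proof plan" is essentially the proof.

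\textbf{Step 1: the pure-state trace-distance formula.} First I would establish that $\TD(\ket{\psi}\!\bra{\psi},\ket{\phi}\!\bra{\phi}) = \sqrt{1 - |\langle\psi|\phi\rangle|^2}$. To see this, set $\Delta := \ket{\psi}\!\bra{\psi} - \ket{\phi}\!\bra{\phi}$. This operator is Hermitian, traceless, and supported on the at-most-two-dimensional subspace $\mathrm{span}\{\ket{\psi},\ket{\phi}\}$, so its nonzero eigenvalues are $\pm\lambda$ for a single $\lambda \geq 0$, and hence $\TD(\ket{\psi}\!\bra{\psi},\ket{\phi}\!\bra{\phi}) = \tfrac12\|\Delta\|_1 = \lambda$. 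Computing the Hilbert--Schmidt norm, $\Tr(\Delta^2) = 1 - 2|\langle\psi|\phi\rangle|^2 + 1 = 2\bigl(1 - |\langle\psi|\phi\rangle|^2\bigr)$, while $\Tr(\Delta^2) = 2\lambda^2$; comparing gives $\lambda = \sqrt{1 - |\langle\psi|\phi\rangle|^2}$. (Alternatively, one can simply cite this as a textbook fact.)

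\textbf{Step 2: comparison with the Euclidean distance.} Write $z := \langle\psi|\phi\rangle$. Since $\ket{\psi}$ and $\ket{\phi}$ are unit vectors, $\norm{\ket{\psi} - \ket{\phi}}^2 = 2 - 2\,\mathrm{Re}(z)$. The trivial identity $|1 - z|^2 \geq 0$ expands to $1 - 2\,\mathrm{Re}(z) + |z|^2 \geq 0$, i.e.\ $1 - |z|^2 \leq 2 - 2\,\mathrm{Re}(z)$. Chaining these with Step 1 gives $\TD(\ket{\psi}\!\bra{\psi},\ket{\phi}\!\bra{\phi})^2 = 1 - |z|^2 \leq \norm{\ket{\psi} - \ket{\phi}}^2 \leq \eps^2$, and taking square roots yields $\TD(\ket{\psi}\!\bra{\psi},\ket{\phi}\!\bra{\phi}) \leq \eps$, as claimed.

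The only points requiring any care are the derivation (or citation) of the pure-state trace-distance formula in Step 1, and the implicit convention that "pure state" means a normalized vector — it is exactly normalization that makes $\norm{\ket{\psi}-\ket{\phi}}^2 = 2 - 2\,\mathrm{Re}\langle\psi|\phi\rangle$ hold, which is what the bound in Step 2 relies on.
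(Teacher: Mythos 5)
Your proposal is correct and follows essentially the same route as the paper: both rest on the pure-state identity $\TD(\ket{\psi}\!\bra{\psi},\ket{\phi}\!\bra{\phi})=\sqrt{1-|\langle\psi|\phi\rangle|^2}$ together with the expansion $\|\ket{\psi}-\ket{\phi}\|^2=2-2\,\mathrm{Re}\langle\phi|\psi\rangle$. The only (cosmetic) difference is that you chain the inequalities directly via $1-|z|^2\le|1-z|^2+ (1-|z|^2) - (1-2\,\mathrm{Re}(z)+|z|^2) = 2-2\,\mathrm{Re}(z)$, whereas the paper first derives $|\langle\phi|\psi\rangle|\ge 1-\epsilon^2/2$ and then substitutes; your version is marginally tidier, and your derivation of the trace-distance formula (which the paper simply cites) is a harmless addition.
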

\begin{proof}
    We have that
    \begin{align}
        \epsilon^2 &\geq \|\ket{\psi} - \ket{\phi}\|^2 
        = (\bra{\psi} - \bra{\phi}) (\ket{\psi} - \ket{\phi}) 
        = 2 - (\langle \phi | \psi \rangle + \langle \phi | \psi \rangle) \\
        &= 2 - 2 \text{Re} (\langle \phi | \psi \rangle) 
        \geq 2 - 2 |\langle \phi | \psi \rangle|,
    \end{align}
    which can be rearranged to 
    $
        |\langle \phi | \psi \rangle| \geq 1 - \frac{\epsilon^2}{2}.
        $
    By the identity for trace distance of pure states,
    \begin{align}
        \TD(\ket{\psi}\!\bra{\psi}, \ket{\phi}\!\bra{\phi}) = \sqrt{1 - |\langle \psi | \phi \rangle|^2} 
        \le \sqrt{1 - \left(1 - \frac{\epsilon^2}{2} \right)^2} 
        = \sqrt{\epsilon^2 - \frac{\epsilon^4}{4}} 
        \leq \epsilon.
    \end{align}
    This completes the proof.
\end{proof}

\begin{lemma}\label{lem:thetaED-to-TD}
For pure states $\ket{\psi}$ and $\ket{\psi'}$, if 
$\min_\theta \| \ket{\psi} - e^{i\theta}\ket{\psi'} \| \geq\delta$, 
then $\TD(\ket{\psi},\ket{\psi'}) \geq\delta / \sqrt{2}$.
\end{lemma}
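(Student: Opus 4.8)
The plan is to reduce everything to the magnitude of the overlap $|\langle\psi|\psi'\rangle|$, since both quantities in the statement have clean closed forms in terms of it. First I would expand the Euclidean distance: for any $\theta$,
\begin{align}
\|\ket{\psi} - e^{i\theta}\ket{\psi'}\|^2 = 2 - 2\mathrm{Re}\!\left(e^{-i\theta}\langle\psi|\psi'\rangle\right).
\end{align}
Choosing $\theta$ to be the phase of $\langle\psi|\psi'\rangle$ makes $e^{-i\theta}\langle\psi|\psi'\rangle = |\langle\psi|\psi'\rangle|$, and this is clearly the minimizer, so
\begin{align}
\min_\theta \|\ket{\psi} - e^{i\theta}\ket{\psi'}\|^2 = 2 - 2|\langle\psi|\psi'\rangle|.
\end{align}
Hence the hypothesis $\min_\theta \|\ket{\psi} - e^{i\theta}\ket{\psi'}\| \geq \delta$ is equivalent to $2 - 2|\langle\psi|\psi'\rangle| \geq \delta^2$, i.e. $|\langle\psi|\psi'\rangle| \leq 1 - \delta^2/2$.

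Next I would invoke the standard identity for the trace distance of pure states (already used in the proof of \cref{thm:trace-dist-and-euclidean-dist}):
\begin{align}
\TD(\ket{\psi},\ket{\psi'}) = \sqrt{1 - |\langle\psi|\psi'\rangle|^2} = \sqrt{(1 - |\langle\psi|\psi'\rangle|)(1 + |\langle\psi|\psi'\rangle|)}.
\end{align}
Now bound the two factors separately: $1 - |\langle\psi|\psi'\rangle| \geq \delta^2/2$ from the previous paragraph, and $1 + |\langle\psi|\psi'\rangle| \geq 1$ since $|\langle\psi|\psi'\rangle| \geq 0$. Multiplying and taking square roots yields $\TD(\ket{\psi},\ket{\psi'}) \geq \sqrt{\delta^2/2} = \delta/\sqrt{2}$, which is exactly the claim.

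There is really no substantive obstacle here: the whole argument is a short computation once one notices that minimizing over the global phase just replaces $\langle\psi|\psi'\rangle$ by its absolute value. The only point requiring a (trivial) bit of care is the direction of the inequality in the final step — one must peel off the factor $(1 + |\langle\psi|\psi'\rangle|)$ by the crude bound $\geq 1$ rather than trying to use its exact value, since otherwise the dependence on $\delta$ would not come out as cleanly as $\delta/\sqrt 2$. I would keep the write-up to these few lines.
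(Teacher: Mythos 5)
Your proposal is correct and follows essentially the same route as the paper: minimize over the global phase to get $|\langle\psi|\psi'\rangle| \leq 1 - \delta^2/2$, then apply the pure-state identity $\TD = \sqrt{1-|\langle\psi|\psi'\rangle|^2}$. The only cosmetic difference is the final algebra — you factor $1-x^2 = (1-x)(1+x)$ and bound the second factor by $1$, whereas the paper substitutes directly and uses $\delta^2 \leq 2$ to simplify $\delta\sqrt{1-\delta^2/4} \geq \delta/\sqrt{2}$; both finish the proof correctly.
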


\begin{proof}
\begin{align}
  \|\ket{\psi} - e^{i\theta}\ket{\psi'}\| 
    &= \sqrt{2 - 2 \mathsf{Re} \bigl( e^{i\theta} \langle \psi | \psi' \rangle \bigr)} \\
  \min_\theta \|\ket{\psi} - e^{i\theta}\ket{\psi'}\| 
    &= \sqrt{2 - 2|\langle \psi | \psi' \rangle|} \geq\delta .
\end{align}
Thus
$
  1 - \tfrac{\delta^2}{2} \geq|\langle \psi | \psi' \rangle| .
  $
Note that since the RHS is positive, $\delta^2 \leq2$. We can use the bound on the inner product to bound the trace distance.
\begin{align}
  \TD(\ket{\psi},\ket{\psi'}) 
    = \sqrt{1 - |\langle \psi | \psi' \rangle|^2} 
    \geq\sqrt{1 - \left(1 - \tfrac{\delta^2}{2}\right)^2} 
    = \sqrt{\delta^2 - \tfrac{\delta^4}{4}} 
    \geq\delta \sqrt{1 - \tfrac{\delta^2}{4}} 
    \geq\delta / \sqrt{2}.
\end{align}
\end{proof}
\begin{lemma}
\label{lem:ED-to-SD}
For a distribution $\cD$, define a state
    $\ket{\cD}\coloneqq \sum_x \sqrt{\Prr_{\cD}[x]} \ket{x}$,
    where $\Prr_{\cD}[x]$ is the probability that $x$ is sampled from $\cD$.
For any two distributions $\cD$ and $\cD'$,
$
\|\ket{\cD} - \ket{\cD'}\| \leq \sqrt{2\cdot\SD(\cD,\cD')}.
$
\end{lemma}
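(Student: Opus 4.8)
The plan is to directly bound the Euclidean distance $\|\ket{\cD} - \ket{\cD'}\|$ by expanding the squared norm and relating the cross term to the fidelity between the two distributions, then invoke a standard Fuchs--van de Graaf-type inequality. First I would write $\|\ket{\cD}-\ket{\cD'}\|^2 = 2 - 2\sum_x \sqrt{\Prr_{\cD}[x]}\sqrt{\Prr_{\cD'}[x]}$, using that both $\ket{\cD}$ and $\ket{\cD'}$ are unit vectors (since $\sum_x \Prr_{\cD}[x] = 1$) and that the coefficients are real and nonnegative, so the inner product $\langle \cD | \cD'\rangle$ equals the classical fidelity $F(\cD,\cD') := \sum_x \sqrt{\Prr_{\cD}[x]\,\Prr_{\cD'}[x]}$.

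The key inequality is then $F(\cD,\cD') \ge 1 - \SD(\cD,\cD')$, which is the classical analogue of the statement that trace distance upper-bounds one minus fidelity. I would prove it in one line: for each $x$, $\sqrt{p_x q_x} \ge \min(p_x,q_x)$ (since $\sqrt{p_x q_x} \ge \sqrt{\min(p_x,q_x)^2} = \min(p_x,q_x)$), and summing gives $F(\cD,\cD') \ge \sum_x \min(p_x,q_x) = 1 - \SD(\cD,\cD')$, where the last identity is the standard expression of statistical distance as $\SD(\cD,\cD') = 1 - \sum_x \min(p_x,q_x) = \sum_{x: p_x > q_x}(p_x - q_x)$. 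Plugging this into the expansion yields $\|\ket{\cD}-\ket{\cD'}\|^2 \le 2 - 2(1 - \SD(\cD,\cD')) = 2\,\SD(\cD,\cD')$, and taking square roots finishes the argument.

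The whole argument is short and elementary; there is no real obstacle, just the need to get the fidelity bound right. The only mild subtlety is making sure the domains of $\cD$ and $\cD'$ are taken to be a common set (extending by zero probability where needed) so that the sums are over the same index set $x$, and that the nonnegativity of the amplitudes is used to identify the inner product with the fidelity rather than something with cancellation. Everything else is a one-step computation.
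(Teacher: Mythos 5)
Your proof is correct and is essentially the paper's argument in different packaging: the paper bounds $\|\ket{\cD}-\ket{\cD'}\|^2=\sum_x\bigl(\sqrt{\Prr_{\cD}[x]}-\sqrt{\Prr_{\cD'}[x]}\bigr)^2$ termwise by $\sum_x\bigl|\Prr_{\cD}[x]-\Prr_{\cD'}[x]\bigr|=2\,\SD(\cD,\cD')$, which after expanding the square is exactly your fidelity bound $\sum_x\sqrt{\Prr_{\cD}[x]\Prr_{\cD'}[x]}\ge\sum_x\min(\Prr_{\cD}[x],\Prr_{\cD'}[x])=1-\SD(\cD,\cD')$. Both routes rest on the same elementary pointwise inequality, so there is no gap.
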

\begin{proof}
    \begin{align}
        \|\ket{\cD} - \ket{\cD'}\| 
        &= \left\|\sum_x \left(\sqrt{{\rm Pr}_{\cD}[x]} - \sqrt{{\rm Pr}_{\cD'}[x]}\right) \ket{x}\right\|
        = \sqrt{\sum_x \left(\sqrt{{\rm Pr}_{\cD}[x]} - \sqrt{{\rm Pr}_{\cD'}[x]}\right)^2}\\
        &\leq \sqrt{\sum_x \left|\left(\sqrt{{\rm Pr}_{\cD}[x]} - \sqrt{{\rm Pr}_{\cD'}[x]}\right)\left(\sqrt{{\rm Pr}_{\cD}[x]} + \sqrt{{\rm Pr}_{\cD'}[x]}\right)\right|}\\
        &=\sqrt{\sum_x \left|{\rm Pr}_{\cD}[x] - {\rm Pr}_{\cD'}[x]\right|}
        =\sqrt{2\cdot \SD(\cD,\cD')}.
    \end{align}
\end{proof}

\section{Quantum Oracle Queries}
\label{sec:qq}
In this section, we introduce some helpful techniques to analyse algorithms that make quantum queries to oracles.

In \cref{sec:def_query} we review the standard notion of quantum queries to classical oracles. In \cref{sec:def_comp} we show how quantum access to classical oracles whose truth tables are drawn randomly from a product distribution can be simulated using query access to a particular kind of unitary (that we call a compression unitary) by using the compressed oracle technique from \cite{C:Zhandry19}.
In \cref{sec:OW2H}, we show how for (a slightly generalized version of) compression unitaries, distinguishing between two unitaries using quantum queries also allows for finding inputs on which the two unitaries  behave differently. This is proved via a hybrid method, and will be used extensively to indistinguishably switch between two unitaries. We also show a similar theorem for random classical oracles.

\subsection{Defining Quantum Oracle Queries}
\label{sec:def_query}
We define a $q$-query algorithm as an algorithm that makes at most $q$ quantum queries, each of length at most $q$. We also assume that the output of a $q$-query algorithm is of length at most $q$. 

Let $F$ be a function mapping bitstrings to bitstrings. 
Quantum queries to $F$ may be modeled as follows. The querier prepares registers 
$\reg{X}\reg{Y}\reg{Z}$ where $\reg{X}$ contains the query string, $\reg{Y}$ is 
the response register, and $\reg{Z}$ may contain auxiliary information. 
A query consists of applying the unitary
\begin{align}
  \sum_x \ket{x}\!\bra{x}_{\reg{X}} \otimes X^{F(x)}_{\reg{Y}} .
\end{align}
Equivalently, we may initialize a register $\reg{D} = \bigotimes_{x \in \{0,1\}^*} \reg{D}_x$ 
to the truth table of $F$. Formally, $\reg{D}$ is initialized to 
$\bigotimes_x \ket{F(x)}_{\reg{D}_x}$. A query may now be modeled as
\begin{align}
  \sum_x \ket{x}\!\bra{x}_{\reg{X}} \otimes \mathsf{CNOT}_{\reg{D}_x \reg{Y}} ,
\end{align}
where $\mathsf{CNOT}_{\reg{D}_x \reg{Y}}$ maps $\ket{s_1}_{\reg{D}_x}\ket{s_2}_\reg{Y}$ to $\ket{s_1}_{\reg{D}_x}\ket{s_2 \oplus s_1}_\reg{Y}$. Note that the querier does not have direct access to $\reg{D}$.

\subsection{Compressed Oracles for General Product Distributions}
\label{sec:def_comp}
Let $\{\cD_x\}_{x \in \{0,1\}^*}$ be a collection of distributions over strings. \takashi{Is the length of $x$ unbounded? In that case, are we considering infinite dimensional space below?} \kabir{since we only care about query efficient it doesn't matter what the bound is, but in all our use cases there is a bound.}
Let $\cD$ be the product distribution induced by these distributions, i.e., $\cD := \bigotimes_x \cD_x$. 
We say an oracle $\cO$ is sampled from $\cD$ when the $x$-th row of its truth table 
is sampled independently from $\cD_x$. In other words, the truth table of $\cO$ is sampled from $\cD$.

We define the notion of the compression unitary for a product distribution as follows.
\begin{definition}[Compression Unitary]
\label{def:comp_U}
    Let $\{\cD_x\}_{x\in\bit^*}$ be a collection of distributions over strings.
    Let $\cD:=\bigotimes_x \cD_x$ be the product distribution induced by $\{\cD_x\}_{x\in\bit^*}$.
    Then, the compression unitary $U$ for $\cD$ is 
    \begin{align}
        U:= \sum_x |x\rangle\langle x| \otimes U_x,
    \end{align}
    where 
    \begin{align}
        U_x := |\cD_x\rangle\langle\bot| _{\regD_x} + |\bot\rangle\langle \cD_x|_{\regD_x} + (I-|\bot\rangle\langle\bot|-|\cD_x\rangle\langle\cD_x|)_{\regD_x}.
    \end{align}
  Here,  
  $\ket{\cD_x} :=\sum_z \sqrt{{\rm Pr}_{\cD_x}[z]} \ket{z}$
and $\Pr_{\cD_x}[z]$ is the probability that $z$ is sampled from $\cD_x$.
\end{definition}

Next, we show that quantum queries to oracles drawn from a product distribution can be simulated using quantum access to the corresponding compression unitary using a technique called the compressed oracle, which was first introduced in \cite{C:Zhandry19}.
\begin{definition}[Compressed Oracle]
The compressed oracle $\mathsf{CStO}_\cD$ is a method by which quantum queries to a randomly sampled classical oracle can be statefully simulated. 
    Let $\{\cD_x\}_{x\in\bit^*}$ be a collection of distributions over strings.
    Let $\cD:=\bigotimes_x \cD_x$ be the product distribution induced by $\{\cD_x\}_{x\in\bit^*}$.
The database register $\regD:=\bigotimes_x\regD_x$ is private to the simulation and is initialized as $|\bot\rangle_\regD:=\bigotimes_x|\bot\rangle_{\regD_x}$.
    A query to the compressed oracle $\mathsf{CStO}_\cD$ is implemented by applying the following unitary to the querier's query register $\reg{X}$, response register $\reg{Y}$, and the private database register $\reg{D}$: 
    \begin{align}
        \sum_x \ket{x}\!\bra{x}_{\reg{X}} \otimes (U_x \circ \mathsf{CNOT}_{\reg{D}_x \reg{Y}} \circ U_x),
    \end{align}
    where $U = \sum_x\ket{x}\!\bra{x}\otimes U_x$ is the compression unitary for $\cD$ as defined in \cref{def:comp_U}.
\end{definition}
We note that $\mathsf{CStO}_\cD$ can be implemented using $U$ in a query-efficient way. In particular, the simulation makes two queries to $U$ per $\mathsf{CStO}_\cD$ query.

\begin{theorem}
\label{thm:comp-oracle}
For any adversary $\adv$,
\begin{align}
  \Pr[\adv^{\mathsf{CStO}_\cD} = 1] = \Prr_{O \leftarrow \cD}[\adv^O = 1].
\end{align}
\end{theorem}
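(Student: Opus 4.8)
The plan is to follow the ``purified oracle'' route from the compressed oracle technique, adapted to a general product distribution $\cD=\bigotimes_x\cD_x$, and to introduce one intermediate simulation that interpolates between $\mathsf{CStO}_\cD$ and the truly random oracle. Call this intermediate the \emph{purified standard oracle} $\mathsf{StO}_\cD$: it uses the same private register $\reg{D}=\bigotimes_x\reg{D}_x$, but initializes it to $\bigotimes_x\ket{\cD_x}_{\reg{D}_x}$ instead of $\bigotimes_x\ket{\bot}_{\reg{D}_x}$, and it answers each query by the plain unitary $\sum_x\ket{x}\!\bra{x}_{\reg{X}}\otimes\mathsf{CNOT}_{\reg{D}_x\reg{Y}}$ (i.e., the $\mathsf{CStO}_\cD$ query with the sandwiching $U_x$'s removed). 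The theorem then reduces to two claims: (1) $\Pr[\adv^{\mathsf{StO}_\cD}=1]=\Pr_{O\leftarrow\cD}[\adv^O=1]$, and (2) $\Pr[\adv^{\mathsf{CStO}_\cD}=1]=\Pr[\adv^{\mathsf{StO}_\cD}=1]$.

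For Claim (1) I would invoke deferred measurement. The adversary never touches $\reg{D}$, and every $\mathsf{StO}_\cD$ query unitary is diagonal in the computational basis of $\reg{D}$ (it is a $\mathsf{CNOT}$ controlled on $\reg{D}_x$), so a computational-basis measurement of $\reg{D}$ commutes past the entire execution and may be performed at the very start rather than (harmlessly) after $\adv$ has produced its output. Measuring $\bigotimes_x\ket{\cD_x}_{\reg{D}_x}$ in the computational basis yields each row $x$ equal to $z$ with probability $\Pr_{\cD_x}[z]$, independently over $x$ (using $\|\ket{\cD_x}\|^2=\sum_z\Pr_{\cD_x}[z]=1$), i.e., exactly a truth table $O\leftarrow\cD$; conditioned on that outcome the residual process is precisely $\adv^O$. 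Averaging over $O$ gives the claim.

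For Claim (2) I would exhibit a change of basis on $\reg{D}$ that carries $\mathsf{CStO}_\cD$ to $\mathsf{StO}_\cD$. Let $V:=\bigotimes_xU_x$ act on $\reg{D}$, with $U_x=\ket{\cD_x}\!\bra{\bot}+\ket{\bot}\!\bra{\cD_x}+(I-\ket{\bot}\!\bra{\bot}-\ket{\cD_x}\!\bra{\cD_x})_{\reg{D}_x}$ the per-coordinate unitary of \cref{def:comp_U}. Because $\bot\notin\mathrm{supp}(\cD_x)$ we have $\langle\bot|\cD_x\rangle=0$, so each $U_x$ is a genuine involution swapping the orthogonal unit vectors $\ket{\bot}$ and $\ket{\cD_x}$; hence $U_x^2=I$ and $V$ is unitary. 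Two facts follow: $V\bigotimes_x\ket{\bot}_{\reg{D}_x}=\bigotimes_x\ket{\cD_x}_{\reg{D}_x}$, so $V$ sends the initial database of $\mathsf{CStO}_\cD$ to that of $\mathsf{StO}_\cD$; and $V\big(\sum_x\ket{x}\!\bra{x}\otimes U_x\mathsf{CNOT}_{\reg{D}_x\reg{Y}}U_x\big)V^\dagger=\sum_x\ket{x}\!\bra{x}\otimes\mathsf{CNOT}_{\reg{D}_x\reg{Y}}$, since within the $x$-th term every $U_y$ with $y\neq x$ acts on registers disjoint from $U_x$ and $\mathsf{CNOT}_{\reg{D}_x\reg{Y}}$ and therefore cancels against its adjoint, leaving $U_xU_x\,\mathsf{CNOT}_{\reg{D}_x\reg{Y}}\,U_xU_x=\mathsf{CNOT}_{\reg{D}_x\reg{Y}}$. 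As $V$ also commutes with every local operation of $\adv$ and with the final output measurement (all acting on registers disjoint from $\reg{D}$), the whole $\mathsf{CStO}_\cD$ execution unitary factors as $V^\dagger W V$, where $W$ is the $\mathsf{StO}_\cD$ execution unitary, so the two games produce identical output distributions. Chaining Claims (1) and (2) gives the theorem.

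Two bookkeeping points for the write-up. If some $\cD_x$ has unbounded-length support one is nominally working with a countably infinite tensor product; but a $q$-query adversary only ever acts on the registers $\reg{D}_x$ with $|x|\le q$, so one may restrict $\reg{D}$ to that finite sub-collection without changing any probability and no infinite-dimensional issue actually arises. I expect the step needing the most care is Claim (1): making the deferred-measurement argument precise (that the $\reg{D}$-measurement truly commutes to the front, and that $\adv$'s output law is unchanged by a terminal measurement of a register it never accessed), together with the verification that each $U_x$ is unitary, which is exactly where $\bot\notin\mathrm{supp}(\cD_x)$ is used.
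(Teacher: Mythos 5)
Your proof is correct, and it follows the same route the paper relies on: the paper's proof of \cref{thm:comp-oracle} is simply a citation to Lemma~4 of Zhandry's compressed-oracle paper ``with a minimal modification,'' and your purification-plus-change-of-basis argument (the intermediate $\mathsf{StO}_\cD$, deferred measurement of $\reg{D}$, and conjugation by $V=\bigotimes_x U_x$ using that each $U_x$ is a Hermitian involution) is exactly that argument adapted to general product distributions. In effect you have written out the details the paper leaves implicit, including the two points genuinely needing care (unitarity of $U_x$ via $\langle\bot|\cD_x\rangle=0$, and the commutation of the $\reg{D}$-measurement past the controlled queries), so nothing is missing.
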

\begin{proof} Follows from the proof of Lemma 4 in \cite{C:Zhandry19} with a \kabir{no "a"?} minimal modification.
\end{proof}

\subsection{Bounds on Distinguishing Oracles}
Here we show some bounds on distinguishing oracles. First we look at any pair of unitaries that take as input a register $\reg{X}$ and a target register, and on any computational basis state $\ket{x}_\reg{X}$ apply an operation that swaps between $\ket{\bot}$ and some state $\ket{\phi_x}$ to the target register. We note that all compression unitaries (\cref{def:comp_U}) are of this form. We show that the ability to distinguish between unitaries of this form allows for finding strings $x$ where the corresponding $\ket{\phi_x}$ states for the unitaries are far. This is proved via a hybrid argument, and will be used extensively to indistinguishably switch between two unitaries. 

\label{sec:OW2H}
\begin{theorem}
\label{thm:ow2h-comp}
For the collection of states $\{\ket{\varphi_x}\}_x$ and $\{\ket{\varphi'_x}\}_x$ define
\begin{align}
    U_x &:= \ket{\varphi_x}\!\bra{\bot}_{\reg{D}_x} + \ket{\bot}\!\bra{\varphi_x}_{\reg{D}_x}
  + \big(\bbI - \ket{\bot}\!\bra{\bot}_{\reg{D}_x} - \ket{\varphi_x}\!\bra{\varphi_x}_{\reg{D}_x}\big)\label{U_x}\\
  U'_x &:= \ket{\varphi'_x}\!\bra{\bot}_{\reg{D}_x} + \ket{\bot}\!\bra{\varphi'_x}_{\reg{D}_x}
  + \big(\bbI - \ket{\bot}\!\bra{\bot}_{\reg{D}_x} - \ket{\varphi'_x}\!\bra{\varphi'_x}_{\reg{D}_x}\big).\label{U_x'}
\end{align}
Also define
\begin{align}
  U &:= \sum_x \ket{x}\!\bra{x}_{\reg{X}} \otimes U_x\\
  U' &:= \sum_x \ket{x}\!\bra{x}_{\reg{X}} \otimes U'_x.
\end{align}
For any $q$-query adversary $\adv$, 
let $\advB$ be the algorithm that samples $i \gets [q]$ and runs $\adv$ up to just before 
the $i$-th query, then measures the query register in the computational basis 
and returns the measurement output. Let $\Delta := \|\ket{\psi}- \ket{\psi'}\|$ 
where $\ket{\psi}$ and $\ket{\psi'}$ are the purified final states of $\adv^U$ and $\adv^{U'}$ 
respectively. Then
\begin{align}
  \Exp_{x \leftarrow \cB^U} 
  \big[ \|\ket{\varphi_x} - \ket{\varphi'_x}\| \big] 
  \geq\frac{\Delta^2}{32 q^2}.
\end{align}
\end{theorem}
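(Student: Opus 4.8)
The plan is a $q$-term hybrid argument in the spirit of~\cite{BBBV}, carried out entirely at the level of purified states so that the final average comes out over the all-$U$ execution that defines $\cB$. For $j\in\{0,1,\dots,q\}$, let $H_j$ be the experiment that runs $\adv$ but answers its first $j$ oracle calls with $U$ and its remaining $q-j$ calls with $U'$, all kept unitary, and let $\ket{\psi_j}$ be the resulting purified final state (each $U_x,U'_x$ is a genuine reflection-type unitary, as it is in all of our instantiations, so the $H_j$ are well defined). Then $\ket{\psi_0}=\ket{\psi'}$ and $\ket{\psi_q}=\ket{\psi}$, so by the triangle inequality and Cauchy--Schwarz,
\begin{align}
\Delta^2 = \| \ket{\psi_q} - \ket{\psi_0} \|^2 \;\le\; \Big( \sum_{j=1}^{q} \| \ket{\psi_j} - \ket{\psi_{j-1}} \| \Big)^{2} \;\le\; q \sum_{j=1}^{q} \| \ket{\psi_j} - \ket{\psi_{j-1}} \|^{2}.
\end{align}

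So it suffices to bound each hybrid step. First I would note that $H_{j-1}$ and $H_j$ run identically up through all of $\adv$'s internal operations preceding its $j$-th oracle call; call the common state at that point $\ket{\xi_j}$. Because the first $j-1$ calls in both hybrids are already answered by $U$, this $\ket{\xi_j}$ is \emph{exactly} the state of $\adv^U$ just before its $j$-th query --- i.e.\ exactly the state that $\cB$ measures in the computational basis when it samples index $j$. After the $j$-th call the two hybrids apply one and the same unitary tail $V_j$ (calls $j{+}1,\dots,q$ answered by $U'$, together with $\adv$'s remaining internal unitaries and its final purification map), so $\ket{\psi_j}=V_j U\ket{\xi_j}$ and $\ket{\psi_{j-1}}=V_j U'\ket{\xi_j}$, and hence $\| \ket{\psi_j} - \ket{\psi_{j-1}} \| = \| (U-U')\ket{\xi_j} \|$.

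Next I would decompose along the control register. Writing $\ket{\xi_j}=\sum_x \ket{x}_{\reg{X}}\otimes\ket{\eta_{x,j}}$ with (subnormalized) $\ket{\eta_{x,j}}$, orthogonality of the $\ket{x}$ gives $\| (U-U')\ket{\xi_j} \|^{2} = \sum_x \| (U_x-U'_x)\ket{\eta_{x,j}} \|^{2}$, and $q_x^{(j)}:=\| \ket{\eta_{x,j}} \|^{2}$ is precisely the weight that the $j$-th query of $\adv^U$ places on $x$, so that $\Pr[\cB^U \text{ outputs } x] = \tfrac1q\sum_{j=1}^{q} q_x^{(j)}$. For the single-$x$ estimate I would use the identity
\begin{align}
U_x - U'_x = (\ket{\varphi_x} - \ket{\varphi'_x})\bra{\bot} + \ket{\bot}(\bra{\varphi_x} - \bra{\varphi'_x}) - \big( \ket{\varphi_x}\!\bra{\varphi_x} - \ket{\varphi'_x}\!\bra{\varphi'_x} \big),
\end{align}
the triangle inequality, and $\| \ket{a}\!\bra{a} - \ket{b}\!\bra{b} \|_{\mathrm{op}} = \sqrt{1 - |\langle a|b\rangle|^2} \le \| \ket{a} - \ket{b} \|$ for unit vectors, which together give $\| U_x-U'_x \|_{\mathrm{op}} \le 3\| \ket{\varphi_x} - \ket{\varphi'_x} \|$; combining this with the trivial bound $\| \ket{\varphi_x} - \ket{\varphi'_x} \|\le 2$ yields $\| (U_x-U'_x)\ket{\eta_{x,j}} \|^{2} \le C\,\| \ket{\varphi_x} - \ket{\varphi'_x} \|\, q_x^{(j)}$ for an absolute constant $C$, where a little slack in the rank-one estimates makes $C\le 32$ (tighter bookkeeping in fact gives $C=18$).

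Summing over $x$ and then over $j$ turns this into $\sum_{j=1}^{q} \| \ket{\psi_j} - \ket{\psi_{j-1}} \|^{2} \le C\sum_{j=1}^{q}\sum_x \| \ket{\varphi_x} - \ket{\varphi'_x} \|\, q_x^{(j)} = Cq\cdot\Exp_{x\leftarrow\cB^U}\big[\| \ket{\varphi_x} - \ket{\varphi'_x} \|\big]$, which plugged into the first display gives $\Delta^2 \le Cq^{2}\cdot\Exp_{x\leftarrow\cB^U}[\| \ket{\varphi_x} - \ket{\varphi'_x} \|]$, and rearranging finishes the proof. The one place that needs care --- and the step I expect to be the main (conceptual) obstacle --- is the identification in the second paragraph: checking that the pre-$j$-th-query state shared by the two adjacent hybrids really is the same state that $\cB$ measures in the pure $U$-execution, since that is exactly what makes the final average be over $\cB^U$ rather than over some intermediate hybrid. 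The telescoping, the Cauchy--Schwarz, and the operator-norm estimate on $U_x-U'_x$ are all routine.
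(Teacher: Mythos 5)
Your proposal is correct and follows essentially the same route as the paper's proof: the same hybrid argument with the shared pre-query state identified as the state of $\adv^U$ before its $i$-th query, the same decomposition over the control register, and a bound on $U_x-U'_x$ in terms of $\|\ket{\varphi_x}-\ket{\varphi'_x}\|$ (your operator-norm constant $3$ versus the paper's $2\sqrt{2}$ applied directly to the state). The only differences are cosmetic bookkeeping — Cauchy--Schwarz up front instead of Jensen at the end — and your constant $18\le 32$ is consistent with the stated bound.
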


\begin{proof}
Consider the following sequence of hybrids $\{\cH_i\}_{i\in[0,q]}$, where $\cH_i$ consists of running $\adv$, answering the first $i$ queries using $U$, and answering the remaining queries using $U'$. 
Let the (purified) final state of $\adv$ in $\cH_i$ be $\ket{\psi_i}$. Note that $\ket{\psi_0} = \ket{\psi'}$ and $\ket{\psi_q} = \ket{\psi}$.
By the triangle inequality,
\begin{align}
  \sum_{i=1}^q \|\ket{\psi_i} - \ket{\psi_{i-1}}\| \geq\Delta.
\end{align}
Define $\varepsilon_i := \| \ket{\psi_i} - \ket{\psi_{i-1}} \|$. 
The state just before the $i$-th query in $\cH_i$ and $\cH_{i-1}$ is identical since both hybrids are identical until the $i$-th query. Let this state be $\ket{\widetilde{\psi_i}}$. If we write the operation of $\adv$ between queries as the unitary $A$ and consider the initial state to be $\ket{0}$ without loss of generality then
\begin{align}
  \ket{\widetilde{\psi}_i} = (AU)^{i-1} A \ket{0}.
\end{align}
The final state in $\cH_i$ is $(AU')^{q-i} A (U \ket{\widetilde{\psi}_i})$ 
and in $\cH_{i-1}$ it is $(AU')^{q-i} A (U' \ket{\widetilde{\psi}_i})$ .  Thus,

\begin{align}
  \varepsilon_i &= \|\ket{\psi_i} - \ket{\psi_{i-1}}\|  
  = \|(AU')^{q-i} A (U \ket{\widetilde{\psi}_i}) - (AU')^{q-i} A (U' \ket{\widetilde{\psi}_i})\|\\
  &= \|U \ket{\widetilde{\psi}_i} - U' \ket{\widetilde{\psi}_i}\|
  = \|(U - U') \ket{\widetilde{\psi}_i}\|.
\end{align}
We can decompose $\ket{\widetilde{\psi}_i}$ into
\begin{align}
  \ket{\widetilde{\psi}_i} = \sum_x \sqrt{p_x}\ket{x}\ket{\phi_x},
\end{align}
where the first register is the query register and $p_x$ is the probability that measuring the query register in the computational basis returns $x$. Applying $U$ gives
\begin{align}
  U\ket{\widetilde{\psi}_i} = \sum_x \sqrt{p_x}\ket{x} \otimes U_x \ket{\phi_x},
\end{align}
while applying $U'$ gives
\begin{align}
  U'\ket{\widetilde{\psi}_i} = \sum_x \sqrt{p_x}\ket{x} \otimes U'_x \ket{\phi_x}.
\end{align}
Thus
\begin{align}
 \label{eq:ow2h-helper}   
  \varepsilon_i^2 = \|(U - U')\ket{\widetilde{\psi}_i}\|^2 
  = \sum_x p_x \| (U_x - U'_x)\ket{\phi_x} \|^2.
\end{align}
Let $\ket{\xi_x} := \ket{\varphi_x} - \ket{\varphi'_x}$ be an un-normalized state. 
By the definition of $U_x$ and $U'_x$, \cref{U_x,U_x'}, and
taking their difference,
\begin{align}
  U_x - U'_x &= \ket{\xi_x}\!\bra{\bot} + \ket{\bot}\!\bra{\xi_x} + (\ket{\varphi'_x}\!\bra{\varphi'_x} - \ket{\varphi_x}\!\bra{\varphi_x})\\
  &= \ket{\xi_x}\!\bra{\bot} + \ket{\bot}\!\bra{\xi_x} + \Big(\ket{\varphi'_x}(\bra{\varphi_x} - \bra{\xi_x}) - (\ket{\varphi'_x} + \ket{\xi_x})\bra{\varphi_x}\Big)\\
  &= \ket{\xi_x}\!\bra{\bot} + \ket{\bot}\!\bra{\xi_x} -\ket{\varphi'_x}\!\bra{\xi_x} - \ket{\xi_x}\!\bra{\varphi_x}.
\end{align}
Applying the operator to $\ket{\phi_x}$ and noting that $\ket{\varphi_x}$ and $\ket{\varphi'_x}$ are not supported on $\bot$,
\begin{align}
  \|(U_x - U'_x)\ket{\phi_x}\| 
    &= \left\| \left(\ket{\xi_x}\!\bra{\bot} + \ket{\bot}\!\bra{\xi_x} -\ket{\varphi'_x}\!\bra{\xi_x} - \ket{\xi_x}\!\bra{\varphi_x}\right)\ket{\phi_x}\right\| \\
    &=\left\| \left(\sqrt{2}\cdot\ket{\xi_x}\tfrac{\bra{\bot} - \bra{\varphi_x}}{\sqrt{2}}  + \sqrt{2}\cdot\tfrac{\ket{\bot} - \ket{\varphi'_x}}{\sqrt{2}}\bra{\xi_x}\right)\ket{\phi_x}\right\| \\
    &\leq2\sqrt{2} \|\ket{\xi_x}\| 
    =2\sqrt{2} \|\ket{\varphi_x} - \ket{\varphi'_x}\|.
\end{align}
Plugging this bound back into \eqref{eq:ow2h-helper},
\begin{align}
\varepsilon_i^2 \leq\sum_x p_x \cdot 8 \cdot \|\ket{\varphi_x}-\ket{\varphi'_x}\|^2
   \leq 8 \cdot \Exp_x[ \|\ket{\varphi_x}-\ket{\varphi'_x}\|^2] ,
\end{align}
where $x$ is sampled by measuring the query register of $\ket{\widetilde{\psi}_i}$, which is exactly the distribution output by $\cB^U$ conditioned on sampling $i$.
Therefore,
\begin{align}
\Exp_{x \leftarrow \cB^U} \left[\|\ket{\varphi_x}-\ket{\varphi'_x}\| \right]
   &\geq \frac{1}{4}\cdot\Exp_{x \leftarrow \cB^U} \left[\|\ket{\varphi_x}-\ket{\varphi'_x}\|^2 \right]
   \geq\frac{1}{4}\cdot\Exp_{i \leftarrow [q]} \left[ \frac{\varepsilon_i^2}{8} \right] \\
   &\geq \frac{1}{32}\cdot \Exp_{i \leftarrow [q]}\left[\varepsilon_i \right]^2
   =\frac{1}{32}\left( \frac{\sum_i \varepsilon_i}{q}\right)^2
   \ge\frac{\Delta^2}{32q^2}.
\end{align}
\end{proof}
Next, we show a similar theorem for random classical oracles. More specifically, distinguishing between oracles drawn from two different product distributions allows for finding indices for which the corresponding sub-distributions are far.
\begin{theorem}
\label{thm:ow2h-dist}
Let $\{\cD_x\}_{x \in \{0,1\}^*}$ and
$\{\cD_x'\}_{x \in \{0,1\}^*}$ 
be collections of distributions over strings. 
Let $\cD\coloneq\bigotimes_x \cD_x$ and $\cD'\coloneqq\bigotimes_x \cD_x'$ be the product distributions induced by them.
Let $U$ and $U'$ 
be the corresponding compression unitaries for $\cD$ and $\cD'$, respectively. (For the definition of compression unitaries, see \cref{def:comp_U}.) For any $q$-query adversary $\adv$, 
let $\advB$ be the algorithm that samples $i \gets [q]$ and runs $\adv$ up to just before 
the $i$-th query, then measures the query register in the computational basis 
and returns the measurement output. Let $\Delta := \|\ket{\psi}- \ket{\psi'}\|$ 
where $\ket{\psi}$ and $\ket{\psi'}$ are the purified final states of $\adv^U$ and $\adv^{U'}$ 
respectively. Then
\begin{align}
  \Exp_{x \leftarrow \cB^U} 
  \big[ \SD(\cD_x, \cD'_x) \big] 
  \geq\frac{\Delta^2}{16 q^2}.
\end{align}
\end{theorem}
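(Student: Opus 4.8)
The plan is to mirror the proof of \cref{thm:ow2h-comp} almost verbatim, specializing the ``target states'' of the two unitaries to the states $\ket{\cD_x}$ and $\ket{\cD'_x}$, and then converting the resulting Euclidean-distance bound into a statistical-distance bound via \cref{lem:ED-to-SD}. Concretely, the first observation is that the compression unitaries $U$ and $U'$ for $\cD$ and $\cD'$ (\cref{def:comp_U}) are exactly of the form appearing in \cref{thm:ow2h-comp}, with $\ket{\varphi_x}=\ket{\cD_x}$ and $\ket{\varphi'_x}=\ket{\cD'_x}$; note these states are supported only on strings and never on the symbol $\bot$, as required there.

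Next I would set up the same hybrids $\{\cH_i\}_{i\in[0,q]}$, where $\cH_i$ answers the first $i$ queries of $\adv$ with $U$ and the remaining queries with $U'$, let $\ket{\psi_i}$ be the purified final state of $\adv$ in $\cH_i$ (so $\ket{\psi_0}=\ket{\psi'}$ and $\ket{\psi_q}=\ket{\psi}$), and set $\varepsilon_i:=\|\ket{\psi_i}-\ket{\psi_{i-1}}\|$, so that $\sum_{i}\varepsilon_i\geq\Delta$ by the triangle inequality. Running the computation inside the proof of \cref{thm:ow2h-comp} line for line — in particular the decomposition $\ket{\widetilde{\psi}_i}=\sum_x\sqrt{p_x}\ket{x}\ket{\phi_x}$ of the common pre-$i$-th-query state and the operator-norm estimate on $(U_x-U'_x)\ket{\phi_x}$ — yields the intermediate inequality $\varepsilon_i^2\leq 8\,\Exp_x\!\left[\|\ket{\cD_x}-\ket{\cD'_x}\|^2\right]$, where $x$ is sampled by measuring the query register of $\ket{\widetilde{\psi}_i}$, i.e. exactly the distribution of $\cB^U$ conditioned on having sampled this $i$.

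The only genuinely new step is the conversion from Euclidean distance to statistical distance: \cref{lem:ED-to-SD} gives $\|\ket{\cD_x}-\ket{\cD'_x}\|^2\leq 2\,\SD(\cD_x,\cD'_x)$, hence $\varepsilon_i^2\leq 16\,\Exp_x\!\left[\SD(\cD_x,\cD'_x)\right]$. Averaging over $i\gets[q]$ turns the right-hand side into $16\,\Exp_{x\leftarrow\cB^U}\!\left[\SD(\cD_x,\cD'_x)\right]$, while Jensen's inequality together with the triangle bound gives $\Exp_{i\gets[q]}[\varepsilon_i^2]\geq\left(\tfrac1q\sum_i\varepsilon_i\right)^2\geq\Delta^2/q^2$. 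Combining the two yields $\Exp_{x\leftarrow\cB^U}\!\left[\SD(\cD_x,\cD'_x)\right]\geq\Delta^2/(16q^2)$, as claimed.

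There is essentially no technical obstacle beyond bookkeeping, since the hard part — the hybrid argument and the $(U_x-U'_x)$ operator computation — is already carried out for \cref{thm:ow2h-comp}. The one subtlety worth flagging is that one should \emph{not} simply invoke \cref{thm:ow2h-comp} as a black box: it reports only $\Exp_{x\leftarrow\cB^U}[\|\ket{\cD_x}-\ket{\cD'_x}\|]$, and squaring that (to apply \cref{lem:ED-to-SD}) would cost an extra Jensen loss and give only a $\Delta^4/q^4$-type bound. To obtain the stated $\Delta^2/(16q^2)$ one must reuse the \emph{internal} squared inequality $\varepsilon_i^2\leq 8\,\Exp_x[\|\ket{\cD_x}-\ket{\cD'_x}\|^2]$ and insert the $\SD$-conversion before taking expectations; this is also where the improved constant (compared to the $32q^2$ in \cref{thm:ow2h-comp}) comes from, as \cref{lem:ED-to-SD} absorbs the squared Euclidean distance directly into $\SD$ without the extra factor lost in the quantum trace-distance comparison.
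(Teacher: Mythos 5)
Your proposal is correct and takes essentially the same route as the paper: the paper's proof also reuses the internal hybrid computation of \cref{thm:ow2h-comp} and applies \cref{lem:ED-to-SD} at the per-$x$ level (stated there as $\|(U_x-U'_x)\ket{\phi_x}\|\le 4\sqrt{\SD(\cD_x,\cD'_x)}$) before averaging over $i$, which is exactly your insertion of the $\SD$-conversion ahead of the expectations and gives the same $\Delta^2/(16q^2)$ bound.
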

\begin{proof}
    Identical to the proof of \cref{thm:ow2h-comp} except that after obtaining
    \begin{align}
    \|(U_x - U'_x)\ket{\phi_x}\| 
    \leq2\sqrt{2} \|\ket{\varphi_x} - \ket{\varphi'_x}\|
    \end{align}
    we use Lemma \ref{lem:ED-to-SD} to additionally obtain
    \begin{align}
    \|(U_x - U'_x)\ket{\phi_x}\| 
    \leq4 \sqrt{\SD(\cD_x,\cD'_x)}.
    \end{align}
\end{proof}

\section{Indistinguishability Obfuscation and One-Way Permutations Relative to OW2H-Compatible Oracles}
\label{sec:io}
In this section, we show the existence of a random classical oracle $S$ that allows us to instantiate iO and OWPs that remain secure in the presence of a class of unitary oracles we call "one-way to hiding (OW2H) compatible" unitaries. First, in \cref{sec:punc}, we define this class of unitaries.
Then in \cref{sec:def_oracle_iOWP}, we define the classical oracle $S$.
Finally, in \cref{subsec:io-owp-from-s}, we show that $S$ instantiates iO and OWP that remain secure in the presence of any OW2H-compatible unitaries.
\subsection{OW2H-Compatibility}
\label{sec:punc}
One-way to hiding theorems \cite{EC:Unruh12} relate the probability of distinguishing oracles to the probability of finding points where the oracles differ. We adapt this concept to define a class of oracles which satisfy a strong form of one-way to hiding. Specifically, for a set of classical oracles $\bbO$ and a collection of unitaries $\{W_O\}_{O\in\bbO}$, we will frequently need to argue that for $O_1, O_2 \in \bbO$, $(O_1, W_{O_1})$ and $(O_2, W_{O_2})$ are indistinguishable given query access, where $O_1$ and $O_2$ only differ on some hard to find inputs. One-way to hiding theorems in the literature readily allow showing claims of this sort in the absence of the second unitary oracle. We will say that $\{W_O\}_{O\in\bbO}$ is one-way to hiding (OW2H) compatible for $\bbO$ if such claims hold even in the presence of the unitary oracle. We also require $O$ to be computable by querying $W_O$ for the sake of notational convenience, and will often only provide access to $W_O$ instead of $(O, W_O)$. 

\begin{definition}[OW2H-Compatibility]
\label{def:punc}
Let $\bbO$ be a collection of classical oracles and let $W := \{W_O\}_{O \in \bbO}$ be a collection of unitaries. We say that $W$ is OW2H-compatible for $\bbO$ if:

\begin{enumerate}
    \item $O$ can be exactly computed by making a single query to $W_O$. \takashi{I'm wondering where we used this condition.}
    \item For all functions $q:\N\to\N$ there exists a $\poly(q(\lambda))$-query \alper{What is $q$ here, i think this should be something like: for every poly q there is a polynomial q' such that for a $q'(\lambda)$ query algorithm}\kabir{better?} algorithm $\cB$ and a constant $c \geq 0$ such that for all $O, O' \in \bbO$ and 
    for any $q(\lambda)$-query adversaries $\cA$,
    \begin{align}
        \left\| \ket{\psi} - \ket{\psi'} \right\| \leq \poly(q(\lambda)) \cdot \left(\Prr[ \cB^{W_O, \cA}(1^\lambda) \in T]\right)^{c},
    \end{align}
    where $\ket{\psi}$ and $\ket{\psi'}$ are the purified final states of $\cA^{W_O}(1^\lambda)$ and $\cA^{W_{O'}}(1^\lambda)$ respectively, and $T := \{ x : O(x) \neq O'(x) \}$. We will usually drop the notation for query access to $\cA$ from $\cB$ when obvious from context.
\end{enumerate}
\end{definition}
We will use the following property of OW2H-compatibility extensively.
\begin{theorem}
    \label{thm:ow2h-punc-full}
    Let $\bbO$ be a collection of functions and let $W := \{ W_O \}_{O \in \bbO}$ be a collection of unitaries where $W$ is OW2H-compatible for $\bbO$ with constant $c \in (0,1)$. Let $(O_1, O_2, z)$ be jointly distributed random variables where $O_1, O_2 \in \bbO$ and $z$ is some classical side information. 
    Then for all functions $q:\N\to\N$, for any $q(\lambda)$-query adversaries \alper{same as above} \kabir{fine?} $\cA$ that outputs a quantum state, and for all measurements $\mathcal{M}$ that may depend on $O_1, O_2$, and $z$, there exists a $\poly(q(\lambda))$-query adversary $\cB$ that outputs a bit string such that
\begin{align*}
&\Bigg| \Pr_{O_1, O_2, z}\Big[ \mathcal{M}(\cA^{{W_{O_1}}}(z, 1^\lambda)) = 1 \Big] - \Pr_{O_1, O_2, z}\Big[ \mathcal{M}(\cA^{{W_{O_2}}}(z, 1^\lambda)) = 1 \Big] \Bigg|\\ &\leq \poly(q) \cdot \Pr_{O_1, z} \Big[ \cB^{{W_{O_1}}}(z, 1^\lambda) \in T \Big]^{c},
\end{align*}
where $T := \{ x : O_1(x) \neq O_2(x) \}$
and $\mathcal{M}(\rho)=1$ means that the measurement $\mathcal{M}$ succeeds when applied to the state $\rho$.
\end{theorem}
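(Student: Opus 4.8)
The plan is to reduce the claimed bound directly to the defining property of OW2H-compatibility (item~2 of Definition~\ref{def:punc}), using a standard averaging argument over the joint distribution of $(O_1, O_2, z)$ together with the fact that trace distance bounds distinguishing advantage. First I would fix any $q(\lambda)$-query adversary $\cA$ that outputs a quantum state and any measurement $\mathcal{M}$ (which may depend on $O_1, O_2, z$). For each fixed value $(O_1, O_2, z)$ in the support, let $\ket{\psi_{O_1,z}}$ and $\ket{\psi_{O_2,z}}$ be the purified final states of $\cA^{W_{O_1}}(z,1^\lambda)$ and $\cA^{W_{O_2}}(z,1^\lambda)$ respectively. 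Since $\mathcal{M}$ applied to these states distinguishes them with advantage at most their trace distance, and since trace distance is bounded by Euclidean distance of purifications (Lemma~\ref{thm:trace-dist-and-euclidean-dist}), we get, for every fixed $(O_1,O_2,z)$,
\begin{align*}
\Big| \Pr[\mathcal{M}(\cA^{W_{O_1}}(z,1^\lambda))=1] - \Pr[\mathcal{M}(\cA^{W_{O_2}}(z,1^\lambda))=1] \Big| \le \big\| \ket{\psi_{O_1,z}} - \ket{\psi_{O_2,z}} \big\|.
\end{align*}

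Next I would invoke OW2H-compatibility: there is a $\poly(q(\lambda))$-query algorithm $\cB_0$ (with query access to $\cA$) and a constant $c$ such that $\| \ket{\psi_{O_1,z}} - \ket{\psi_{O_2,z}} \| \le \poly(q(\lambda)) \cdot \big( \Pr[\cB_0^{W_{O_1}, \cA}(1^\lambda) \in T_{O_1,O_2}] \big)^{c}$, where $T_{O_1,O_2} := \{x : O_1(x) \neq O_2(x)\}$. Here a subtlety is that $\cB_0$ in Definition~\ref{def:punc} does not take the side information $z$; I would handle this by defining the desired adversary $\cB$ to take $z$ as input, internally run $\cA(z,1^\lambda)$ to instantiate the "$\cA$-oracle" that $\cB_0$ expects, and then run $\cB_0$, outputting whatever $\cB_0$ outputs. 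Since $\cB_0$ makes $\poly(q(\lambda))$ queries and simulating the $\cA$-oracle costs $q(\lambda)$ queries per call, $\cB$ is a $\poly(q(\lambda))$-query algorithm. Combining the two displayed inequalities, for every fixed $(O_1,O_2,z)$,
\begin{align*}
\Big| \Pr[\mathcal{M}(\cA^{W_{O_1}}(z,1^\lambda))=1] - \Pr[\mathcal{M}(\cA^{W_{O_2}}(z,1^\lambda))=1] \Big| \le \poly(q) \cdot \big( \Pr[\cB^{W_{O_1}}(z,1^\lambda) \in T_{O_1,O_2}] \big)^{c}.
\end{align*}

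Finally I would take the expectation over $(O_1, O_2, z)$ on both sides. The left side becomes (an upper bound on) the quantity $|\Pr_{O_1,O_2,z}[\mathcal{M}(\cA^{W_{O_1}}(z,1^\lambda))=1] - \Pr_{O_1,O_2,z}[\mathcal{M}(\cA^{W_{O_2}}(z,1^\lambda))=1]|$ by the triangle inequality. For the right side, since $c \in (0,1)$, the function $t \mapsto t^c$ is concave, so by Jensen's inequality $\Exp_{O_1,O_2,z}[(\Pr[\cB^{W_{O_1}}(z,1^\lambda)\in T_{O_1,O_2}])^c] \le (\Exp_{O_1,O_2,z}[\Pr[\cB^{W_{O_1}}(z,1^\lambda)\in T_{O_1,O_2}]])^c = (\Pr_{O_1,O_2,z}[\cB^{W_{O_1}}(z,1^\lambda)\in T_{O_1,O_2}])^c$, and this last probability equals $\Pr_{O_1,z}[\cB^{W_{O_1}}(z,1^\lambda)\in T]$ with $T := \{x : O_1(x)\neq O_2(x)\}$ as in the statement (the inner success probability only refers to $O_1$ and $z$ once we have marginalized; note $T$ still depends on $O_2$, but that dependence is inside the probability). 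This yields the claimed inequality.

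The main obstacle I anticipate is bookkeeping around where $O_2$ enters: the set $T$ depends on both $O_1$ and $O_2$, yet the right-hand side of the theorem writes $\Pr_{O_1,z}[\cB^{W_{O_1}}(z,1^\lambda)\in T]$ with only $O_1, z$ as the sampled quantities. The clean way to read this, which I would make explicit, is that the probability is over the joint draw of $(O_1, O_2, z)$ and the internal randomness of $\cB$, with $\cB$ only getting oracle access to $W_{O_1}$ (not $O_2$), and $T$ evaluated on the sampled pair; the notation $\Pr_{O_1,z}$ is shorthand reflecting that $\cB$'s execution only depends on $O_1$ and $z$. A secondary point requiring a line of care is the concavity/Jensen step needing $c\le 1$, which is exactly the hypothesis $c\in(0,1)$; if $c$ could exceed $1$ one would instead keep $\poly(q)$ absorbing a constant and not gain, but this case is excluded.
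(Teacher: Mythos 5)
Your proposal is correct and follows essentially the same route as the paper's proof: fix $(O_1,O_2,z)$, bound the distinguishing advantage by the trace distance and hence (via Lemma~\ref{thm:trace-dist-and-euclidean-dist}) by the Euclidean distance of the purified final states, invoke item~2 of Definition~\ref{def:punc} to get the $\poly(q)\cdot\Pr[\cB^{W_{O_1}}(z,1^\lambda)\in T]^{c}$ bound pointwise, and then average over $(O_1,O_2,z)$ using the triangle inequality on the left and Jensen's inequality (concavity of $t\mapsto t^c$ for $c\in(0,1)$) on the right. Your explicit handling of the side information $z$ and of the dependence of $T$ on $O_2$ only makes precise what the paper leaves implicit.
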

\begin{proof}
    We drop $1^\lambda$ from the notation for inputs to algorithms and assume it is always provided \kabir{clunky?}.  Let $\cB$ be the algorithm provided by Definition $\ref{def:punc}$ for $q(\lambda)$-query adversaries. For any fixing of $O_1,O_2,z$ let $\varepsilon$ be defined as:
    \begin{align}
    \varepsilon := \left|\Pr\Big[ \mathcal{M}(\cA^{{W_{O_1}}}(z)) = 1 \Big] - \Pr\Big[ \mathcal{M}(\cA^{{W_{O_2}}}(z)) = 1 \Big]\right|.
    \end{align}
    Let $\ket{\psi_1}$ and $\ket{\psi_2}$ be the purified final states (before applying the measurement $\cM$) of $\cA^{{W^{O_1}}}$ and $\cA^{{W^{O_2}}}$, respectively. Then by the data processing inequality,
    $
        \TD(\ket{\psi_1}, \ket{\psi_2}) \geq \varepsilon.
        $
    By \cref{thm:trace-dist-and-euclidean-dist}, this implies
    $
        \|\ket{\psi_1}- \ket{\psi_2}\| \geq \epsilon.
        $
    By the definition of $\cB$,
    \begin{align}
    \varepsilon \leq \poly(q(\lambda))\cdot\left(\Pr[\cB^{{W_{O_1}}}(z) \in T]\right)^c.
    \end{align}
    Note that $\cB$ depends on $\cA$ but not on $\cM$. Taking the expectation over $(O_1,O_2,z)$, noting that $c \in (0,1)$, and applying Jensen's inequality gives the statement of the theorem.
\end{proof}

\subsection{Definition of Oracle $S$}
\label{sec:def_oracle_iOWP}
We define a random oracle $S$ as follows:
\begin{definition}[Oracle $S$]
\label{def:oracle_set}
    The oracle $S:=(f,\mathsf{Obf},\Eval^{f})$ consists of the following three parts:
    \begin{itemize}
        \item For all $\secp\in\N$ and for all $x\in\bit^\secp$, $f(x)=f_\lambda(x)$ where $f_\lambda$ is a random permutation on $\lambda$-bit strings.
        \item For all $\lambda\in\N$, for all $C\in\bin^{\lambda}$, and for all $r\in\bin^{\lambda}$, $\obf(C,r)=\obf_\lambda(C,r)$ where $\obf_\lambda : \bin^\lambda \times \bin^\lambda \rightarrow \bin^{3\lambda}$ is a random injective function.
        \item For any function $f'$, $\eval^{f'}(\obfC,x)$ performs the following: 
            \begin{enumerate}
                \item If $\obfC\notin \mathsf{Image}(\obf)$, return $\bot$.
                \item Let $(C,r):=\obf^{-1}(\obfC)$. Parse $C$ as an oracle-aided (classical) circuit.
                \item If $|x|$ is not consistent with the input size of $C$, then return $\bot$. Otherwise, return $C^{f'}(x)$. 
            \end{enumerate}
            Note that $\eval^f$ is defined in terms of $\obf$, i.e. $\eval^f$ "knows" the truth table of $\obf$.
    \end{itemize}
    We will use $\bbS$ to represent the set of all possible oracles with the same interface as $S$. \kabir{does this line sound fine?}
    \mor{The word "interface" is slightly not clear. If possible, could you make it clearer?}
\end{definition}

\subsection{Constructing One-Way Permutations and Indistinguishability Obfuscation using $S$}
\label{subsec:io-owp-from-s}
In this section we use the oracle $S$ to construct a candidate OWP and a candidate iO scheme. We will then show a strong security property. Let $\bbS$ be the set of all possible oracles with the same interface as $S$, and let $W = \{W^O\}_{O\in\bbS}$ be a collection of unitaries that are OW2H-compatible for $\bbS$. We will show that the constructed primitives are secure against query-bounded quantum adversaries that have access to both $S$ and $W^S$. Since queries to $S$ can be answered exactly using access to $W^S$, it suffices to only consider access to $W^S$. 
\subsubsection{OWPs Secure Against Any OW2H-Compatible Unitary Collection}
We show that $S=(f,\obf,\eval^f)$ realizes a OWP secure against any OW2H-compatible unitary collection. The permutation is simply the function $f$. The following theorem proves security.
\begin{theorem}
\label{thm:owp}
Let $S = (f,\obf,\eval^f)$ be the random oracle defined in \cref{def:oracle_set}, and let $\bbS$ be the set of all possible oracles with the same interface as $S$. Let $W = \{W^O\}_{O\in\bbS}$ \kabir{can also say $W^S$ if that is clearer} be a collection of unitaries that are OW2H-compatible with $\bbS$ (\cref{def:punc}) with constant $c\in(0,1)$. 
For all functions $q:\N\to\N$ such that $q(\secp)\leq2^{o(\lambda)}$, 
for all $q(\lambda)$-query adversaries $\cA$, 
and for all sufficiently large $\lambda$,
\begin{align}
\Prr_{S,x\leftarrow\bin^{\lambda}}\Big[\cA^{W^S}\big(f(x)\big)=x\Big]\leq\frac{1}{2^{c^2\lambda/2}}.
\label{onewaypermutationsecurity}
\end{align}
\mor{$S\gets\bbS,x\gets\bit^\lambda$?}

\end{theorem}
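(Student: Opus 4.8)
The plan is to replace the oracle $S$ by a carefully chosen sibling $S'\in\bbS$ such that (i) by \cref{thm:ow2h-punc-full} the adversary behaves almost identically when $W^S$ is swapped for $W^{S'}$, and (ii) in the $W^{S'}$ world the challenge preimage $x$ is information-theoretically hidden, so no adversary (query-bounded or not) can output it. Concretely, writing $y=f(x)$, I would fix $k:=\lceil 2^{\lambda/2}\rceil$, sample fresh strings $x'_1,\dots,x'_k$ uniformly subject to being distinct and distinct from $x$, and let $f'$ agree with $f$ everywhere except $f'(x'_i):=y$ for all $i$; here $f'$ is no longer a permutation (so we use that $\bbS$ is the set of \emph{all} oracles with $S$'s interface, not only those whose $f$-part is a permutation), and $S':=(f',\obf,\eval^{f'})$. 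The key design point is that $f'$ still maps $x$ to $y$, so $f'^{-1}(y)=\{x,x'_1,\dots,x'_k\}$ has size exactly $k+1$, and the differing set $T=\{w:S(w)\neq S'(w)\}$ consists only of the $f$-inputs $x'_1,\dots,x'_k$ together with those $\eval$-inputs $(\obfC,w)$ for which $\obfC\in\mathsf{Image}(\obf)$ and the circuit $C$ (with $(C,r)=\obf^{-1}(\obfC)$) queries $f$ at some $x'_i$ on input $w$. In particular $x\notin T$.

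Next I would invoke \cref{thm:ow2h-punc-full} with $(O_1,O_2,z)=(S,S',y)$ and with $\cM$ the measurement testing whether the adversary's output state equals $\ket{x}$ (admissible since $x=f^{-1}(y)$ is computable from $O_1=S$ and $z=y$). Using that $W$ is OW2H-compatible with constant $c\in(0,1)$, this yields a $\poly(q(\lambda))$-query algorithm $\cB$ with
\begin{align*}
\big|\Pr[\cA^{W^S}(y)=x]-\Pr[\cA^{W^{S'}}(y)=x]\big|\le \poly(q(\lambda))\cdot\Pr[\cB^{W^S}(y)\in T]^{c}.
\end{align*}
For the $W^{S'}$ term I would argue that the output of $\cA^{W^{S'}}(y)$ has a distribution that is a function of $(S',y)$ only, while conditioned on $(S',y)$ the challenge $x$ is \emph{uniformly} distributed over the $(k+1)$-element set $f'^{-1}(y)$ — relabelling which element of $f'^{-1}(y)$ plays the role of $x$ induces a measure-preserving bijection on the preimages of the sampling map — so since the output is independent of this conditional uniform choice, $\Pr[\cA^{W^{S'}}(y)=x]\le \tfrac{1}{k+1}$. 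For the error term I would use that in the $W^S$ world $\cB$'s entire view (hence its output) depends only on $(f,\obf,x)$, whereas $\{x'_1,\dots,x'_k\}$ is a uniformly random size-$k$ subset of $\{0,1\}^\lambda\setminus\{x\}$ independent of that view: thus $\cB$'s output is one of the $x'_i$ with probability at most $k/(2^\lambda-1)$, and (since every circuit $C\in\{0,1\}^\lambda$ makes at most $\poly(\lambda)$ queries, and $\cB$'s output pair $(\obfC,w)$ is fixed given its view) its output is an $\eval$-point of $T$ with probability at most $\poly(\lambda)\cdot k/(2^\lambda-1)$.

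Combining these three bounds gives
\begin{align*}
\Pr_{S,x}[\cA^{W^S}(f(x))=x]\le \frac{1}{k+1}+\poly(q(\lambda))\cdot\left(\frac{\poly(\lambda)\cdot k}{2^\lambda-1}\right)^{c}.
\end{align*}
Plugging in $k=\lceil 2^{\lambda/2}\rceil$ and $q(\lambda)\le 2^{o(\lambda)}$, both terms are $2^{-c\lambda/2+o(\lambda)}$, which is below $2^{-c^2\lambda/2}$ for all sufficiently large $\lambda$ because $c(1-c)>0$ is a positive constant. (One could optimise the choice of $k$ for a sharper exponent, but this crude choice already suffices.)

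The main obstacle is choosing $S'$ correctly, and it is exactly the tension between requirements (i) and (ii). The natural ``transposition'' modification $f'=f\circ(x\;x')$ for a single fresh $x'$ keeps $f'$ a permutation but makes $f'(x)\neq y$, so $x$ itself lands in $T$; and $x=f^{-1}(y)$ is precisely the quantity a query-bounded algorithm with access to $W^S$ might produce, which is what we are trying to bound — so the bound on $\Pr[\cB^{W^S}(y)\in T]$ would become circular. Spreading the preimages of $y$ over a large \emph{hidden} set while pinning $f'(x)=y$ breaks this: it keeps $x$ out of $T$ (making the error term genuinely negligible) yet leaves $x$ only $1/(k+1)$-predictable in the $W^{S'}$ world. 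The two places that need careful verification are the exchangeability computation showing $x$ is uniform over $f'^{-1}(y)$ given $(S',y)$, and the claim that $\cB$'s view in the $W^S$ world is independent of the hidden strings $x'_i$; neither is deep, but both rely on sampling $f$ and $x'_1,\dots,x'_k$ exactly as above.
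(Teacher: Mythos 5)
Your proposal is correct, but it takes a genuinely different route from the paper. The paper fixes $\obf$, introduces $\bot$-punctured variants of $f$ (on $\{x'\}$, $\{x,x'\}$, $\{x\}$), and walks through a four-oracle hybrid chain $S_1\to S_2\to S_3\to S_4$, invoking \cref{thm:ow2h-punc-full} at each switch and using a symmetry argument (that $x$ and $x'$ are exchangeable in the adversary's view under $S_2$) to transfer hardness from the ``wrong'' challenge $f(x')$ to the real challenge $f(x)$; the nested applications are what produce the $c^2$ in the exponent. You instead make a single application of \cref{thm:ow2h-punc-full}, reprogramming $f$ so that $y$ acquires $k+1$ preimages while keeping $f'(x)=y$, then (a) an exchangeability/counting argument showing $x$ is uniform on $f'^{-1}(y)$ given $(S',y)$, which bounds the $W^{S'}$-world success by $1/(k+1)$, and (b) a direct independence argument showing the differing set $T$ (determined by the hidden $x'_i$'s) is hit by $\cB^{W^S}$ with probability $O(\poly(q)\cdot k/2^\lambda)$. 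Both arguments verify: your non-circularity hinges on $x\notin T$ and on $T$ being governed by strings independent of the $W^S$-view, and your counting argument for uniformity of $x$ over $f'^{-1}(y)$ goes through (each candidate preimage accounts for exactly $(k!)^2$ equally likely sample points). Two small caveats: your $S'$ has a non-injective $f$-part, so you need $\bbS$ to contain such oracles — but this is no stronger than what the paper itself assumes, since its punctured oracles output $\bot$ and are not permutations either; and in the eval-input bound the decoded circuit can make up to $\poly(q(\lambda))$ (not just $\poly(\lambda)$) queries, since $\cB$'s output may encode an obfuscation at a larger parameter — this only changes a polynomial factor and is absorbed by $q(\lambda)\le 2^{o(\lambda)}$. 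Your route is arguably simpler (one OW2H invocation, no symmetry step) and gives the sharper bound $2^{-c\lambda/2+o(\lambda)}$, which implies the stated $2^{-c^2\lambda/2}$; the paper's chain of $\bot$-puncturings has the advantage of reusing the same template verbatim for the iO security proof (\cref{thm:iO_security}).
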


Before proving \cref{thm:owp} we will show that for a fixed $\obf$ and for any choice of $(f', f'')$, if $(f', \obf, \eval^{f'})$ and $(f'', \obf, \eval^{f''})$ differ on some input $y$ (i.e. at least one of the corresponding oracles in the tuples differ on input $y$), then $y$ can be used to find $y'$ where $f'$ and $f''$ differ with noticeable probability.
\begin{lemma}
    \label{clm:owp-find}
For any $f',f'',\obf$ and $y$ such that $f'(y)\ne f''(y)$ or $\eval^{f'}(y)\ne \eval^{f''}(y)$,
\begin{align}
\Pr\Big[f'(y')\ne f''(y') :y'\leftarrow \mathsf{Find}^{f'}(y)\Big]\ge\frac{1}{2|y|},
\end{align}
where $\mathsf{Find}^{f'}(y)$ does the following: \takashi{It may be useful to mention that $\mathsf{Find}$ ``knows" the full truth table of $\obf$. (Since it only has oracle $f'$ one may misunderstand that it is independent of $\obf$.)}\kabir{see bottom of the lemma now}\takashi{looks good}
\begin{itemize}
\item With probability $1/2$ output $y$.
\item With probability $1/2$, interpret $y$ as $(\obfC, \widetilde{x})$ and do the following:
\begin{itemize}
\item If $\obfC\notin \mathsf{Image}(\obf)$, return $\bot$.
\item Let $(C,r):=\obf^{-1}(\obfC)$. Parse $C$ as an oracle-aided (classical) circuit.
\item Run $C^{f'}(\widetilde{x})$  and keep track of the $f'$-queries made by $C$.
\item Randomly choose one of them and output it.
\end{itemize}
\end{itemize}
\mor{I think $f'(y')\neq f''(y')$ is not necessarily satisfied if $y$ satisfies $f'(y)=f''(y)$ and $y=(\obfC,\widetilde{x})$ satisfies $\obfC\notin\mathsf{Image}(\obf)$.}
Crucially note that $\mathsf{Find}^{(\cdot)}(y)$ makes at most $|y|$ queries, and "knows" the full truth table of the fixed $\obf$.
\end{lemma}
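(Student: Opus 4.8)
The plan is a two-case analysis according to whether $f'$ and $f''$ already disagree at the point $y$ itself. \emph{Case 1: $f'(y)\neq f''(y)$.} Here $\mathsf{Find}^{f'}(y)$ takes its first branch with probability $1/2$ and outputs $y$, which by assumption is a point on which $f'$ and $f''$ disagree; so this branch alone gives success probability at least $1/2\geq\frac{1}{2|y|}$, and it uses no oracle queries.

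\emph{Case 2: $f'(y)=f''(y)$.} Then the hypothesis of the lemma forces $\eval^{f'}(y)\neq\eval^{f''}(y)$. The first step here is to note that since $\obf$ is common to both tuples (and $\mathsf{Find}$ knows its full truth table for free), $\mathsf{Image}(\obf)$ and $\obf^{-1}$ behave identically in the two evaluations; hence, parsing $y$ as $(\obfC,\widetilde{x})$, we must have $\obfC\in\mathsf{Image}(\obf)$ and $|\widetilde{x}|$ consistent with the input length of the circuit $C$ with $(C,r):=\obf^{-1}(\obfC)$, since otherwise both evaluations output $\bot$ and would agree. Consequently $\eval^{f'}(y)=C^{f'}(\widetilde{x})$ and $\eval^{f''}(y)=C^{f''}(\widetilde{x})$, and these two values differ. (This incidentally disposes of the gap flagged in the margin note: under the hypothesis, the subcase ``$f'(y)=f''(y)$ and $\obfC\notin\mathsf{Image}(\obf)$'' cannot arise.)

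The core of Case 2 is a ``first-divergence'' argument. Writing $T:=\{z:f'(z)\neq f''(z)\}$, I claim the classical executions of $C^{f'}(\widetilde{x})$ and $C^{f''}(\widetilde{x})$ issue the same sequence of queries up to and including the first query that lies in $T$, and that such a query must exist. This follows from a short induction on the query index: as long as no query so far has landed in $T$, both executions are in the same internal state and receive the same oracle responses, hence issue the same next query; and if no query ever lands in $T$, the two executions — and their outputs — coincide, contradicting $C^{f'}(\widetilde{x})\neq C^{f''}(\widetilde{x})$ (this also covers the degenerate case of zero queries). So $C^{f'}(\widetilde{x})$ makes some number $Q\geq 1$ of queries, at least one of which is in $T$, and the second branch of $\mathsf{Find}$ returns a uniformly random one of them, hitting $T$ with probability at least $1/Q$. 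Finally, $C$ is encoded in a string of length at most $\lambda$, so $Q\leq\lambda\leq 3\lambda=|\obfC|\leq|y|$, giving $1/Q\geq 1/|y|$; combined with the probability $1/2$ of taking the second branch this yields overall success probability at least $\frac{1}{2|y|}$. The query-count assertion is immediate, since in this branch $\mathsf{Find}$ only runs $C^{f'}$, making at most $|C|\leq|y|$ queries.

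The one mildly delicate point is the first-divergence induction in Case 2: it must be set up so that the inductive hypothesis simultaneously records that the two query sequences agree so far \emph{and} that all responses received so far agree, terminating exactly at the first $T$-query. Everything else — the case check in the definition of $\eval$ and the crude bound on the number of queries made by a circuit of description length $\lambda$ — is routine bookkeeping.
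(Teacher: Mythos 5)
Your proposal is correct and follows essentially the same route as the paper's proof: a case split between a disagreement at $y$ itself (handled by the first branch of $\mathsf{Find}$) and a disagreement of the $\eval$ outputs (handled by the second branch, via the observation that $C^{f'}(\widetilde{x})$ must query a point where $f'$ and $f''$ differ, among at most $|y|$ queries). The extra details you supply — the $\mathsf{Image}(\obf)$ check ruling out the $\bot$ subcase and the explicit first-divergence induction — are exactly the bookkeeping the paper leaves implicit, and they also resolve the concern raised in the margin note.
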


\begin{proof}[Proof of \cref{clm:owp-find}]
If $\eval^{f'}(y)= \eval^{f''}(y)$ then it must be the case that $f'(y) \neq f''(y)$ already. If $\eval^{f'}(y)\ne \eval^{f''}(y)$, then $C^{f'}$ must query some $y'$ such that $f'(y')\ne f''(y')$. The claim follows from observing that $C$ is a circuit of size at most $|y|$ and so makes at most $|y|$ queries.
\end{proof}
\noindent 

Now we show \cref{thm:owp}.

\begin{proof}[Proof of \cref{thm:owp}]
We will show that for any fixing of $\obf$, for any function $q:\N\rightarrow\N$ such that $q(\lambda) \leq 2^{o(\lambda)}$, for any $q(\lambda)$-query adversary $\cB$, for all large enough $\lambda$,
\begin{align}
\Prr_{f,{x\leftarrow\bin^\lambda}}\Big[\cB^{W^{S}}\big(f(x)\big)=x\Big]\leq\frac{1}{2^{c^2\lambda/2}}.
\end{align}
This suffices to show \cref{onewaypermutationsecurity}.
For the remainder of the proof, fix any $\obf$. 

\noindent For any set $\bbX\subseteq\bit^*$ and all $x^*\in\{0,1\}^*$, we define
\begin{align}
f_{\bbX}(x^*)=
\begin{cases}
f(x^*) & \text{if } x^*\notin\bbX,\\
\bot & \text{if } x^*\in\bbX.
\end{cases}
\end{align}
Let $x'$ be sampled uniformly at random from $\{0,1\}^\lambda$. Define the random variables 
\begin{align}
S_1&:=(f_{\{x'\}},\obf,\eval^{f_{\{x'\}}})\\
S_2&:=(f_{\{x,x'\}},\obf,\eval^{f_{\{x,x'\}}})\\
S_3&:=(f_{\{x\}},\obf,\eval^{f_{\{x\}}})\\
S_4&:=(f,\obf,\eval^{f}).
\end{align}
Note that the $S_i$s are correlated with the choice of $x$ and $x'$ which are both sampled uniformly at random as stated above.
Also note that $S_4=S$, and therefore our goal is to show that 
\begin{align}
\Prr_{f,x}\Big[\cB^{W^{S_4}}\big(f(x)\big)=x\Big]\leq\frac{1}{2^{c^2\lambda/2}}.
\end{align}
We will show this by first showing that $x$ is hard to find when given input $f(x')$ and query access to $W^{S_1}$, then showing by hybrid argument that we can switch from giving the adversary input $f(x')$ and query access to $W^{S_1}$ to giving input $f(x)$ and query access to $W^{S_4}$ without significantly increasing the probability of finding $x$.
First, when using the oracle $S_1$, we can easily bound the probability that $\cB$ outputs $x$.

\begin{MyClaim}
\label{clm:owp-1}
For all functions $q:\N\to\N$ and all $q(\lambda)$-query adversaries $\cB$,
\begin{align}
\Pr_{f,x,x'}\big[\cB^{W^{S_1}}(f(x'))=x\big]\le \frac{1}{2^\lambda}.
\end{align}
\end{MyClaim}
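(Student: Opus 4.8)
We want to show that when the adversary has input $f(x')$ and query access to $W^{S_1}$, where $S_1 = (f_{\{x'\}}, \obf, \eval^{f_{\{x'\}}})$, it cannot output $x$ with probability better than $2^{-\lambda}$. The crucial observation is that the oracle $S_1$ has had its truth table *punctured* at $x'$: specifically, $f_{\{x'\}}(x') = \bot$ rather than $f(x')$, and $\eval^{f_{\{x'\}}}$ also uses the punctured version of $f$. So the only place in $S_1$ that depends on the value $f(x')$ is... nowhere — $S_1$ is entirely determined by $f$ restricted to $\{0,1\}^* \setminus \{x'\}$, together with $\obf$.

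**The plan.** The approach is to argue that, conditioned on the pair $(\obf, x')$ and on the restriction of $f$ to inputs other than $x'$, the value $f(x')$ is uniformly random over the set of values not already taken by $f$ on other inputs of the same length, and in particular is independent of the adversary's entire view. Concretely: fix $\obf$ (as we have already done for the rest of the proof), fix $x'$, and fix the partial function $f|_{\{0,1\}^* \setminus \{x'\}}$. Then $W^{S_1}$ is completely determined by these fixed quantities, so the adversary $\cB^{W^{S_1}}$ is a fixed quantum algorithm receiving input $f(x')$. But the quantity we care about is whether $\cB$ outputs $x = x'$ (note: in the claim $x$ is the actual preimage challenge, which is a separate uniform variable — wait, let me re-read). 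Actually $x$ and $x'$ are both uniform and independent in $\{0,1\}^\lambda$, and $\cB$ is given $f(x')$. For $\cB$ to output $x$, since $\cB$'s view depends only on $(f(x'), \obf, f|_{\setminus\{x'\}})$ and none of these reveal anything about $x$ (which is independent of $x'$ and of $f$ except through the event we are bounding), the probability that $\cB$'s output equals the uniformly random and independent $x$ is at most $2^{-\lambda}$.

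**Key steps, in order.** First I would fix $\obf$, fix $x'$, and fix the truth table of $f$ on all inputs except $x'$; observe that $S_1$ and hence $W^{S_1}$ depend only on these. Second, observe that $\cB^{W^{S_1}}(f(x'))$ is then a fixed (possibly randomized/quantum) process whose output distribution depends only on $f(x')$, $x'$, $\obf$, and $f|_{\setminus\{x'\}}$ — crucially not on $x$. Third, note $x$ is sampled independently and uniformly from $\{0,1\}^\lambda$; since $\cB$'s output is independent of $x$ given all the above, $\Pr[\cB^{W^{S_1}}(f(x')) = x] \le 2^{-\lambda}$ pointwise. Fourth, average over the fixings to conclude. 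I might phrase this even more cleanly: the joint distribution of $(\text{view of } \cB, x)$ is a product of a distribution on views and the uniform distribution on $x$, so no predictor succeeds with probability more than $2^{-\lambda}$.

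**Main obstacle.** The only subtlety is making precise that $S_1$ genuinely does not depend on $f(x')$ — one must check all three components: $f_{\{x'\}}$ obviously doesn't (it outputs $\bot$ at $x'$), and $\eval^{f_{\{x'\}}}$ runs circuits with oracle $f_{\{x'\}}$, so it too only queries the punctured function and never sees $f(x')$. (The $\obf$ component is fixed and independent.) Once this is established, the argument is essentially an information-theoretic triviality: the challenge $x$ is independent of everything $\cB$ sees. So I expect this claim to be the easiest in the sequence; the real work in proving \cref{thm:owp} will come in the later hybrid-switching claims ($S_1 \to S_2 \to S_3 \to S_4$), where one invokes OW2H-compatibility of $W$ (\cref{thm:ow2h-punc-full}) together with \cref{clm:owp-find} to control the cost of reprogramming $f$ at $x$ and $x'$, and the symmetry argument between $x$ and $x'$.
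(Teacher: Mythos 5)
Your proposal is correct and is essentially the paper's argument: the paper proves this claim in one line by observing that $x$ is sampled independently of the adversary's view, which is exactly your third step (the view is a function of $f$, $x'$, and $\obf$ only, all independent of the uniform $x$). The opening detour about $S_1$ not depending on $f(x')$ is unnecessary for this particular claim (that independence matters for the later hybrids, not here), but it does no harm.
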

\begin{proof}[Proof of \cref{clm:owp-1}]
Follows from the observation that $x$ is sampled independent of the adversary’s view.
\end{proof}
Next, we bound the probability that $\cB$ outputs $x$ when the oracle is $S_2$.
\begin{MyClaim}
\label{clm:owp-2}
For all functions $q:\N\to\N$ and all $q(\lambda)$-query adversaries $\cB$, and for all sufficiently large $\lambda$,
\begin{align}
\Pr_{f,x,x'}\big[\cB^{W^{S_2}}(f(x'))=x\big]\le \frac{\poly(q(\lambda))}{2^{c\lambda}}.
\end{align}
\end{MyClaim}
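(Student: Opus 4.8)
The plan is to indistinguishably replace the oracle $S_1$, under which \cref{clm:owp-1} already guarantees that $x$ is essentially impossible to output, by the oracle $S_2$, exploiting the OW2H-compatibility of $W$ through \cref{thm:ow2h-punc-full}; the point-finding error term that this switch produces is then controlled by noting that $S_1$ and $S_2$ disagree only at inputs that ``point to'' the single planted value $x$.

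First I would write $\delta := \Pr_{f,x,x'}[\cB^{W^{S_2}}(f(x'))=x]$ and view the test ``the string output by $\cB$ equals $x$'' as a projective measurement $\cM$. One should first check that $\cM$ is admissible for \cref{thm:ow2h-punc-full}: although $x$ is not a function of $S_1$ alone (nor of the side information $z:=f(x')$), it \emph{is} a function of the pair $(S_1,S_2)$, since $x$ is the unique input that the $f$-component of $S_2$ sends to $\bot$ while the $f$-component of $S_1$ does not; this uses $x\ne x'$, which fails only with probability $2^{-\lambda}$ and is absorbed into the final bound. Instantiating \cref{thm:ow2h-punc-full} with $O_1:=S_1$, $O_2:=S_2$, $z:=f(x')$, $\cA:=\cB$, and $\cM$ as above, and combining with \cref{clm:owp-1} (which gives $\Pr_{f,x,x'}[\cM(\cB^{W^{S_1}}(f(x')))=1]\le 2^{-\lambda}$), yields a $\poly(q(\lambda))$-query adversary $\cB'$ outputting a bit string with
\[
\delta \;\le\; 2^{-\lambda} + \poly(q(\lambda))\cdot\Big(\Pr_{f,x,x'}\big[{\cB'}^{W^{S_1}}(f(x'))\in T\big]\Big)^{c},
\]
where $T$ is the set of inputs on which $S_1$ and $S_2$ differ.

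The crux is to argue that landing in $T$ is essentially as hard as outputting $x$. I would feed the output $y$ of $\cB'$ into the $\mathsf{Find}^{f_{\{x'\}}}$ procedure of \cref{clm:owp-find}; this needs only the oracle $f_{\{x'\}}$ (obtainable from $W^{S_1}$) together with the fixed truth table of $\obf$ that it hardwires, and makes at most $|y|\le\poly(q(\lambda))$ queries, so $\cB'':=\mathsf{Find}\circ\cB'$ is still a $\poly(q(\lambda))$-query algorithm. By the analysis behind \cref{clm:owp-find}, whenever $x\ne x'$ and $y\in T$ the procedure outputs a point on which $f_{\{x'\}}$ and $f_{\{x,x'\}}$ disagree with probability at least $\tfrac{1}{2|y|}$; but that set of points is exactly $\{x\}$. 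Hence $\Pr_{f,x,x'}[{\cB''}^{W^{S_1}}(f(x'))=x]\ge \tfrac{1}{\poly(q(\lambda))}\big(\Pr_{f,x,x'}[{\cB'}^{W^{S_1}}(f(x'))\in T]-2^{-\lambda}\big)$, while \cref{clm:owp-1} forces the left-hand side to be at most $2^{-\lambda}$. Rearranging bounds $\Pr_{f,x,x'}[{\cB'}^{W^{S_1}}(f(x'))\in T]$ by $\poly(q(\lambda))/2^{\lambda}$, and substituting into the displayed inequality gives $\delta \le 2^{-\lambda}+\poly(q(\lambda))\cdot(\poly(q(\lambda))/2^{\lambda})^{c}\le \poly(q(\lambda))/2^{c\lambda}$, using $c\in(0,1)$.

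I expect the main obstacle to be exactly this last maneuver: converting the ``find a differing input'' guarantee handed back by \cref{thm:ow2h-punc-full} into the already-bounded event ``output $=x$''. It works only because, although $T$ may be large (it can contain many $\eval$-queries), \emph{every} element of $T$ is a witness that some oracle-aided circuit queries the single planted point $x$, which $\mathsf{Find}$ extracts; and because $\mathsf{Find}$ is query-efficient, the composed adversary stays within the query budget required to invoke \cref{clm:owp-1}.
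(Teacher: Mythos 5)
Your proposal is correct and follows essentially the same route as the paper's proof: it combines \cref{clm:owp-1}, the $\mathsf{Find}$ procedure of \cref{clm:owp-find} instantiated with $f'=f_{\{x'\}}$ and $f''=f_{\{x,x'\}}$ to turn any hit on the differing set $T$ into recovery of $x$, and \cref{thm:ow2h-punc-full} to switch from $W^{S_1}$ to $W^{S_2}$. The only differences are presentational (you apply the OW2H step first and then bound the $T$-hitting probability of the extractor it produces, whereas the paper bounds $T$-hitting for all query-bounded adversaries before invoking the theorem, and you additionally spell out the admissibility of the measurement), so the argument matches the paper's.
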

\begin{proof}[Proof of \cref{clm:owp-2}]
Let 
\begin{align}
    T:=\{y:f_{\{x'\}}(y)\ne f_{\{x,x'\}}(y) \lor \eval^{f_{\{x'\}}}(y) \ne \eval^{f_{\{x,x'\}}}(y)\}.
\end{align}
Note that the (tuples of) oracles $S_1$ and $S_2$ only differ on inputs in $T$. First, we show that for all $q(\lambda)$-query adversaries $\cB$,
\begin{align}
    \Pr_{f,x,x'}\big[\cB^{W^{S_1}}(f(x'))\in T\big]\le \frac{2q(\lambda)}{2^\lambda}.
\end{align}
If $\cB$ could exceed this bound, then by setting $f'=f_{\{x'\}}$ and $ f''= f_{\{x,x'\}}$ in \cref{clm:owp-find}, running $\mathsf{Find}^{f_{\{x'\}}}$ on the output of $\cB$ would result in $x$ with probability greater than $1/2^\lambda$. Since a call to $\mathsf{Find}^{f_{\{x'\}}}$ on an input of size at most $q(\lambda)$ can be implemented with at most $q(\lambda)$ calls to $W^{S_1}$, this gives a $2q(\lambda)$ query algorithm that finds $x$ with probability greater than $1/2^\lambda$, contradicting Claim \ref{clm:owp-1}. 

Let $\cM$ be the measurement that measures in the computational basis and accepts if the output is $x$. By the OW2H-compatibility of $W$ for $\bbS$, applying Theorem \ref{thm:ow2h-punc-full} to measurement $\cM$ and (tuples of) oracles $S_1$ and $S_2$, for all $q(\lambda)$-query adversaries $\cB$, \takashi{It may be useful to mention that we rely on OW2H-compatibility of $W$ here. A similar comment applies whenever we use Theorem \ref{thm:ow2h-punc-full}.}\kabir{better?}\takashi{looks good}
\begin{align}
\Big|\Pr_{f,x,x'}\big[\cB^{W^{S_1}}(f(x'))=x\big]-\Pr_{f,x,x'}\big[\cB^{W^{S_2}}(f(x'))=x\big]\Big|\leq \frac{\poly(q(\lambda))}{2^{c\lambda}},
\end{align}
which along with \cref{clm:owp-1} 
implies the claim.
\end{proof}

Next, we show that, when using the oracle $S_2$, the symmetry between $x$ and $x'$ implies that the probability that $\cB$ outputs $x$ is identical, regardless of whether its input is $f(x)$ or $f(x')$, implying that it is hard to find $x$ given input $f(x)$ and query access to $W^{S_2}$. 
\begin{MyClaim}
\label{clm:owp-3}
For all functions $q:\N\to\N$, all $q(\lambda)$-query adversaries $\cB$, and all large enough $\lambda$,
\begin{align}
\Pr_{f,x,x'}\big[\cB^{W^{S_2}}(f(x))=x\big]=\Pr_{f,x,x'}\big[\cB^{W^{S_2}}(f(x'))=x\big]\leq \frac{\poly(q(\lambda))}{{2^{c\lambda}}}.
\end{align}
\end{MyClaim}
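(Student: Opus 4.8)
The plan is to establish the equality by a symmetry argument and then read off the bound from \cref{clm:owp-2}, which already controls $\Pr_{f,x,x'}[\cB^{W^{S_2}}(f(x'))=x]$. The conceptual point is that the oracle $S_2$ punctures $f$ at \emph{both} $x$ and $x'$, so from $\cB$'s point of view the two points are completely interchangeable: the challenge input $f(x)$ is just one of the two ``missing'' image values, and nothing $\cB$ can query distinguishes which one it is.

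Concretely, I would introduce the involution $\Psi$ on triples $(f,x,x')$ defined by $\Psi(f,x,x') = (f\circ\tau,\,x,\,x')$, where $\tau$ denotes the transposition of $\bit^\lambda$ swapping $x$ and $x'$ (the identity if $x=x'$). For each fixed pair $(x,x')$, the map $f\mapsto f\circ\tau$ is right-composition by a fixed permutation, hence a bijection of the set of permutations of $\bit^\lambda$ that preserves the uniform distribution; therefore $\Psi$ is a measure-preserving bijection of the sample space, and $\Psi\circ\Psi=\mathrm{id}$. I would then check the three quantities that determine the experiment: (i) since $\tau$ fixes every point outside $\{x,x'\}$ we have $(f\circ\tau)_{\{x,x'\}} = f_{\{x,x'\}}$, so $S_2$ is literally unchanged under $\Psi$ (the $\obf$ part is untouched, and $\eval^{f_{\{x,x'\}}}$ is a deterministic function of $f_{\{x,x'\}}$ and $\obf$), and hence $W^{S_2}$ is unchanged; (ii) the target value $x$ is unchanged; (iii) the input $f(x)$ is sent to $(f\circ\tau)(x) = f(\tau(x)) = f(x')$. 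Re-indexing the event $[\cB^{W^{S_2}}(\cdot)=x]$ over the sample space by $\Psi$ then yields $\Pr_{f,x,x'}[\cB^{W^{S_2}}(f(x))=x] = \Pr_{f,x,x'}[\cB^{W^{S_2}}(f(x'))=x]$, and \cref{clm:owp-2} bounds the right-hand side by $\poly(q(\lambda))/2^{c\lambda}$.

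I expect the only delicate step to be the bookkeeping in (i): one must check that $\Psi$ modifies $f$ only on $\{x,x'\}$, that $f_{\{x,x'\}}$ (and therefore $\eval^{f_{\{x,x'\}}}$) is genuinely invariant, and that consequently the unitary $W^{S_2}$ handed to $\cB$ is identical before and after applying $\Psi$; this invariance is exactly where it matters that $S_2$ punctures \emph{both} points rather than one. Everything else is routine. The case $x=x'$ (which occurs with probability $2^{-\lambda}$) needs no special handling: there $\tau$ is the identity and the two events literally coincide, so the equality is trivial on those samples.
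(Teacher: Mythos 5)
Your proposal is correct and matches the paper's argument: the paper's proof is exactly the one-line observation that the bound follows from \cref{clm:owp-2} together with the symmetry of $x$ and $x'$ in $\cB$'s view, and your involution $\Psi(f,x,x')=(f\circ\tau,x,x')$ is simply a careful formalization of that symmetry (correctly exploiting that $f_{\{x,x'\}}$, and hence $S_2$ and $W^{S_2}$, are invariant because both points are punctured). No gaps.
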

\begin{proof}[Proof of \cref{clm:owp-3}]
Follows from Claim \ref{clm:owp-2} and the fact that $x$ and $x'$ are symmetric in the view of $\cB$.
\end{proof}

Next we show the upper bound of the probability that $\cB$ outputs $x$ with oracle $S_3$.
\begin{MyClaim}
\label{clm:owp-4}
    For all functions $q:\N\to\N$, all $q(\lambda)$-query adversaries $\cB$, and all large enough $\lambda$,
\begin{align}
\Prr_{f,x,x'}\big[\cB^{W^{S_3}}(f(x))=x\big]\leq \frac{\poly(q(\lambda))}{2^{c\lambda}}.
\end{align}
\end{MyClaim}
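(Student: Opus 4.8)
The plan is to prove \cref{clm:owp-4} by running the same argument that established \cref{clm:owp-2}, but with the pair of (tuples of) oracles $(S_1,S_2)$ replaced by $(S_3,S_2)$ and the roles of $x$ and $x'$ interchanged. What makes this go through is that $S_3=(f_{\{x\}},\obf,\eval^{f_{\{x\}}})$ together with the challenge $f(x)$ is a deterministic function of $(f,x)$ and the fixed $\obf$, hence independent of $x'$ --- exactly the feature that \cref{clm:owp-1} exploited for $(S_1,f(x'))$ and $x$. So first I would record the base case: for every function $q:\N\to\N$ and every $q(\lambda)$-query adversary $\cB$, $\Prr_{f,x,x'}[\cB^{W^{S_3}}(f(x))=x']\le 2^{-\lambda}$, which holds because $x'$ is drawn uniformly at random independently of the adversary's entire view ($W^{S_3}$ and $f(x)$).

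Next, I would set $T':=\{y:f_{\{x\}}(y)\ne f_{\{x,x'\}}(y)\ \lor\ \eval^{f_{\{x\}}}(y)\ne\eval^{f_{\{x,x'\}}}(y)\}$, the set on which the tuples $S_3$ and $S_2$ differ, and show that for every function $q:\N\to\N$ and every $q(\lambda)$-query adversary $\cB$, $\Prr_{f,x,x'}[\cB^{W^{S_3}}(f(x))\in T']\le 2q(\lambda)/2^{\lambda}$. The point is that $f_{\{x\}}$ and $f_{\{x,x'\}}$ agree on every input except $x'$ (both send $x$ to $\bot$, and at $x'$ one outputs $f(x')$ while the other outputs $\bot$), so applying \cref{clm:owp-find} with $f'=f_{\{x\}}$ and $f''=f_{\{x,x'\}}$ shows that $\mathsf{Find}^{f_{\{x\}}}$, run on any $y\in T'$, returns $x'$ with probability at least $1/(2|y|)$. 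Since $f_{\{x\}}$ (being part of $S_3$) is computable with one query to $W^{S_3}$, composing $\cB$ with $\mathsf{Find}^{f_{\{x\}}}$ gives a $2q(\lambda)$-query algorithm that uses only $W^{S_3}$ and input $f(x)$ and outputs $x'$ with probability at least $\Prr[\cB^{W^{S_3}}(f(x))\in T']/(2q(\lambda))$; by the base case this is at most $2^{-\lambda}$, and rearranging yields the bound. (The degenerate event $x=x'$ merely makes $T'=\emptyset$ and is harmless.)

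Finally, I would invoke OW2H-compatibility of $W$. Applying \cref{thm:ow2h-punc-full} with the measurement $\cM$ that accepts iff the adversary's output equals $x$, the jointly distributed oracles $O_1=S_3$ and $O_2=S_2$, and classical side information $z=f(x)$ (all over the choice of $f,x,x'$; here the differing set of the theorem coincides with $T'$), we get a $\poly(q(\lambda))$-query adversary $\cC$ with
\begin{align}
\Big|\Prr_{f,x,x'}[\cB^{W^{S_3}}(f(x))=x]-\Prr_{f,x,x'}[\cB^{W^{S_2}}(f(x))=x]\Big|\le\poly(q(\lambda))\cdot\Prr_{f,x,x'}\big[\cC^{W^{S_3}}(f(x))\in T'\big]^{c}.
\end{align}
Applying the previous step to the $\poly(q(\lambda))$-query algorithm $\cC$ bounds the right-hand side by $\poly(q(\lambda))\cdot(2\poly(q(\lambda))/2^{\lambda})^{c}\le\poly(q(\lambda))/2^{c\lambda}$, and \cref{clm:owp-3} bounds $\Prr[\cB^{W^{S_2}}(f(x))=x]$ by $\poly(q(\lambda))/2^{c\lambda}$; the triangle inequality then gives $\Prr[\cB^{W^{S_3}}(f(x))=x]\le\poly(q(\lambda))/2^{c\lambda}$, which is the claim.

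I expect the only genuine subtlety --- and the step I would be most careful to get right --- to be the choice of which oracle to condition on when bounding $\Prr[\,\cdot\in T']$: it must be done under $S_3$, not under $S_2$. It is precisely under $S_3$ that the value $x'$ hidden inside $T'$ is information-theoretically independent of the adversary's view (so the clean $2q/2^\lambda$ bound is available), whereas under $S_2$ that value is punctured and partially leaks through the $\bot$-entries of $f_{\{x,x'\}}$, which would only give a $\poly(q)/2^{c\lambda}$ bound and, after the exponent-$c$ loss in \cref{thm:ow2h-punc-full}, would degrade the final estimate below what is claimed. That \cref{thm:ow2h-punc-full} is stated symmetrically in $O_1$ and $O_2$ is exactly what lets us condition on $S_3$ while still switching between $S_3$ and $S_2$; everything else is bookkeeping analogous to the proof of \cref{clm:owp-2}.
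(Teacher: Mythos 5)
Your proposal is correct and follows essentially the same route as the paper's proof: it establishes the independence bound $\Prr[\cB^{W^{S_3}}(f(x))=x']\le 2^{-\lambda}$, uses \cref{clm:owp-find} with $f'=f_{\{x\}}$, $f''=f_{\{x,x'\}}$ to bound the probability of hitting the differing set under $W^{S_3}$ by $2q(\lambda)/2^{\lambda}$, and then applies \cref{thm:ow2h-punc-full} together with \cref{clm:owp-3} and the triangle inequality. The subtlety you flag—conditioning the "hit $T$" bound on $S_3$ rather than $S_2$—is exactly how the paper argues it, so there is no substantive difference.
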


\begin{proof}[Proof of \cref{clm:owp-4}]
First, note that for all adversaries $\cB$, 
\begin{align}
\label{eq:owp-5}
    \Pr_{f,x,x'}\left[\cB^{W^{S_3}}(f(x))=x'\right]\le \frac{1}{2^\secp}
\end{align}
because $x'$ is sampled independent of the $\cB$’s view.
Let 
\begin{align}
    T:=\{y:f_{\{x,x'\}}(y)\ne f_{\{x\}}(y) \lor \eval^{f_{\{x,x'\}}
    }(y)\ne \eval^{f_{\{x\}}
    }(y) \}. 
\end{align}
\kabir{we don't need to number T because it is only defined in the scope of the claim}
We show that for all $q(\lambda)$-query adversaries $\cB$,
\begin{align}
    \Pr_{f,x,x'}\big[\cB^{W^{S_3}}(f(x))\in T\big]\le \frac{2q}{2^\lambda}.
\end{align}
If $\cB$ could exceed this bound, then by setting $f'=f_{\{x\}}$ and $ f''= f_{\{x,x'\}}$ in \cref{clm:owp-find}, running $\mathsf{Find}^{f_{\{x\}}}$ on the output of $\cB$ would result in $x$ with probability greater than $1/2^\lambda$. Since a call to $\mathsf{Find}^{f_{\{x\}}}$ on an input of size at most $q(\lambda)$ can be implemented with at most $q(\lambda)$ calls to $W^{S_3}$, this gives a $2q(\lambda)$ query algorithm that finds $x'$ with probability greater than $1/2^\lambda$, contradicting \cref{eq:owp-5}. 

Let $\cM$ be the measurement that measures in computational basis and accepts if the output is $x$. By the OW2H-compatibility of $W$ for $\bbS$, applying Theorem \ref{thm:ow2h-punc-full} to measurement $\cM$ and (tuples of) oracles $S_2$ and $S_3$, for all $q(\lambda)$-query adversaries $\cB$, 
\begin{align}
    \left|\Pr_{f,x,x'}\left[\cB^{W^{S_2}}(f(x))=x\right]-\Pr_{f,x,x'}\left[\cB^{W^{S_3}}(f(x))=x\right]\right|\le \frac{\poly(q(\lambda))}{2^{c\lambda}},
\end{align}
which along with \cref{clm:owp-3} implies the claim.
\end{proof}

Finally, we bound the probability that $\cB$ outputs $x$ when the oracle is $S_4$.
\begin{MyClaim}
\label{clm:owp-6}
For all $q:\N\to\N$ such that $q(\secp)\le2^{o(\secp)}$, all $q(\lambda)$-query adversaries $\cB$, and all large enough $\lambda$,
\begin{align}
    \Pr_{f,x} \left[\cB^{W^{S_4}}(f(x))=x\right]\le \frac{\poly(q(\lambda))}{2^{c^2\lambda/2}}.
\end{align}
\end{MyClaim}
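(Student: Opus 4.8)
The plan is to mirror the structure used to prove Claims~\ref{clm:owp-2} and~\ref{clm:owp-4}, but this time switching the oracle from $S_3$ to $S_4 = S$ and bootstrapping off the bound already established in Claim~\ref{clm:owp-4}. Recall that (for the fixed $\obf$) the tuples $S_3 = (f_{\{x\}},\obf,\eval^{f_{\{x\}}})$ and $S_4 = (f,\obf,\eval^{f})$ agree on every input outside
$T := \{y : f_{\{x\}}(y) \neq f(y) \ \lor\ \eval^{f_{\{x\}}}(y) \neq \eval^{f}(y)\}$,
and that $T$ contains $x$ itself (together with possibly many $\eval$-inputs whose underlying circuit queries $x$).

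First I would show that no $\poly(q(\lambda))$-query algorithm with oracle access to $W^{S_3}$ and input $f(x)$ can output an element of $T$ with probability more than $\poly(q(\lambda))/2^{c\lambda}$. Suppose $\cB'$ is such an algorithm achieving probability $p$. Applying Lemma~\ref{clm:owp-find} with $f' = f_{\{x\}}$ and $f'' = f$ (the dependence of $\mathsf{Find}$ on the truth table of $\obf$ is harmless since $\obf$ is fixed throughout the proof), running $\mathsf{Find}^{f_{\{x\}}}$ on the output of $\cB'$ returns a point where $f_{\{x\}}$ and $f$ differ --- which, since $f$ is a permutation, must be $x$ --- with probability at least $p/(2\poly(q(\lambda)))$; moreover a call to $\mathsf{Find}^{f_{\{x\}}}$ on an input of length $\poly(q(\lambda))$ costs $\poly(q(\lambda))$ calls to $W^{S_3}$. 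This yields a $\poly(q(\lambda))$-query algorithm that, from input $f(x)$ and oracle access to $W^{S_3}$, outputs $x$ with probability at least $p/(2\poly(q(\lambda)))$; by Claim~\ref{clm:owp-4} this is at most $\poly(q(\lambda))/2^{c\lambda}$, hence $p \le \poly(q(\lambda))/2^{c\lambda}$.

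Next I would invoke OW2H-compatibility of $W$ for $\bbS$ by applying Theorem~\ref{thm:ow2h-punc-full} with $O_1 = S_3$, $O_2 = S_4$, side information $z = f(x)$, and the measurement $\cM$ that accepts iff the measured output equals $x$; note $T$ is exactly the differing set of $O_1,O_2$. The point-finding reduction it supplies queries $W^{O_1} = W^{S_3}$, so the previous paragraph bounds its success probability, and we obtain
\begin{align*}
\Big|\Pr_{f,x}\big[\cB^{W^{S_3}}(f(x))=x\big]-\Pr_{f,x}\big[\cB^{W^{S_4}}(f(x))=x\big]\Big| &\le \poly(q(\lambda))\cdot\Big(\frac{\poly(q(\lambda))}{2^{c\lambda}}\Big)^{c}\\
&\le \frac{\poly(q(\lambda))}{2^{c^2\lambda}}.
\end{align*}
Combining this with Claim~\ref{clm:owp-4} and using $c^2 < c$ (so that $2^{-c^2\lambda}$ dominates $2^{-c\lambda}$) gives $\Pr_{f,x}[\cB^{W^{S_4}}(f(x))=x] \le \poly(q(\lambda))/2^{c^2\lambda} \le \poly(q(\lambda))/2^{c^2\lambda/2}$, which is the claim; the hypothesis $q(\lambda)\le 2^{o(\lambda)}$ is not needed here but is what makes this bound meaningful once it is plugged into Theorem~\ref{thm:owp}.

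The conceptual obstacle this argument must get past is an apparent circularity: the differing point $x$ between $S_3$ and $S_4$ is itself the target of the inversion game we are bounding, so ``finding a point of $T$'' looks just as hard as the claim we are proving. The resolution is entirely in the bookkeeping of Theorem~\ref{thm:ow2h-punc-full}: the point-finding reduction is given only the \emph{punctured} oracle $W^{S_3}$, against which inversion has already been bounded in Claim~\ref{clm:owp-4}, rather than the live oracle $W^{S_4}$ --- so ordering the hybrids $S_1 \to S_2 \to S_3 \to S_4$ and only ever reducing ``backwards'' to the already-controlled oracle is what makes the induction close. The only other thing to be careful about is tracking the $\poly(q)$ query-count and probability losses through the two nested exponents ($c$ and then $c^2$).
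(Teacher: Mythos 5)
Your proposal is correct and follows essentially the same route as the paper's proof: bound the probability of hitting the differing set $T$ under $W^{S_3}$ via \cref{clm:owp-find} and \cref{clm:owp-4}, then invoke \cref{thm:ow2h-punc-full} for the pair $(S_3,S_4)$ with the measurement accepting $x$, and add the \cref{clm:owp-4} bound. The only difference is bookkeeping: the paper absorbs the $\poly(q)$ factors at the $T$-finding step using $q(\lambda)\le 2^{o(\lambda)}$ to get $1/2^{c\lambda/2}$ there, whereas you carry them explicitly and land on the (slightly stronger) bound $\poly(q)/2^{c^2\lambda}$ before relaxing to the claimed $\poly(q)/2^{c^2\lambda/2}$.
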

\begin{proof}[Proof of \cref{clm:owp-6}]
Let 
\begin{align}
    T :=\{y:f(y)\ne f_{\{x\}}(y) \lor \eval^{f}(y)\ne\eval^{f_{\{x\}}}(y)\}. 
\end{align}
We show that for all $q(\lambda)$-query adversaries $\cB$, for large enough $\lambda$
\begin{align}
    \Pr_{f,x,x'}\big[\cB^{W^{S_3}}(f(x))\in T\big]\le \frac{1}{2^{c\secp/2}}.
\end{align}
If $\cB$ could exceed the bound, then by setting $f'=f_{x}$ and $f''=f$ in \cref{clm:owp-find} running $\mathsf{Find}^{f_{\{x\}}}$ on the output of $\cB$ would result in $x$ with probability greater than $\frac{1}{2q(\lambda)\cdot2^{c\lambda/2}}$. Since a call to $\mathsf{Find}^{f_{\{x\}}}$ on an input of size at most $q(\lambda)$ can be implemented with at most $q(\lambda)$ calls to $W^{S_3}$, this gives a $2q(\lambda)$ query algorithm that finds $x$ with probability greater than $\frac{1}{2q(\lambda)\cdot2^{c\lambda/2}}$, which contradicts \cref{clm:owp-4} for large enough $\lambda$ since $q(\lambda) \leq 2^{o(\lambda)}$. 

Let $\cM$ be the measurement that measures in computational basis and accepts if the output is  $x$. By the OW2H-compatibility of $W$ for $\bbS$, applying Theorem \ref{thm:ow2h-punc-full} to measurement $\cM$ and (tuples of) oracles $S_3$ and $S_4$, for all $q(\lambda)$-query adversaries $\cB$,
\begin{align}
\Big|\Prr_{S,x,x'}\big[\cB^{W^{S_3}}(f(x))=x\big]-\Prr_{S,x,x'}\big[\cB^{W^{S_4}}(f(x))=x\big]\Big|\leq \frac{\poly(q(\lambda))}{2^{c^2\lambda/2}},
\end{align}
which along with Claim \ref{clm:owp-4} implies the claim.
\end{proof}
\noindent Since $S_4=S$ and $q(\lambda)\le 2^{o(\lambda)}$, the theorem follows from Claim \ref{clm:owp-6}.
\end{proof}

\subsubsection{iO Secure Against Any OW2H-Compatible Unitary Collection}
We show that $S=(f,\obf,\eval^f)$ realizes iO for $f$-aided classical circuits  secure against any OW2H-compatible unitary collection. Here, the obfuscate and evaluate algorithms are allowed access to $S$ while the classical circuits to be obfuscated are allowed access to $f$. This captures obfuscation of circuits implementing a 
OWP, which is essential in most constructions of cryptographic primitives from iO. The adversary is allowed access to $S$ as well as the OW2H-compatible unitary.

To obfuscate an $f$-aided circuit $C$, the construction outputs $\obf(C,r)$ for randomly chosen $r$. To evaluate an obfuscated circuit $\obfC$ on input $x$, the construction returns $\eval^f(\obfC,r)$. The following theorem proves security.

\begin{theorem}
\label{thm:iO_security}
Let $S= (f, \obf, \eval^f)$ be the oracle defined in \cref{def:oracle_set} and let $\bbS$ be the set of all possible oracles with the same interface as $S$. Let $W=\{W^O\}_{O\in\bbS}$ be a collection of unitaries that are OW2H-compatible for $\bbS$ (\cref{def:punc}) with constant $c\in(0,1)$.
For all functions $q:\N \rightarrow \N$ such that $q(\lambda)\le 2^{o(\lambda)}$, all $q(\lambda)$-query adversaries $\cA$, and all large enough $\lambda$,
\begin{align}
\Big|\Prr_S\big[\cG^{\obf, f}_{0,\lambda}(\cA^{W^S})=1\big]-\Prr_S\big[\cG^{\obf, f}_{1,\lambda}(\cA^{W^S})=1\big]\Big|\leq \frac{1}{2^{c^2\lambda/2}},
\end{align}
where $\cG$ is the iO security game defined in \cref{def:io-game}.\kabir{there is an important subtlety here; this only rules out adversaries with positive advantage. However, since the game is efficient, standard techniques from \cite{bg11} show a conversion from standard adversaries to positive advantage adversaries; see also appendix B in \cite{TCC:BitDeg19} which has a proof custom to our setting} \kabir{actually standard io security only cares about positive advantage}
\end{theorem}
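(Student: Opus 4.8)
The plan is to prove security in the spirit of \cref{thm:owp} --- puncturing $\obf$ rather than $f$ --- together with a symmetry argument between the two challenge circuits. Write $\cA=(\cA_1,\cA_2)$, where $\cA_1$ outputs the challenge pair $(C_0,C_1)$ and an internal quantum state $\sigma$, and $\cA_2$ receives $\sigma$ and the obfuscation $\obfC=\obf(C_b;r^*)$ and outputs a guess; here $r^*$ is the challenger's randomness, which we sample only after $\cA_1$ halts. ``The game with oracle $O$'' will mean: run $\cA_1^{W^O}$; abort (identically for both bits) if $C_0^f\not\equiv C_1^f$; have the challenger form $\obfC=\obf(C_b;r^*)$ from its knowledge of the \emph{true} $\obf$; and run $\cA_2^{W^O}(\sigma,\obfC)$.

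\textbf{Step 1 (pass to a punctured oracle).} Let $\widetilde{S}:=(f,\obf',\eval^f)$, where $\obf'$ equals $\obf$ except that the whole $r^*$-slice $\{\obf(C,r^*)\}_{C\in\{0,1\}^\lambda}$ is deleted, while the $\eval^f$ component still inverts using the \emph{true} $\obf$. Then $S$ and $\widetilde{S}$ agree as functions outside $T:=\{(C,r^*):C\in\{0,1\}^\lambda\}$ (the $f$- and $\eval^f$-components are literally identical), so replacing $W^S$ by $W^{\widetilde{S}}$ in both phases of $\cA$ costs, by \cref{thm:ow2h-punc-full}, at most $\poly(q)$ times a power of the probability that a $\poly(q)$-query algorithm with access to $W^S$ (and to $\cA$, and to $(\sigma,\obfC)$) outputs a point of $T$. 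Finding a point of $T$ is the same as recovering $r^*$, and since $r^*$ is independent of $S$ and of $\sigma$, the only handle on it is $\obfC=\obf(C_b;r^*)$ with $C_b$ known; so recovering $r^*$ is exactly inverting the random injection $\obf(C_b;\cdot):\{0,1\}^\lambda\to\{0,1\}^{3\lambda}$ at $\obfC$, which --- by a scaled-down replay of \cref{thm:owp} for $\obf$ in place of $f$ (puncture $\obf$ at the single point $(C_b,r^*)$, note $r^*$ becomes essentially independent of the view, transport back via \cref{thm:ow2h-punc-full}) --- happens with probability $\le\poly(q(\lambda))/2^{c\lambda}$ (and $\le 2^{-\lambda}$ for the analogous switch made before $\cA_1$ runs, where $r^*$ is uncorrelated with everything). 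Hence Step 1 costs at most $\poly(q)\cdot 2^{-c^2\lambda}$, which is below $2^{-c^2\lambda/2}$ once $q(\lambda)\le 2^{o(\lambda)}$ and $\lambda$ is large. (Applying \cref{thm:ow2h-punc-full} to $\cA_2$ with $\sigma$ as a quantum auxiliary input is harmless, since its proof is indifferent to the adversary's initial state.)

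\textbf{Step 2 (zero advantage in the punctured world).} In the game with oracle $\widetilde{S}$, the plan is to show $\Pr[\cG_0=1]=\Pr[\cG_1=1]$ \emph{exactly}, for any number of queries --- so no OW2H-compatible unitary can help. Fix $f,\obf,r^*$; these determine $\widetilde{S}$, and running $\cA_1^{W^{\widetilde{S}}}$ determines $(C_0,C_1)$, which are functionally equivalent (else both sides abort). Let $\Phi$ swap the two values $\obf(C_0;r^*)\leftrightarrow\obf(C_1;r^*)$. Then $\Phi$ \emph{fixes} $\widetilde{S}$: $\obf'$ never reads the $r^*$-slice, $f$ is untouched, and the $\eval^f$-component is preserved since on the two swapped strings it returns $C_0^f(\cdot)$ and $C_1^f(\cdot)$, which agree. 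So $\cA_1^{W^{\widetilde{S}}}$ still outputs the same $(C_0,C_1)$, $\Phi$ is a measure-preserving involution on the uniform distribution over injective $\obf$, and it sends the $b=0$ challenge $\obf(C_0;r^*)$ to the $b=1$ challenge $\obf(C_1;r^*)$ while leaving the oracle --- hence $\cA_2$ and the output --- unchanged; re-indexing $\Exp_\obf$ by $\Phi$ gives the claimed equality.

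Chaining the (at most two) two-sided hybrid gaps of Step 1 with the exact equality of Step 2 bounds $|\Pr[\cG_0=1]-\Pr[\cG_1=1]|$ by $2^{-c^2\lambda/2}$ for all sufficiently large $\lambda$. The main obstacle I expect is Step 1: $W^S$ is an arbitrary OW2H-compatible unitary and cannot be analyzed directly, so every hybrid must pass through \cref{def:punc} and \cref{thm:ow2h-punc-full}, which forces the ``$\obf$ is hard to invert'' sub-claim to itself be a scaled-down instance of the \cref{thm:owp} argument; a secondary subtlety --- the adaptive, oracle-dependent choice of $(C_0,C_1)$ --- is handled by sampling $r^*$ after $\cA_1$ (making $\sigma$ independent of $r^*$) and by $\Phi$ fixing $\widetilde{S}$ (making $\Phi$ a genuine involution).
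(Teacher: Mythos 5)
Your overall route is the same as the paper's: puncture the entire $r^*$-slice of $\obf$ while letting $\eval^f$ keep inverting with the \emph{true} $\obf$, observe that in the punctured world the challenge bit is perfectly hidden (the paper's Claim~\ref{clm:obf-1}; your explicit involution $\Phi$ swapping $\obf(C_0;r^*)\leftrightarrow\obf(C_1;r^*)$ is a nice, slightly cleaner way to phrase that symmetry), and then pay for the switch via OW2H-compatibility by reducing to the hardness of recovering $r^*$. Step~2 is fine. The problems are in how you execute Step~1, and they are exactly the points the paper's proof is engineered to avoid. First, you invoke \cref{thm:ow2h-punc-full} for the $\cA_2$-phase switch with the intermediate \emph{quantum} state $\sigma$ (correlated with the oracles) as auxiliary input, waving this off as harmless; but \cref{def:punc} has no auxiliary input at all and \cref{thm:ow2h-punc-full} only allows classical side information $z$, so as stated your hybrid is not licensed by the toolkit and would require strengthening the definition/theorem. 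The paper sidesteps this by handing the reduction the whole slice $\obf(*,r)$ as \emph{classical} side information, so that the entire two-phase game (including computing the adaptively chosen challenge $\obf(C_b,r)$) can be run as a single oracle adversary and both phases are switched in one application of \cref{thm:ow2h-punc-full} (Claim~\ref{clm:obf-7}).

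Second, your inner one-wayness claim for $\obf$ is justified by ``puncture $\obf$ at the single point $(C_b,r^*)$, note $r^*$ becomes essentially independent of the view.'' Taken literally this is false: the location of the puncture \emph{is} $r^*$, so in the singly-punctured world $r^*$ is not independent of the view, and bounding a query-bounded finder's ability to locate the puncture is precisely the statement you are trying to prove (circular). The reason \cref{thm:owp} works is the two-point trick: puncture at \emph{both} the challenge point and an independently sampled dummy point, argue the dummy is information-theoretically hidden, and transfer hardness by symmetry. The paper's proof of \cref{thm:iO_security} does the analogous thing with two slices, $S_1,\dots,S_4$ punctured at $(*,r')$, $(*,r')\cup(*,r)$, $(*,r)$, and nothing, using the $r\leftrightarrow r'$ symmetry under $S_2$ (Claims~\ref{clm:obf-2}--\ref{clm:obf-6}); note also that the hardness statement it actually needs is ``find $r$ given the punctured oracle $W^{S_3}$ and the whole slice $\obf(*,r)$,'' not inversion relative to the full $W^S$. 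So your citation of ``a scaled-down replay of \cref{thm:owp}'' points at the right fix, but the shortcut you describe in its place does not work and must be replaced by that two-slice symmetric hybrid argument.
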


\begin{proof}
For any set $\bbX$ and all $x^*\in\{0,1\}^*$, we define
\begin{align}
    \obf_{\bbX}(x^*)=
    \begin{cases}
    \obf(x^*) & \text{if } x^*\notin\bbX,\\
    \bot & \text{if } x^*\in\bbX.
    \end{cases}
\end{align}
Also for any $r^* \in \bin^\lambda$ define the set 
\begin{align}
    (*,r^*) : = \{(C,r^*): C\in\bin^\lambda\}.
\end{align}
Next, for all $r^*\in\{0,1\}^\lambda$ define
\begin{align}
S_{r^*}:=\big(f,\obf_{(*,r^*)},\eval^{f}\big).
\end{align}
Note that $\eval^{f}$ continues to use $\obf^{-1}$ to compute $C$, and the change in oracles only affects oracle queries made by $C$. 
For all $b\in\{0,1\}$, for all $\lambda \in \N$, for any adversary $\cA$, define 
$\cG'_{b,\lambda,
\cA}$ \kabir{notation is weird, but I want to make it clear that here the oracle is chosen by the game as opposed to the io security game where the oracle given to the adversary is specified by running the game on $\cA^{W^S}$ }
as:
\begin{itemize}
\item $r\leftarrow\{0,1\}^\lambda$.
\item $(C_0,C_1)\leftarrow \cA^{W^{S_r}}(1^\lambda)$.
\item If $C^f_0$ is not functionally equivalent to $C^f_1$, or $C_1\notin\{0,1\}^{\lambda}$, or $C_0\notin\{0,1\}^{\lambda}$, return $\bot$.
\item Return $\cA^{W^{S_{r}}}\big(\obf(C_b,r)\big)$.
\end{itemize}

\begin{MyClaim}
\label{clm:obf-1}
For all adversaries $\cA$, \[
\Prr_S\big[\cG'_{0,\lambda, \cA}=1\big]=\Prr_S\big[\cG'_{1,\lambda,
\cA}=1\big].\]
\end{MyClaim}
\begin{proof}
In $\cA$’s view, $\obf(C_0,r)$ and $\obf(C_1,r)$ are sampled identically, uniformly at random under the constraint that they do not collide with $\obf(C,r')$ for any $r'\ne r$ or any $C\notin\{C_0,C_1\}$, and under the constraint that for all $x$
\begin{align}
\eval^{f}\big(\obf(C_0,r),x\big)=\eval^{f}\big(\obf(C_1,r),x\big)=C_0^{f}(x).
\end{align}
By symmetry both games are identical. \kabir{should explain this better}
\end{proof}
\noindent By Claim \ref{clm:obf-1} it suffices to show that for all $b\in\{0,1\}$, all functions $q:\N \rightarrow \N$ such that $q(\lambda)\leq 2^{o(\lambda)}$, all $q(\lambda)$-query adversaries $\cA$, and all large enough $\lambda$,
\begin{align}
\Big|\Prr_S\big[\cG^{\obf, f}_{b,\lambda}(\cA^{W^S})=1\big]-\Prr_S\big[\cG'_{b,\lambda,\cA}=1\big]\Big|\le \frac{1}{2\cdot 2^{c^2\lambda/2}}.
\end{align}
To prove this, define a sequence of random variables as follows. Let $r'\leftarrow\{0,1\}^\lambda$ and $r\leftarrow\{0,1\}^\lambda$. Define
\begin{align}
S_1&:=(f,\obf_{(*,r')},\eval^{f})\\
S_2&:=(f,\obf_{(*,r') \cup (*,r)},\eval^{f})\\
S_3&:=(f,\obf_{(*,r)},\eval^{f})\\
S_4&:=(f,\obf,\eval^{f}).
\end{align}
Note that the $S_i$s are correlated with the choice of $r$ and $r'$ which are both sampled uniformly at random as stated above.
For any $r^*\in\bin^\lambda$ let $\obf(*,r^*):= \{\obf(C,r^*): C \in \bin^\lambda\}$. Note that the random variables are correlated with $r$ and $r'$ which are both sampled uniformly at random as specified above.

\begin{MyClaim}
\label{clm:obf-2}
For all adversaries $\cB$,
\begin{align}
\Prr_{r,r',S}\big[\cB^{W^{S_1}}(\obf(*,r'))=r\big]\le \frac{1}{2^\lambda}.
\end{align}
\end{MyClaim}
\begin{proof}
Follows from the observation that $r$ is sampled independent of the adversary’s view.
\end{proof}
\begin{MyClaim}
\label{clm:obf-3}
For all functions $q:\N \rightarrow \N$, all $q(\lambda)$-query adversaries $\cB$, and all large enough $\lambda$,
\begin{align}
\Prr_{r,r',S}\big[\cB^{W^{S_2}}(\obf(*,r'))=r\big]\le \frac{\poly(q(\lambda))}{2^{c\lambda}}.
\end{align}
\end{MyClaim}
\begin{proof}
Let $T:=\{y:\obf_{(*,r')}(y)\ne \obf_{(*,r') \cup (*,r)}(y)\}$. Note that the (tuples of) oracles in $S_1$ and $S_2$ can only differ on inputs in $T$.
Since if $y\in T$ then $y\in (*,r)$, and since \cref{clm:obf-2} bounds the probability of finding $r$,  for all functions $q:\N \rightarrow \N$ , all $q(\lambda)$-query adversaries $\cB$, and all large enough $\lambda$,
\begin{align}
\Prr_{r,r',S}\big[\cB^{W^{S_2}}(\obf(*,r'))\in T\big]\le \frac{1}{2^\lambda}.
\end{align} 
Let $\cM$ be the measurement that measures in computational basis and accepts if the output is  $r$. By the OW2H-compatibility of $W$ for $\bbS$, applying Theorem \ref{thm:ow2h-punc-full} to measurement $\cM$, (tuples of) oracles $S_1$ and $S_2$, and $z=(\obf(*,r'))$, for all $q(\lambda)$-query adversaries $\cB$, 
\begin{align}
\Big|\Prr_{r,r',S}\big[\cB^{W^{S_1}}(\obf(*,r'))=r\big]-\Prr_{r,r',S}\big[\cB^{W^{S_2}}(\obf(*,r'))=r\big]\Big|\le \frac{\poly(q(\lambda))}{2^{c\lambda}},
\end{align}
which along with Claim \ref{clm:obf-2} implies the claim.
\end{proof}

\begin{MyClaim}
\label{clm:obf-4}
For all functions $q:\N \rightarrow \N$, all $q(\lambda)$-query adversaries $\cB$, and all large enough $\lambda$,
\begin{align}
\Pr_{r,r',S}\Big[\cB^{W^{S_2}}(\obf(*,r))=r\Big]
= \Pr_{r,r',S}\Big[\cB^{W^{S_2}}(\obf(*,r'))=r\Big] \leq \frac{\poly(q(\lambda))}{2^{c\lambda}}.
\end{align}
\end{MyClaim}
\begin{proof}
Follows from Claim \ref{clm:obf-3} and the fact that $r$ and $r'$ are symmetric in the view of $\cB$.
\end{proof}

\begin{MyClaim}
\label{clm:obf-5}
For all adversaries $\cB$,
\begin{align}
\Pr_{r,r',S}\Big[\cB^{W^{S_3}}(\obf(*,r))=r'\Big] \leq \frac{1}{2^\lambda}.
\end{align}
\end{MyClaim}
\begin{proof}
Follows from the observation that $r'$ is sampled independently of the adversary’s view.
\end{proof}

\begin{MyClaim}
\label{clm:obf-6}
For all functions $q:\N \rightarrow \N$, all $q(\lambda)$-query adversaries $\cB$, and all large enough $\lambda$,
\begin{align}
\Pr_{r,r',S}\Big[\cB^{W^{S_3}}(\obf(*,r))=r\Big]\le \frac{\poly(q(\lambda))}{2^{c\lambda}}.
\end{align}
\end{MyClaim}
\begin{proof}
Let $T:=\{y:\obf_{(*,r)}(y)\ne \obf_{(*,r') \cup (*,r)}(y)\}$. Note that the (tuples of) oracles in $S_3$ and $S_2$ can only differ on inputs in $T$. Since if $y\in T$ then $y\in (*,r')$, and since \cref{clm:obf-5} bounds the probability of finding $r'$,  for all functions $q:\N \rightarrow \N$, all $q(\lambda)$-query adversaries $\cB$, and all large enough $\lambda$,
\begin{align}
\Pr_{r,r',S}\big[\cB^{W^{S_3}}(\obf(*,r))\in T\big]\le \frac{1}{2^\lambda}.
\end{align}
 Let $\cM$ be the measurement that measures in computational basis and accepts if the output is  $r$. By the OW2H-compatibility of $W$ for $\bbS$, applying Theorem \ref{thm:ow2h-punc-full} to measurement $\cM$, (tuples of) oracles $S_3$ and $S_2$, and $z=(\obf(*,r))$,  for all $q(\lambda)$-query adversaries $\cB$
\begin{align}
\Big|\Prr_{r,r',S}\big[\cB^{W^{S_2}}(\obf(*,r))=r\big]-\Prr_{r,r',S}\big[\cB^{W^{S_3}}(\obf(*,r))=r\big]\Big|\le \frac{\poly(q(\lambda))}{2^{c\lambda}},
\end{align}
which, along with Claim \ref{clm:obf-4}  implies the claim.
\end{proof}

\begin{MyClaim}
\label{clm:obf-7}
For all functions $q:\N \rightarrow \N$, all $q(\lambda)$-query adversaries $\cB$, and all large enough $\lambda$,
\begin{align}
\Big|\Prr_{r,r',S}\big[\cB^{W^{S_3}}(\obf(*,r))=1\big]-\Prr_{r,r',S}\big[\cB^{W^{S_4}}(\obf(*,r))=1\big]\Big|\le \frac{\poly(q(\lambda))}{2^{c^2\lambda}}.
\end{align}
\end{MyClaim}
\begin{proof}
Let $T:=\{y:\obf_{(*,r)}(y)\ne \obf(y)\}$. Note that the (tuples of) oracles in $S_3$ and $S_4$ can only differ on inputs in $T$. Since if $y\in T$ then $y\in (*,r)$, and since \cref{clm:obf-6} bounds the probability of finding $r$,  for all functions $q:\N \rightarrow \N$ , all $q(\lambda)$-query adversaries $\cB$, and all large enough $\lambda$,
\begin{align}
\Pr_{r,r',S}\big[\cB^{W^{S_3}}(\obf(*,r))\in T\big]\le \frac{\poly(q)}{2^{c\lambda}}.
\end{align}
Let $\cM$ be the measurement that measures in computational basis and accepts if the output is  $1$. By the OW2H-compatibility of $W$ for $\bbS$, applying Theorem \ref{thm:ow2h-punc-full} to measurement $\cM$, (tuples of) oracles $S_3$ and $S_4$, and $z=\obf(*,r)$, for all $q(\lambda)$-query adversaries $\cB$,
\begin{align}
\Big|\Prr_{r,r',S}\big[\cB^{W^{S_3}}(\obf(*,r))=1\big]-\Prr_{r,r',S}\big[\cB^{W^{S_4}}(\obf(*,r))=1\big]\Big|\le \frac{\poly(q(\lambda))}{2^{c^2\lambda}}
\end{align}
which concludes the proof of the claim.
\end{proof}
\noindent Now suppose there exists a function $q:\N \rightarrow \N$, a bit $b$, and a $q(\lambda)$-query adversary $\cA$, such that $q(\lambda)\leq 2^{o(\lambda)}$ and for all large enough $\lambda$,
\begin{align}
\Big|\Prr_S\big[\cG^{\obf,f}_{b,\lambda}(\cA^{W^S})=1\big]-\Prr_S\big[\cG'_{b,\lambda, \cA}=1\big]\Big|\ge \frac{1}{2\cdot 2^{c^2\lambda/2}}.
\end{align}
We use this to construct $\cB$ that contradicts Claim \ref{clm:obf-7}. For $k\in\{3,4\}$, let $\cB^{W^{S_k}}(\obf(*,r))$ perform the following:
\begin{itemize}
\item $(C_0,C_1)\leftarrow \cA^{W^{S_k}}(1^\lambda)$. 
\item If $C^f_0$ is not functionally equivalent to $C^f_1$, or $C_1\notin\{0,1\}^{\lambda}$, or $C_0\notin\{0,1\}^{\lambda}$, return $\bot$. 
\item Return $\cA^{W^{S_k}}\big(\obf(C_b,r)\big)$. Note that $\cB$ receives the entire set $\obf(*,r)$ and can therefore compute $\obf(C_0,r)$.
\end{itemize}
Since $S_3\equiv S_r$ and $S_4\equiv S$, this implies that 
\begin{align}
\Prr_{r,r',S}\big[\cB^{W^{S_3}}(\obf(*,r))=1\big] = \Prr_S\big[\cG'_{b,\lambda,
\cA}=1\big],
\end{align}
and
\begin{align}
\Prr_{r,r',S}\big[\cB^{W^{S_4}}(\obf(*,r))=1\big] = \Prr_S\big[\cG^{\obf,f}_{b,\lambda}(\cA)=1\big].
\end{align}
Therefore,
\begin{align}
\Big|\Prr_{r,r',S}\big[\cB^{W^{S_3}}(\obf(*,r))=1\big]-\Prr_{r,r',S}\big[\cB^{W^{S_4}}(\obf(*,r))=1\big]\Big|\ge \frac{1}{2\cdot 2^{c^2\lambda/2}} 
\end{align}
which, along with the observation that $q(\lambda) \leq 2^{o(\lambda)}$ so all for large enough $\lambda$, $\poly(q(\lambda))\leq 2^{o(\lambda)} < \frac{1}{2}\cdot2^{c^2\lambda/2}$,
contradicts Claim \ref{clm:obf-7}.
\end{proof}

\if0
\kabir{MAJOR TODOS: show that OSS and CRHF don't exist relative to $\CCol^S$, and that lightning does not exist relative to $\QCol^S$, then construct io+OWP and show security relative to $\QCol^S$ and $\CCol^S$. second part is by switching from $\CCol^S$ to $W^S$ where $W^S$ is the compression unitary for $\CCol^S$ and showing that we can simulate $\CCol^S$, then use the Theorems for puncturable oracles}
\dakshita{connection with SZK/QSZK hardness -- say even with unitary oracles -- and impossibility of homomorphic encryption}

\kabir{REMINDER: Talk about conjugate queries to QCol, they can be simulated just by conjugating the circuit to be queried.}

\kabir{TODO: Decouple circuit length and randomness length in io proof (do we need to do this for the TDP construction? why not just pad the circuit with zeros)}
\fi

\section{Separating Distributional Collision Resistance from Indistinguishability Obfuscation and One-Way Permutations}
\label{sec:CR}

In this section, we show the impossibility of fully black-box constructions of distributional collision resistant puzzles (dCRPuzzs)
(and therefore also distributional collision-resistant hashing (dCRH) and collision-resistant hash functions) from iO and OWPs.
We start with the definition of dCRH and dCRPuzzs, as well as the definition of a black-box construction of dCRPuzz from iO and OWPs.

\begin{definition}[Distributional Collision-Resistant Hashing (dCRH)~\cite{STOC:DubIsh06,EC:BHKY19}]
\label{def:dCRH}
Let $\{\cH_\secp : \bit^{n(\secp)} \to \bit^{m(\secp)}\}_{\secp\in\mathbb{N}}$
be an efficient function family ensemble. Here $n$ and $m$ are polynomials.
We say that it is a distributional collision-resistant hash (dCRH) function family 
if there exists a polynomial $p$ such that for any QPT algorithm $\cA$, 
\begin{align}
\SD(\{h, \cA(h)\}_{h\gets \cH_\secp}, \{h, \mathsf{Col}(h)\}_{h\gets \cH_\secp}) \ge\frac{1}{p(\secp)}
\end{align}
for all sufficiently large $\secp\in\mathbb{N}$.
Here $\mathsf{Col}(h)$ is the following distribution.
\begin{enumerate}
    \item 
    Sample $x\gets\bit^{n(\secp)}$.
    \item 
    Sample $x'\gets h^{-1}(h(x))$.
    \item 
    Output $(x,x')$.
\end{enumerate}
\end{definition}

\begin{definition}[dCRPuzz~\cite{MorShiYam_PDQP}]
\label{def:dCRPuzzs}
    A distributional collision-resistant puzzle (dCRPuzz) is a pair $(\Setup,\Samp)$ of algorithms such that
    \begin{itemize}
        \item $\Setup(1^\secp)\to\pp:$ A QPT algorithm that, on input the security parameter $\secp$, outputs a classical public parameter $\pp$.
        \item $\Samp(\pp)\to(\puzz,\ans):$ A QPT algorithm that, on input $\pp$, outputs two bit strings $(\puzz,\ans)$. 
    \end{itemize}
    We require the following property:
    there exists a polynomial $p$ such that for any QPT adversary $\cA$, for all sufficiently large $\secp\in\mathbb{N}$,
    \begin{align}
        \SD(\{\pp,\cA(\pp)\}_{\pp\gets\Setup(1^\secp)},\{\pp,\cC(\pp)\}_{\pp\gets\Setup(1^\secp)} )\ge\frac{1}{p(\secp)}   
    \end{align}
     where $\cC(\pp)$ is the following distribution:
    \begin{enumerate}
        \item Run $(\puzz,\ans)\gets\Samp(\pp)$.
        \item Sample $\ans'$ with the conditional probability $\Pr[\ans'|\puzz]\coloneqq\frac{\Pr[(\ans',\puzz)\gets\Samp(\pp)]}{\Pr[\puzz\gets\Samp(\pp)]}$. 
        \item Output $(\puzz,\ans,\ans')$.
    \end{enumerate}
    If $(\Setup,\Samp)$ has access to an oracle $O$ and satisfies the above property against any oracle-aided QPT adversary $\cA^O$, we say that $(\Setup,\Samp)$ is a dCRPuzz relative to an oracle $O$.
\end{definition}

\begin{definition}[Fully Black-Box Construction of dCRPuzz from Subexponentially-Secure iO and OWPs]
\label{def:BB_dCRPuzz}
    A fully black-box construction of a dCRPuzz from iO and OWPs consists of a pair of oracle-aided algorithms $(\Setup,\Samp)$, a polynomial $p$, a function $q:\N\rightarrow\N$ , and $q(\lambda)$ query algorithm $\cR$ such that  $q(\lambda) \leq 2^{o(\lambda)}$ and the following holds. 
    
    Let $f$ be any permutation (i.e. a length-preserving injective map on bitstrings), and let $(\obf,\eval)$ be any pair of functions that satisfy the correctness of iO for $f$-aided classical circuits, i.e., for all $\lambda\in\N$, for all oracle aided classical circuits $C \in \bin^\lambda$, for all $x \in \bin^*$ such that $|x|$ is the input length of $C$, and for all $r\in \bin^\lambda$
        \begin{align}
            \eval(\obf(C,r),x) = C^f(x).
        \end{align}
    Then the following properties must hold.
    \begin{itemize}
        \item \textbf{Correctness:} 
        $(\Setup^{f,\obf,\eval},\Samp^{f,\obf,\eval})$ satisfies the syntax of dCRPuzz, i.e. $\Setup^{f,\obf,\eval}(1^\lambda)$ outputs classical public parameter $\pp$ and $\Samp^{f,\obf,\eval}(\pp)$ outputs bitstrings $(\puzz, \ans)$.
        \item \textbf{Black-Box Security Proof:} 
       Let  $\cA$ be any oracle such that
        \begin{align}
        \label{eq:BB_break_dCRPuzz}
            \SD(\{\pp,\cA(\pp, r')\}_{\substack{\pp\gets\Setup^{f,\obf,\eval}(1^\secp)\\r' \leftarrow\bin^\lambda}},\{\pp,\cC(\pp)\}_{\pp\gets\Setup^{f,\obf,\eval}(1^\secp)} )<\frac{1}{p(\secp)}
        \end{align}
        \mor{$r'\gets\bit^\lambda$ is typo?}
        where $\cC(\pp)$ is the following distribution:
    \begin{enumerate}
        \item Run $(\puzz,\ans)\gets\Samp^{f,\obf,\eval}(\pp)$.
        \item Sample $\ans'$ with the conditional probability $\Pr[\ans'|\puzz]\coloneqq\frac{\Pr[(\ans',\puzz)\gets\Samp^{f,\obf,\eval}(\pp)]}{\Pr[\puzz\gets\Samp^{f,\obf,\eval}(\pp)]}$. 
        \item Output $(\puzz,\ans,\ans')$.
    \end{enumerate}
        Then there exists a polynomial $p'$ such that for infinitely many $\secp$, 
        \begin{align}
        \label{eq:BB_break_OWP}
            \Pr_{x\leftarrow\bin^\lambda} \left[ \cR^{\cA,f,\obf,\eval} (f(x))=x \right] \ge \frac{1}{p'(\secp)}
        \end{align}
        or 
        \begin{align}
        \label{eq:BB_break_iO}
            \left| \Pr\left[ \cG^{\obf, f}_{0,\lambda}(\cR^{\cA,f,\obf,\eval})=1\right]-\Pr\left[\cG^{\obf, f}_{1,\lambda}(\cR^{\cA,f,\obf,\eval})=1 \right] \right| \ge \frac{1}{p'(\secp)}
        \end{align}
         where  $\cG$ is the iO security game defined in \cref{def:io-game}.
    \end{itemize}
\end{definition}
To rule out black-box constructions, we will define two oracles $S$ and $\mathsf{Col}^S$. $S$ is the oracle defined in \cref{def:oracle_set}, while $\mathsf{Col}^S$ is an oracle that will allow us to break every construction of collision resistant primitives using $S$, while preserving the security of iO and OWPs constructed from $S$ in \cref{subsec:io-owp-from-s}. First we define $\mathsf{Col}^O$ for an arbitrary oracle $O$.
\begin{definition}[Oracle $\mathsf{Col}$]
\label{def:Col} 
    For any oracle $O$, and string $C$, let $\ket{\psi^O_C}$ be defined as follows:
\begin{enumerate}
    \item If $C$ is not a valid encoding of an oracle-aided quantum circuit that implements some unitary with two output registers $\reg{A}$ and $\reg{B}$, set $\ket{\psi^O_C} := \ket{\bot}.$
    \item Let $C^O \ket{0} = \sum_s \sqrt{p_s}\ket{s}_{\reg{A}}\ket{\psi_s}_{\reg{B}}$, where $\forall s, p_s \geq 0$ and $\ket{\psi_s}$ is some pure state. 
    \item $\ket{\psi^O_C} := \sum_s \sqrt{p_s}\ket{s}\ket{\psi_s}\ket{\psi_s}$.
\end{enumerate}
Let $\DCol^O_C$ be defined as the distribution obtained by measuring $\ket{\psi^O_C}$ in the computational basis, and let $\DCol^O := \bigotimes_C \DCol^O_C$ be a product distribution in the manner described in \cref{sec:def_comp}. The oracle $\CCol^O$ is sampled from $\DCol^O$. 
\end{definition}

Recall that we showed in \cref{subsec:io-owp-from-s} that $S$ can be used to instantiate iO and OWPs that remain secure in the presence of any OW2H-compatible unitary oracle. Recall also that by \cref{thm:comp-oracle}, query access to $\CCol^S$ can be simulated using access to the compression unitary for $\DCol^S$. In the sequel we show that the compression unitary is OW2H-compatible, which allows us to use the results from \cref{subsec:io-owp-from-s} to show that the instantiations remain secure in the presence of $\CCol^S$.


\begin{theorem}
    \label{thm:CCol-punc}
    For any oracle $O$, let $W^O$ be the compression unitary (\cref{def:comp_U}) for $\DCol^O$. Then for any set of oracles $\bbO$, $W := \{ W^O \}_{O \in \bbO}$ is OW2H-compatible (\cref{def:punc}) for $\bbO$ with $c = 1/4$.
\end{theorem}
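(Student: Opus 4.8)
The plan is to verify the two requirements of \cref{def:punc} for the collection $W=\{W^O\}_{O\in\bbO}$, where $W^O$ is by hypothesis the compression unitary (\cref{def:comp_U}) for the product distribution $\DCol^O=\bigotimes_C \DCol^O_C$ of \cref{def:Col}. Requirement~1 is easy: to compute $O(x)$, form the ``point circuit'' $C_x$ that queries $O$ on $x$, writes the answer into its first output register $\reg{A}$, and leaves its second output register $\reg{B}$ in $\ket{0}$. Then $C_x^O\ket{0}=\ket{O(x)}_{\reg A}\ket{0}_{\reg B}$, so $\ket{\psi^O_{C_x}}=\ket{O(x)}\ket{0}\ket{0}$ and $\DCol^O_{C_x}$ is the point distribution on $(O(x),0,0)$; hence $\ket{\DCol^O_{C_x}}$ is a computational-basis state, and applying $W^O$ with query register $\ket{C_x}$ and target register $\ket{\bot}$ and then measuring the target returns $O(x)$ with certainty (a second $W^O$-query uncomputes the target back to $\ket{\bot}$ if a coherent $O$-oracle is wanted). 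So only Requirement~2 is substantive.

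For Requirement~2, fix $q:=q(\lambda)$, oracles $O,O'\in\bbO$, a $q$-query adversary $\cA$, and let $\ket{\psi},\ket{\psi'}$ be the purified final states of $\cA^{W^O},\cA^{W^{O'}}$ with $\Delta:=\|\ket{\psi}-\ket{\psi'}\|$. Since $W^O$ (resp.\ $W^{O'}$) is exactly the compression unitary for $\DCol^O$ (resp.\ $\DCol^{O'}$), \cref{thm:ow2h-dist} applies directly: the algorithm $\cB_0$ that picks $i\gets[q]$, runs $\cA$ up to just before its $i$-th query to $W^O$, and measures the query register (which, by the form of \cref{def:comp_U}, holds a circuit description $C$) satisfies
\[
  \Exp_{C\leftarrow\cB_0^{W^O}}\big[\SD(\DCol^O_C,\DCol^{O'}_C)\big]\ \geq\ \frac{\Delta^2}{16q^2}.
\]
Because $\cA$ is a $q$-query algorithm, every such $C$ has $|C|\le q$ and hence makes at most $q$ oracle gates.

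It remains to turn a circuit $C$ with $\SD(\DCol^O_C,\DCol^{O'}_C)$ noticeable into a point of $T:=\{x:O(x)\ne O'(x)\}$. This is done in two steps. First, a ``three-copies'' hybrid (the quantum analogue of the ``thrice'' factor in \cref{lem:overview-1}): writing $C^O\ket{0}=\sum_s\sqrt{p^O_s}\ket{s}_{\reg A}\ket{\psi^O_s}_{\reg B}$ and $r^O_{s,b}:=|\langle b|\psi^O_s\rangle|^2$, one has $\DCol^O_C(s,b_1,b_2)=p^O_s\, r^O_{s,b_1}\, r^O_{s,b_2}$, and telescoping its difference with $\DCol^{O'}_C$ across the three factors one at a time bounds $\SD(\DCol^O_C,\DCol^{O'}_C)$ by $O(1)$ times the statistical distance between the full computational-basis measurements of $C^O\ket{0}$ and $C^{O'}\ket{0}$, hence by $O(1)\cdot\TD(C^O\ket{0},C^{O'}\ket{0})$. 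Second, since $O$ and $O'$ agree outside $T$, a BBBV-style hybrid (\cref{lem:bbbv}, \cref{thm:trace-dist-and-euclidean-dist}) gives $\TD(C^O\ket{0},C^{O'}\ket{0})\le\|C^O\ket{0}-C^{O'}\ket{0}\|= O(\sqrt{|C|\,\eta_C})$, where $\eta_C$ is the total amplitude-squared that the oracle gates of $C^O$ place on $T$; combining, $\eta_C=\Omega(\SD(\DCol^O_C,\DCol^{O'}_C)^2/q)$. So the procedure $\mathsf{Find}^O(C)$ that runs $C^O$ up to a uniformly random one of its $\le q$ oracle gates and measures its query register outputs a point of $T$ with probability $\eta_C/|C|=\Omega(\SD(\DCol^O_C,\DCol^{O'}_C)^2/q^2)$. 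Now let $\cB$ run $\cB_0^{W^O}$ to get $C$ and then run $\mathsf{Find}^O(C)$, simulating each oracle gate of $C^O$ inside $\mathsf{Find}$ by two $W^O$-queries via the point-circuit trick of Requirement~1; then $\cB$ makes at most $3q=\poly(q)$ queries to $W^O$ and runs $\cA$ internally, matching the $\cB^{W_O,\cA}$ interface. By Jensen's inequality,
\[
  \Pr\big[\cB^{W^O}(1^\lambda)\in T\big]\ \geq\ \Omega\!\Big(\tfrac{1}{q^2}\Big)\Exp_{C\leftarrow\cB_0^{W^O}}\!\big[\SD(\DCol^O_C,\DCol^{O'}_C)^2\big]\ \geq\ \Omega\!\Big(\tfrac{1}{q^2}\Big)\Big(\tfrac{\Delta^2}{16q^2}\Big)^2\ =\ \Omega\!\Big(\tfrac{\Delta^4}{q^6}\Big),
\]
which rearranges to $\Delta\le\poly(q)\cdot\Pr[\cB^{W^O}(1^\lambda)\in T]^{1/4}$, i.e.\ Requirement~2 holds with $c=1/4$.

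The main obstacle is the ``three-copies'' hybrid. The map sending $C^O\ket{0}$ to $\ket{\psi^O_C}$ --- measure/copy the $\reg{B}$-register conditioned on $\reg{A}$ --- is \emph{not} a fixed linear isometry (it is quadratic in the residual states $\ket{\psi^O_s}$), so one cannot simply declare $\ket{\psi^O_C}$ and $\ket{\psi^{O'}_C}$ to be the images of $C^O\ket{0}$ and $C^{O'}\ket{0}$ under a single channel and invoke contractivity of trace distance. One must instead argue at the level of distributions as above, using that both the statistical distance of the serial-number marginals and the $p^O$-averaged statistical distance of the conditional $\reg{B}$-measurement distributions are each bounded by the statistical distance of the joint $\reg{A}\reg{B}$-measurement (which is in turn $\le\TD(C^O\ket{0},C^{O'}\ket{0})$); equivalently, one can bound the overlap $\langle\psi^O_C|\psi^{O'}_C\rangle=\sum_s\sqrt{p^O_s p^{O'}_s}\,\langle\psi^O_s|\psi^{O'}_s\rangle^2$ and check term-by-term that $1-\mathrm{Re}(c^2)=O(1-\mathrm{Re}\,c)$ for $|c|\le 1$, giving $1-|\langle\psi^O_C|\psi^{O'}_C\rangle|=O(1-|\langle C^O\ket{0}|C^{O'}\ket{0}\rangle|)$. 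Everything else --- \cref{thm:ow2h-dist}, the BBBV hybrid, the query accounting, and the point-circuit trick --- is already available in the excerpt or routine.
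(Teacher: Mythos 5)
Your proposal is correct and follows essentially the same route as the paper's proof: the same two-stage adversary $\cB$ (measure a random $\cA$-query to obtain $C$, then measure a random oracle query of $C^O$), the Section~\ref{sec:qq} one-way-to-hiding bound, a factor-$3$ ``three copies'' estimate, and a BBBV finish giving $c=1/4$. The only cosmetic difference is that you invoke \cref{thm:ow2h-dist} and telescope at the level of the classical distributions $\DCol^O_C$, whereas the paper invokes \cref{thm:ow2h-comp} and proves the analogous bounds at the level of pure states (\cref{clm:real-distances-are-smaller} and \cref{clm:abcd}); the two are equivalent up to constants.
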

\begin{proof}[Proof of \cref{thm:CCol-punc}]
First note that queries to $O$ can be simulated by querying $W^O$ on $C$ that queries $O$, writes the output on the first output register, and leaves the second register empty, which satisfies the first requirement for OW2H-compatibility.
Let $\cB^{\cA, W^O}$ be the algorithm that samples $i \leftarrow [q]$, runs $\cA^{W^O}$, measures the query register of the $i$-th query in the computational basis, and obtains a measurement result $C$. $\cB^{\cA, W^O}$ then samples $j \leftarrow [q]$, runs $C^O$ (queries to $O$ can be simulated using $W^O$), measures the query register of the $j$-th query in the computational basis, and outputs the measurement result.
For any $O, O' \in \bbO$, for any $q$-query $\cA$,
\begin{align}
\epsilon := \left\| \ket{\psi} - \ket{\psi'} \right\|,
\end{align}
where $\ket{\psi}$ and $\ket{\psi'}$ are purified final states of $\cA^{W^O}$ and $\cA^{W^{O'}}$, respectively.
Then by Theorem \ref{thm:ow2h-comp},
\begin{align}
\mathbb{E}_C\Big[ \| \ket{\DCol^O_C} - \ket{\DCol^{O'}_C} \| \Big] \geq \frac{\epsilon^2}{32q^2}.
\label{eqColdifference}
\end{align}
where expectation is taken over the measurement of $C$ in $\cB$. Over the next few claims we will show that for any $C$ where $\| \ket{\DCol^O_C} - \ket{\DCol^{O'}_C}\|$ is noticeable, $C^O\ket{0}$ has noticeable query mass on points where $O$ and $O'$ differ. This will allow us to use \cref{lem:bbbv} to show that $\cB$ outputs a point where $O$ and $O'$ differ.
\begin{MyClaim}
\label{clm:real-distances-are-smaller}
    For any normalized states $\ket{\widetilde{\psi}} = \sum_x \alpha_x \ket{x}$ and $\ket{{\widetilde{\psi}'}} = \sum_x \alpha'_x \ket{x}$,
    we have
\begin{align}
\| \ket{\widetilde{\psi}} - \ket{{\widetilde{\psi}'}} \| \geq \| \ket{\hat{\psi}} - \ket{\hat{\psi'}} \|,
\end{align}
where $\ket{\hat{\psi}} := \sum_x |\alpha_x| \ket{x}$ and $\ket{\hat{\psi'}} := \sum_x |\alpha'_x| \ket{x}$.
\end{MyClaim}

\begin{proof}
\begin{align} 
\| \ket{\widetilde{\psi}} - \ket{{\widetilde{\psi}'}} \| &= \sqrt{2(1 - \mathsf{Re}\langle \widetilde{\psi} | {\widetilde{\psi}'} \rangle)}\\
\| \ket{\hat{\psi}} - \ket{\hat{\psi}'} \| &= \sqrt{2(1 - \mathsf{Re}\langle \hat{\psi} | \hat{\psi}' \rangle)}.
\end{align}
And since
\begin{align}
\mathsf{Re}\langle \widetilde{\psi}| {\widetilde{\psi}'} \rangle 
= \mathsf{Re}\left( \sum_x \alpha^*_x \alpha'_x \right)
\leq \left|\sum_x \alpha^*_x\alpha'_x\right|
\leq \sum_x |\alpha_x|\cdot|\alpha'_x|
=\mathsf{Re}\langle \hat{\psi} | \hat{\psi'} \rangle,
\end{align}
the statement of the claim follows.
\end{proof}
\noindent Recall that for any oracle $\widetilde{O}$
\begin{align}
    \ket{\DCol^{\widetilde{O}}_C} = \sum_z\sqrt{{\rm Pr}_{\DCol^{\widetilde{O}}_C}[z]} \ket{z}
\end{align}
and also recall that $\DCol^{\widetilde{O}}_C$ is the distribution obtained by measuring $\ket{\psi^{\widetilde{O}}_C}$ in the computational basis. 
Therefore, by \cref{clm:real-distances-are-smaller},
for all $C$,
\begin{align}
\| \ket{\DCol^O_C} - \ket{\DCol^{O'}_C} \| \leq \| \ket{\psi^O_C} - \ket{\psi^{O'}_C} \|.
\end{align}
With \cref{eqColdifference}, this implies
\begin{align}
\mathbb{E}_C\Big[ \| \ket{\psi^O_C} - \ket{\psi^{O'}_C} \| \Big] \geq \frac{\epsilon^2}{32q^2},
\end{align}
where expectation is over the $C$ sampled during the execution of $\cB$. The next claim relates $\| \ket{\psi^O_C} - \ket{\psi^{O'}_C} \|$ and the distance between $C^O \ket{0}$ and $C^{O'} \ket{0}$.

\begin{MyClaim}
\label{clm:abcd}
For all $C$
\begin{align}
\| \ket{\psi^O_C} - \ket{\psi^{O'}_C} \| \leq 3 \| C^O \ket{0} - C^{O'} \ket{0} \|.
\end{align}
\end{MyClaim}

\if0
\begin{proof}
Let $C^O \ket{0} = \sum_s \sqrt{p_s}|s\rangle|\psi_s\rangle$ 
and 
$C^{O'} \ket{0} = \sum_s \sqrt{p_s'}|s\rangle|\psi_s'\rangle$.
Then
\begin{align}
\| C^O \ket{0} - C^{O'} \ket{0} \|
&=\left\|\sum_s\sqrt{p_s}|s\rangle|\psi_s\rangle-\sum_s\sqrt{p_s'}|s\rangle|\psi_s'\rangle\right\|\\
&=\sqrt{2\left(1-\sum_s\sqrt{p_sp_s'}{\rm Re}\langle\psi_s|\psi_s'\rangle\right)}\\
&=\sqrt{2\left(1-\sum_s\sqrt{p_sp_s'}\cos\theta_s\right)}\\
& = \sqrt{2(1-A) + 2\sum_s \sqrt{p_sp_s'}(1-\cos\theta_s)} \\
&= \sqrt{2(1-A) + 4\sum_s \sqrt{p_sp_s'} \sin^2\!\bigl(\tfrac{\theta_s}{2}\bigr)},
\end{align}
where $\langle\psi_s|\psi_s'\rangle=\cos\theta_s+i\sin\theta_s$,
$A\coloneqq\sum_s \sqrt{p_sp_s'}$, 
and the identity $1-\cos\alpha = 2\sin^2(\alpha/2)$ has been used.
On the other hand,
\begin{align}
\| \ket{\psi_C^O} - \ket{\psi_C^{O'}} \|
&=\left\|\sum_s\sqrt{p_s}|s\rangle|\psi_s\rangle|\psi_s\rangle-\sum_s\sqrt{p_s'}|s\rangle|\psi_s'\rangle|\psi_s'\rangle\right\|\\
&=\sqrt{2\left(1-\sum_s\sqrt{p_sp_s'}{\rm Re}(\langle\psi_s|\psi_s'\rangle\langle\psi_s|\psi_s'\rangle)\right)}\\
&=\sqrt{2\left(1-\sum_s\sqrt{p_sp_s'}\cos(2\theta_s)\right)}\\
& = \sqrt{2(1-A) + 2\sum_s \sqrt{p_sp_s'}\bigl(1-\cos(2\theta_s)\bigr)}\\
&= \sqrt{2(1-A) + 4\sum_s \sqrt{p_sp_s'} \sin^2\!\theta_s}\\
&\le\sqrt{2(1-A) + 16\sum_s \sqrt{p_sp_s'} \sin^2\!\bigl(\tfrac{\theta_s}{2}\bigr)},
\end{align}
where we have used
$\sin^2\theta_s = 4\sin^2(\theta_s/2)\cos^2(\theta_s/2) \le 4\sin^2(\theta_s/2)$.
By the Cauchy-Schwarz inequality,
\begin{equation}\label{eq:A-range}
A = \sum_s \sqrt{p_s p'_s} \le \sqrt{\sum_s p_s}\,\sqrt{\sum_s p'_s} = 1,
\end{equation}
and therefore $0\le A \le 1$. 
Introduce nonnegative quantities
\begin{align}
X := 2(1-A)\ge 0,\qquad Y := \sum_s \sqrt{p_sp_s'} \sin^2\!\bigl(\tfrac{\theta_s}{2}\bigr)\ge 0.
\end{align}
Then it suffices to prove
\begin{equation}\label{eq:XY}
3\sqrt{X+4Y}\ \ge\ \sqrt{X+16Y},
\end{equation}
which is straightforward.
\end{proof}
\fi

\begin{proof}
Let $C^O \ket{0} = \sum_s \sqrt{p_s}\ket{s}\ket{\psi_s}$ 
and 
$C^{O'} \ket{0} = \sum_s \sqrt{p_s'}\ket{s}\ket{\psi_s'}$.
Let
\begin{align}
    \ket{\phi_0} &:= \sum_s \sqrt{p_s} \ket{s}\ket{\psi_s}\ket{\psi_s}, \\
    \ket{\phi_1} &:= \sum_s \sqrt{p'_s} \ket{s}\ket{\psi_s}\ket{\psi'_s}, \\
    \ket{\phi_2} &:= \sum_s \sqrt{p_s} \ket{s}\ket{\psi_s}\ket{\psi'_s}, \\
    \ket{\phi_3} &:= \sum_s \sqrt{p'_s} \ket{s}\ket{\psi'_s}\ket{\psi'_s}.
\end{align}
Define
\begin{align}
\Delta := \Big\| \sum_s \sqrt{p_s}\ket{s}\ket{\psi_s} - \sum_s \sqrt{p'_s}\ket{s}\ket{\psi'_s} \Big\|.
\end{align}
and
\begin{align}
\alpha := \left(\sum_s \sqrt{p_s}\bra{s}\bra{\psi_s}\right)\left(\sum_s \sqrt{p'_s}\ket{s}\ket{\psi'_s}\right) = \sum_s\sqrt{p_s p'_s}\langle\psi_s|\psi'_s\rangle.
\end{align}
Note that 
\begin{align}
\|\ket{\phi_0} - \ket{\phi_1}\| = \|\ket{\phi_2} - \ket{\phi_3}\|  = \Delta.
\end{align}
\dakshita{say why the above eqn is true? maybe say data processing inequality applied to the output of O or O'}
We now show a bound on $\| \ket{\phi_1} - \ket{\phi_2} \|$.
By expanding the definitions,
\begin{align}
    \mathsf{Re}\langle \phi_1 | \phi_2 \rangle &= \sum_s \sqrt{p_s p'_s}
    =\sum_s\left|\sqrt{p_s p'_s}\right|
    \geq\sum_s\left|\sqrt{p_s p'_s} \langle\psi_s|\psi'_s\rangle\right|
    \geq  \mathsf{Re}(\alpha).
\end{align}
So we may bound 
\begin{align}
\| \ket{\phi_1} - \ket{\phi_2} \| = \sqrt{2(1 - \mathsf{Re}\langle \phi_1 | \phi_2 \rangle)}
\leq  \sqrt{2(1 - \mathsf{Re}(\alpha))},
\end{align}
which by the observation that
\begin{align} 
\Delta = \left\| \sum_s \sqrt{p_s}\ket{s}\ket{\psi_s} - \sum_s \sqrt{p'_s}\ket{s}\ket{\psi'_s} \right\| = \sqrt{2(1 - \mathsf{Re}(\alpha)})
\end{align}
implies
\begin{align}
\|\ket{\phi_1} - \ket{\phi_2}\| \leq \Delta.
\end{align}
Putting all the bounds together and using the triangle inequality gives the statement of the claim.
\end{proof}

\noindent As a result of the above claim,
\begin{align}
\mathbb{E}_C\Big[ \| C^O \ket{0} - C^{O'} \ket{0} \| \Big] \geq \frac{1}{3} \cdot \frac{\epsilon^2}{32q^2}.
\end{align}
By Lemma \ref{lem:bbbv}
\begin{align}
    \Pr[\cB^{W^O}\in T] &\geq \Exp[\|C^O\ket{0} - C^{O'}\ket{0}\|^2]/q\\
    &\geq \left(\Exp[\|C^O\ket{0} - C^{O'}\ket{0}\|]\right)^2/q
    \geq  \frac{\epsilon^4}{96^2q^5},
\end{align}
which rearranges to give
\begin{align}
    \epsilon\leq 4\sqrt{6} q^{5/4}(\Prr[ \cB^{W^O} \in T])^{1/4}
\end{align}
concluding the proof of the theorem.
\end{proof}
We can now obtain the following corollary showing security of iO and OWP from $S$ in the presence of $\CCol^S$ 
\begin{corollary}
\label{cor:io-owp-with-col}
Let $S = (f,\obf,\eval^f)$ be the random oracle defined in \cref{def:oracle_set}, and let $\CCol^S$ be the random oracle defined in \cref{def:Col}.  
For all functions $q:\N\to\N$ such that $q(\secp)\leq2^{o(\lambda)}$, 
for all $q(\lambda)$-query adversaries $\cA$, 
and for all sufficiently large $\lambda$,
\begin{align}
\Prr_{S,\CCol^S, x\leftarrow\bin^{\lambda}}\Big[\cA^{S, \CCol^S}\big(f(x)\big)=x\Big]\leq\frac{1}{2^{c^2\lambda/2}}
\end{align}
and
\begin{align}
\Big|\Prr_{S, \CCol^S}\big[\cG^{\obf, f}_{0,\lambda}(\cA^{S, \CCol^S})=1\big]-\Prr_{S, \CCol^S}\big[\cG^{\obf, f}_{1,\lambda}(\cA^{S, \CCol^S})=1\big]\Big|\leq \frac{1}{2^{c^2\lambda/2}},
\end{align}
where $\cG$ is the iO security game defined in \cref{def:io-game}.
\end{corollary}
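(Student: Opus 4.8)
The plan is to \emph{simulate away} the collision-finding oracle and reduce to the security statements already proven against arbitrary OW2H-compatible unitaries, namely \cref{thm:owp} and \cref{thm:iO_security}. Fix a function $q$ with $q(\lambda)\le 2^{o(\lambda)}$ and a $q(\lambda)$-query adversary $\cA$ with access to $(S,\CCol^S)$. I would first build a $q'(\lambda)$-query adversary $\widetilde{\cA}$ that has access only to the compression unitary $W^S$ for $\DCol^S$ (as in \cref{def:comp_U}) and that perfectly emulates $\cA$: each $S$-query of $\cA$ is answered by one query to $W^S$ on the circuit $C$ that queries $S$ and copies the answer to its first output register while leaving the second register empty (this is precisely the $S$-query simulation verified in the proof of \cref{thm:CCol-punc}), and each $\CCol^S$-query of $\cA$ is answered by running the compressed oracle $\mathsf{CStO}_{\DCol^S}$, which costs two queries to $W^S$. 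Since $\cA$ makes at most $q(\lambda)$ queries each of length at most $q(\lambda)$, each simulated circuit has size $\poly(q(\lambda))$ and $\widetilde{\cA}$ makes $q'(\lambda)=\poly(q(\lambda))\le 2^{o(\lambda)}$ queries to $W^S$.

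The key correctness point is that, for every fixed oracle $S$, averaging over the draw $\CCol^S\leftarrow\DCol^S$ the behaviour of $\cA^{S,\CCol^S}$ is identical to that of $\widetilde{\cA}^{W^S}$; this follows from \cref{thm:comp-oracle}, applied with the only $\CCol^S$-queries being those made by $\cA$ (in the iO experiment the challenger of $\cG^{\obf,f}_{b,\lambda}$ touches only $\obf$ and $f$, both components of $S$, so one may fold challenger and adversary into a single $\CCol^S$-querying procedure and invoke \cref{thm:comp-oracle} there). Taking the further expectation over $S$, the left-hand sides of the two displays in the corollary equal, respectively, $\Prr_{S,\,x\leftarrow\bin^\lambda}[\widetilde{\cA}^{W^S}(f(x))=x]$ and $\bigl|\Prr_S[\cG^{\obf,f}_{0,\lambda}(\widetilde{\cA}^{W^S})=1]-\Prr_S[\cG^{\obf,f}_{1,\lambda}(\widetilde{\cA}^{W^S})=1]\bigr|$. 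By \cref{thm:CCol-punc} (with $\bbO=\bbS$), the collection $W=\{W^O\}_{O\in\bbS}$ is OW2H-compatible for $\bbS$ with constant $c=1/4\in(0,1)$, so \cref{thm:owp} and \cref{thm:iO_security} apply directly to $W$ and the $q'(\lambda)$-query adversary $\widetilde{\cA}$, bounding both quantities by $1/2^{c^2\lambda/2}$ for all sufficiently large $\lambda$, as claimed.

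I do not expect a substantial obstacle here: the genuinely delicate work — perfect statefulness of the compressed oracle (\cref{thm:comp-oracle}), OW2H-compatibility of the compression unitary for $\DCol$ (\cref{thm:CCol-punc}), and one-wayness/iO security against every OW2H-compatible unitary (\cref{thm:owp} and \cref{thm:iO_security}) — is already in place. The two things worth double-checking are that the query blow-up from the simulation stays sub-exponential (it does, since $\poly(2^{o(\lambda)})=2^{o(\lambda)}$, so $\widetilde{\cA}$ is still a legal adversary for those theorems), and that $W^S$ alone suffices to answer $\cA$'s direct $S$-queries in addition to its $\CCol^S$-queries (this is exactly the first clause of \cref{def:punc}, discharged inside the proof of \cref{thm:CCol-punc}).
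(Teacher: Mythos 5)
Your proposal is correct and matches the paper's own proof in all essentials: both simulate $\cA$'s $S$-queries with one call to $W^S$ and its $\CCol^S$-queries via the compressed oracle $\mathsf{CStO}_{\DCol^S}$ at two calls to $W^S$ each (exactness by \cref{thm:comp-oracle}), then invoke OW2H-compatibility from \cref{thm:CCol-punc} and conclude via \cref{thm:owp} and \cref{thm:iO_security}. The only difference is presentational — the paper argues by contradiction while you construct the simulating adversary directly and are somewhat more careful about the polynomial query blow-up and the handling of the iO challenger — but the underlying argument is the same.
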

\begin{proof}
    Suppose there exists a function $q:\N\to\N$ 
and a $q(\lambda)$-query adversary $\cA$ such that $q(\secp)\leq 2^{o(\lambda)}$ and $\cA$ violates the bounds in the corollary. For any oracle $O$, let $W^O$ be the compression unitary (\cref{def:comp_U}) for $\DCol^O$. Then, by \cref{thm:comp-oracle}, the observation that the compressed oracle can be implemented by making two calls to the compression unitary for each call to the oracle, and the observation that queries to $S$ can be simulated by a single call to $W^S$, there exists a $2q(\lambda)$-query adversary $\cB$ such that for infinitely many $\lambda$, 
\begin{align}
\Prr_{S, x\leftarrow\bin^{\lambda}}\Big[\cB^{W^S}\big(f(x)\big)=x\Big]>\frac{1}{2^{c^2\lambda/2}}
\end{align}
or
\begin{align}
\Big|\Prr_{S}\big[\cG^{\obf, f}_{0,\lambda}(\cB^{W^S})=1\big]-\Prr_{S}\big[\cG^{\obf, f}_{1,\lambda}(\cB^{W^S})=1\big]\Big|> \frac{1}{2^{c^2\lambda/2}}.
\end{align}
However, since by \cref{thm:CCol-punc}, for any set of oracles $\bbO$ the collection of compression unitaries $\{W^O\}_{O\in\bbO}$ is OW2H-compatible for $\bbO$, this violates atleast one of the security guarantees of \cref{thm:owp} or \cref{thm:iO_security}.
\end{proof}

We have shown that there is a construction of iO and OWPs from $S$ that remains secure in the presence of $\CCol^S$. Next, we show that no construction of dCRPuzz from $S$ can remain secure in the presence of $\CCol^S$.

\begin{theorem}
    \label{thm:dCRPuzz_Col}
    Let $(\Setup,\Samp)$ be a pair of oracle-aided algorithms and let $O$ be a classical oracle such $(\Setup^O,\Samp^O)$ satisfy the syntax of dCRPuzz.
    Then, there exists an oracle-aided QPT algorithm $\cE$ such for all large enough $\secp$,
    \begin{align}
        \SD \left( \{\pp,\cC(\pp)\}_{\pp\gets\Setup^O(1^\secp)}, \{\pp,\cE^{\mathsf{Col}^O}(\pp)\}_{\pp\gets\Setup^O(1^\secp)} \right) \leq \negl(\lambda), 
    \end{align}
    where $\cC(\pp)$ is the following distribution:
    \begin{enumerate}
        \item Run $(\puzz,\ans)\gets\Samp^O(\pp)$.
        \item Sample $\ans'$ with the conditional probability $\Pr[\ans'|\puzz]\coloneqq\frac{\Pr[(\ans',\puzz)\gets\Samp^O(\pp)]}{\Pr[\puzz\gets\Samp^O(\pp)]}$. 
        \item Output $(\puzz,\ans,\ans')$.
    \end{enumerate}
    Here, $\mathsf{Col}^O$ is as defined in \cref{def:Col}.
\end{theorem}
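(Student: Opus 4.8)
The plan is to exhibit $\cE$ explicitly and show it samples $\cC(\pp)$ \emph{exactly} (hence within statistical distance $0\le\negl(\secp)$). The key observation is that $\mathsf{Col}^O$, on input a (unitary, oracle-aided) circuit $C$ with designated output registers $\reg{A},\reg{B}$, returns a string $s$ obtained by measuring $\reg{A}$ together with \emph{two} conditionally-independent computational-basis readouts of the residual state $\ket{\psi_s}$ of $\reg{B}$ (this is immediate from $\ket{\psi^O_C}=\sum_s\sqrt{p_s}\ket{s}\ket{\psi_s}\ket{\psi_s}$ in \cref{def:Col}). This matches the structure of $\cC$, in which one draws $(\puzz,\ans)$ from $\Samp$ and then an independent $\ans'$ from the conditional answer-distribution given $\puzz$. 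So the first step is to put $\Samp^O$ into canonical form: by the principle of deferred measurement and by purifying its internal randomness, there is a polynomial-size oracle-aided unitary circuit $U$ such that $\Samp^O(\pp)$ is equivalent to applying $U^O$ to $\ket{\pp}\ket{0\cdots 0}$ and then measuring, in the computational basis, a designated output register holding $(\puzz,\ans)$ (the remaining "garbage" qubits are discarded).

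Next I would define the circuit $C_{\pp}$ (a valid encoding, with $\pp$ hardwired) to be this unitary circuit, but with its output registers relabeled so that $\reg{A}$ is exactly the sub-register that would hold $\puzz$ and $\reg{B}$ consists of all remaining qubits --- the $\ans$-sub-register together with all garbage. Then $C_{\pp}^{O}\ket{0}=\sum_{\puzz}\sqrt{p_{\puzz}}\,\ket{\puzz}_{\reg{A}}\ket{\psi_{\puzz}}_{\reg{B}}$, where $p_{\puzz}=\Pr[\puzz\gets\Samp^O(\pp)]$ is the marginal puzzle distribution and $\ket{\psi_{\puzz}}$ is the normalized residual state. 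Unwinding the definition of the conditional probability in $\cC$, measuring $\reg{B}$ of $\ket{\psi_{\puzz}}$ in the computational basis and reading off the $\ans$-sub-register yields $\ans$ distributed exactly as $\Pr[\ans\mid\puzz]$, since both this probability and the conditional probability in $\cC$ equal $\|\Pi_{\puzz,\ans}U^O\ket{\pp,0}\|^2/\|\Pi_{\puzz}U^O\ket{\pp,0}\|^2$ (here $\Pi_{\puzz}=\sum_{\ans}\Pi_{\puzz,\ans}$ projects $\reg{A}$ onto $\puzz$).

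The algorithm $\cE^{\mathsf{Col}^O}(\pp)$ then simply constructs $C_{\pp}$, queries $\mathsf{Col}^O$ on it once, receives $(\puzz,(\ans,g),(\ans',g'))$ --- where $\puzz$ is the readout of $\reg{A}$ and $(\ans,g),(\ans',g')$ are the two independent readouts of $\ket{\psi_{\puzz}}$ --- discards the garbage strings $g,g'$, and outputs $(\puzz,\ans,\ans')$. By the previous paragraph, $\puzz$ has the correct marginal and, conditioned on $\puzz$, the strings $\ans,\ans'$ are i.i.d.\ draws from $\Pr[\cdot\mid\puzz]$; since $\Pr[(\puzz,\ans)\gets\Samp^O(\pp)]=p_{\puzz}\cdot\Pr[\ans\mid\puzz]$, this is exactly $\cC(\pp)$. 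Averaging over $\pp\gets\Setup^O(1^{\secp})$ gives that $\{\pp,\cE^{\mathsf{Col}^O}(\pp)\}$ and $\{\pp,\cC(\pp)\}$ are identically distributed, and $\cE$ is QPT because it only writes down the polynomial-size circuit $C_{\pp}$ and makes a single $\mathsf{Col}^O$ query (it needs no access to $O$ itself).

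I do not expect a serious obstacle. The only points needing care are (i) being explicit in the canonical form of $\Samp$ that the garbage lands in $\reg{B}$ and not $\reg{A}$, so that measuring the serial-number register $\reg{A}$ collapses $\reg{B}$ to precisely the state conditioned on $\puzz$; and (ii) verifying that the two $\reg{B}$-readouts produced by $\mathsf{Col}$ are conditionally independent given $\puzz$, which follows directly from the form $\ket{\psi^O_C}=\sum_s\sqrt{p_s}\ket{s}\ket{\psi_s}\ket{\psi_s}$. (One could also phrase the conclusion with a $\negl$ slack to absorb any modeling technicalities in the deferred-measurement step, but the construction is in fact exact.)
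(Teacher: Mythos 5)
Your proposal is correct and matches the paper's proof essentially verbatim: the paper likewise canonicalizes $\Samp^O$ as a unitary $V^O_\pp$ followed by computational-basis measurements, queries $\mathsf{Col}^O$ once on that circuit, and outputs $(\puzz,\ans,\ans')$ while discarding the junk registers. The only technicality the paper flags (in a footnote) is that, with the oracle fixed, $\cE$ is deterministic, remedied by padding the query with fresh randomness — which your closing parenthetical about absorbing modeling technicalities into the $\negl$ slack already anticipates.
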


\begin{proof}[Proof of \cref{thm:dCRPuzz_Col}]
Without loss of generality, we can assume that $\Samp^O(\pp)\to(\puzz,\ans)$ runs as follows.
\begin{enumerate}
    \item 
    Apply a unitary $V^O_\pp$ on $|0....0\rangle$ to generate a state, 
    \begin{align}
        V^O_\pp|0...0\rangle
        =\sum_{\puzz}\sqrt{p_\puzz} |\puzz\rangle_\regA |\psi_\puzz\rangle_{\regB\regC},
    \end{align}
    where $p_\puzz\ge 0$ for all $\puzz$.
    \item 
    Measure the register $\regA$ in computational basis to get $\puzz$.
    \item 
    Measure the register $\regB$ in computational basis to get $\ans$.
    \item 
    Output $(\puzz,\ans)$.
\end{enumerate}
We consider the oracle-aided QPT algorithm $\cE^{\mathsf{Col}^O}$ that behaves as follows:
\begin{enumerate}
    \item Take $\pp$ as input.
    \item Query $\mathsf{Col}^O$ on $V_\pp$ and obtain the result $(\puzz,\ans,\mathsf{junk},\ans',\mathsf{junk'})$.
    \item Output $(\puzz,\ans,\ans')$.
\end{enumerate}
Then, by the definition of $\mathsf{Col}^O$, the output distribution of $\cE^{\mathsf{Col}^O}$ is identical to $\cC(\pp)$\footnote{Technically, since the oracles are sampled in the beginning of the experiment and fixed, the algorithm as described is deterministic. This can be easily remedied by appending a sufficiently long random pad to the $\mathsf{Col}^S$ query that does not affect the functionality of the circuit. This recovers the claimed result upto negligible error in the statistical distance. \takashi{I believe this should be technically correct, but I'm wondering if it's okay to leave it informal. Perhaps, this problem disappears if we consider the unitary version of $\mathsf{Col}$?}}.
\end{proof}
Finally, we show the impossibility of fully black-box construction of dCRPuzz from iO and OWPs, which is the main result of this section.

\begin{theorem}
\label{thm:sep_dCRPuzz_iO}
    Fully black-box constructions of dCRPuzz from iO and OWPs do not exist.
\end{theorem}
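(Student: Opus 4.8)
The plan is to run the standard black-box separation template recalled in the Technical Overview, instantiated with the classical oracle $S$ of \cref{def:oracle_set} as the ``good'' oracle and the collision-finding oracle $\CCol^S$ of \cref{def:Col} as the ``breaking'' oracle $\Gamma$. Suppose, for contradiction, that a fully black-box construction of dCRPuzz from iO and OWPs exists, witnessed (as in \cref{def:BB_dCRPuzz}) by oracle-aided algorithms $(\Setup,\Samp)$, a polynomial $p$, a function $q$ with $q(\lambda)\le 2^{o(\lambda)}$, and a $q(\lambda)$-query reduction $\cR$. First I would observe that $S=(f,\obf,\eval^f)$ is an admissible instantiation of the construction: $f$ is a permutation and, by \cref{def:oracle_set}, $(\obf,\eval^f)$ satisfies the correctness of iO for $f$-aided classical circuits. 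Hence, by the correctness requirement of the black-box construction, $(\Setup^S,\Samp^S)$ satisfies the syntax of dCRPuzz.

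Next I would build an oracle that breaks this dCRPuzz. Applying \cref{thm:dCRPuzz_Col} with $O:=S$ yields an oracle-aided QPT algorithm $\cE$ such that, for all sufficiently large $\lambda$,
\[
\SD\!\left(\{\pp,\cC(\pp)\}_{\pp\gets\Setup^S(1^\lambda)},\ \{\pp,\cE^{\CCol^S}(\pp)\}_{\pp\gets\Setup^S(1^\lambda)}\right)\le\negl(\lambda),
\]
where $\cC$ is the ``honest-collision'' distribution from \cref{def:BB_dCRPuzz}. Define the oracle $\cA$ by $\cA(\pp,r'):=\cE^{\CCol^S}(\pp)$, where the string $r'$ supplies the random pad appended to the single $\CCol^S$-query in the proof of \cref{thm:dCRPuzz_Col} (so that distinct $r'$ yield independent samples). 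Then $\cA$ satisfies \eqref{eq:BB_break_dCRPuzz}, since its statistical distance from $\cC$ is $\negl(\lambda)<1/p(\lambda)$ for all large $\lambda$. Consequently, the black-box security clause of \cref{def:BB_dCRPuzz} forces the reduction to succeed: $\cR^{\cA,f,\obf,\eval}$ satisfies \eqref{eq:BB_break_OWP} or \eqref{eq:BB_break_iO}, i.e., there is a polynomial $p'$ such that for infinitely many $\lambda$, $\cR^{\cA,f,\obf,\eval}$ either inverts $f$ with probability at least $1/p'(\lambda)$ or distinguishes the iO game $\cG^{\obf,f}_{\cdot,\lambda}$ with advantage at least $1/p'(\lambda)$.

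The final step is to turn $\cR^{\cA,f,\obf,\eval}$ into an adversary that queries only $S$ and $\CCol^S$ and then invoke \cref{cor:io-owp-with-col}. Here I would use that $\cR$ makes at most $q(\lambda)$ queries to $\cA$, that each invocation of $\cA=\cE^{\CCol^S}$ makes only $O(1)$ (in any case $\poly(\lambda)$) queries to $\CCol^S$ and needs no $S$-queries beyond preparing the description of the sampling circuit $V_\pp$ (which is computed from $\pp$ and the fixed code of $\Samp$ alone), and that for each fixed $\CCol^S$ the map $\cA$ is a deterministic function of its input, hence realizable by a reversible circuit making one $\CCol^S$-query so that $\cR$'s (possibly coherent) calls to $\cA$ translate into coherent calls to $\CCol^S$. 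Composing, $\cR^{\cA,f,\obf,\eval}$ is implemented by a single adversary $\widehat{\cR}$ that makes $O(q(\lambda))\le 2^{o(\lambda)}$ queries to $S$ and $\CCol^S$ and achieves the same advantage in inverting $f$ or breaking the iO game. But \cref{cor:io-owp-with-col} bounds both of these quantities by $2^{-c^2\lambda/2}$ for all sufficiently large $\lambda$, for a fixed constant $c\in(0,1)$ (concretely $c=1/4$ by \cref{thm:CCol-punc}); since $1/p'(\lambda)>2^{-c^2\lambda/2}$ for all large $\lambda$, this contradicts the previous paragraph, which held for infinitely many $\lambda$. The main obstacle is exactly this reconciliation: the dCRPuzz-breaking oracle $\cA$ is intrinsically query-unbounded, as it relies on the inefficient oracle $\CCol^S$, so it cannot be plugged directly into a security statement about query-bounded adversaries; what saves the argument is that $\CCol^S$ is simulatable via its compression unitary, which \cref{thm:CCol-punc} shows is OW2H-compatible, so that iO and OWPs built from $S$ survive even $2^{o(\lambda)}$-query adversaries equipped with $\CCol^S$, and the subexponential query budget of $\cR$ keeps the composed adversary $\widehat{\cR}$ within that regime.
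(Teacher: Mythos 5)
Your proposal is correct and follows essentially the same route as the paper's proof: instantiate the construction with $S$, invoke \cref{thm:dCRPuzz_Col} to obtain the dCRPuzz-breaker $\cE^{\CCol^S}$, feed it to the reduction, and contradict \cref{cor:io-owp-with-col}. The extra care you take in accounting for the composed adversary's query budget and in reconciling the ``infinitely many $\lambda$'' clause with the subexponential security bound is left implicit in the paper but is exactly the intended argument.
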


\begin{proof}[Proof of \cref{thm:sep_dCRPuzz_iO}]Let $S = (f,\obf,\eval^f)$ be the random oracle defined in \cref{def:oracle_set}, and let $\CCol^S$ be the random oracle defined in \cref{def:Col}.  
 Let $(\Setup, \Samp)$ be a fully black-box construction of dCRPuzz from (subexponentially-secure) iO and one-way permutations for some $q(\lambda) \leq 2^{o(\lambda)}$ and $q(\lambda)$-query reduction $R$. Now, since $f$ is a permutation and $(\obf,\eval^{f})$ satisfy correctness for iO, by \cref{def:BB_dCRPuzz}, $(\Setup^S, \Samp^S)$ satisfy the correctness of dCRPuzz. This means that by \cref{thm:dCRPuzz_Col}, we have an oracle-aided QPT algorithm $\cE^{\mathsf{Col}^S}$ that satisfies \cref{eq:BB_break_dCRPuzz}.  By \cref{def:BB_dCRPuzz}, this means that there exists a polynomial $p'$ such that $\widetilde{R}^{S, \mathsf{Col}^S}:= R^{S, \cE^{\mathsf{Col}^S}}$ has advantage $1/p'(\lambda)$ in inverting $f$ on a random output or in winning the iO security game for $(\obf,\eval^{f})$, which contradicts the security properties shown in \cref{cor:io-owp-with-col}.
\end{proof}

There exist fully black-box constructions of public-key encryption, deniable encryption, oblivious transfer, and quantum money from iO and OWPs.
Moreover, there exists a fully black-box construction of dCRPuzz from collision-resistant hash functions or one-shot signatures%
~\cite{MorShiYam_PDQP}.
By combining these results, we have the following corollary:
\begin{corollary}
\label{cor:final-col}
    A fully black-box construction of 
    $Y\in\{$collision-resistant hash functions,  one-shot signatures$\}$
    from 
    $
    X\in\{$public-key encryption, deniable encryption, non-interactive ZK, trapdoor functions, oblivious transfer, quantum money$\}$ 
    does not exist.
\end{corollary}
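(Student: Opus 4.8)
The plan is to deduce Corollary \ref{cor:final-col} purely by composing known black-box reductions, using Theorem \ref{thm:sep_dCRPuzz_iO} (no fully black-box construction of dCRPuzz from subexponentially-secure iO and OWPs) as the anchor. First I would recall the folklore fact that fully black-box constructions compose: if primitive $\cP_1$ has a fully black-box construction from $\cP_2$ and $\cP_2$ has a fully black-box construction from $\cP_3$, then nesting the two constructions and composing the two reductions (each reduction is itself a query-efficient oracle algorithm) yields a fully black-box construction of $\cP_1$ from $\cP_3$, with query complexity equal to the product of the two query complexities. In our setting every reduction in the chain is query-efficient (the known constructions from iO and OWPs use only $\poly(\secp)$ queries, and the dCRPuzz reduction of \cref{def:BB_dCRPuzz} is $2^{o(\secp)}$-query), so the composed reduction stays within the $2^{o(\secp)}$-query regime demanded by \cref{def:BB_dCRPuzz}.

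Next I would assemble the ingredients. On one side, for each $X \in \{$public-key encryption, deniable encryption, non-interactive ZK, trapdoor functions, oblivious transfer, quantum money$\}$ there is a fully black-box construction of $X$ from (subexponentially-secure) iO and OWPs; crucially these constructions use iO \emph{for oracle-aided classical circuits} and only a non-black-box (in-the-obfuscated-circuit) use of the OWP~\cite{FOCS:GGHRSW13,STOC:SahWat14,JC:Zhandry21}, which is exactly the flavor of iO realized by the oracle $S$ of \cref{def:oracle_set} and captured by \cref{cor:io-owp-with-col}. On the other side, for each $Y \in \{$collision-resistant hash functions, one-shot signatures$\}$ there is a fully black-box construction of dCRPuzz from $Y$: for CRH this is essentially immediate from \cref{def:dCRH} and \cref{def:dCRPuzzs}, and for one-shot signatures it is shown in~\cite{MorShiYam_PDQP}.

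Now suppose for contradiction that some $Y$ in the list above has a fully black-box construction from some $X$ in the list. Chaining the three constructions iO$+$OWPs $\Rightarrow X \Rightarrow Y \Rightarrow$ dCRPuzz and composing the corresponding reductions produces a fully black-box construction of dCRPuzz from (subexponentially-secure) iO and OWPs whose reduction is $\poly(\secp)$-query, hence certainly $2^{o(\secp)}$-query. This directly contradicts Theorem \ref{thm:sep_dCRPuzz_iO}, proving the corollary. (Concretely, unwinding the contradiction recovers the argument already used in the proof of Theorem \ref{thm:sep_dCRPuzz_iO}: the dCRPuzz breaker is the algorithm $\cE^{\mathsf{Col}^S}$ from \cref{thm:dCRPuzz_Col}, the composed reduction becomes an algorithm $\widetilde R^{S,\CCol^S}$ making $2^{o(\secp)}$ queries, and $\widetilde R$ breaks either the one-wayness of $f$ or the iO security game of \cref{def:io-game}, violating \cref{cor:io-owp-with-col}.)

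The main obstacle is not a mathematical one but a bookkeeping one: one must verify that the composition respects the precise notion of fully black-box construction fixed in \cref{def:BB_dCRPuzz}. The three points deserving care are (i) that the intermediate constructions genuinely only need iO for $f$-aided circuits, so that the separating oracle $S$ suffices to instantiate them and preserves their subexponential security up to polynomial loss; (ii) that the query complexity of the composed reduction remains $2^{o(\secp)}$, which holds because a constant number of $\poly$- and $2^{o(\secp)}$-query reductions are composed and the dCRPuzz adversary $\cE^{\mathsf{Col}^S}$ is QPT; and (iii) that the adversary handed along the chain is threaded correctly so that the final reduction truly contradicts \cref{cor:io-owp-with-col} rather than some weaker statement. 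Once these routine checks are in place, the corollary follows immediately.
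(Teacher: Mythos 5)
Your proposal is correct and follows essentially the same route as the paper: the paper's proof of this corollary is precisely the composition of (a) the known fully black-box constructions of each $X$ from iO and OWPs, (b) the fully black-box constructions of dCRPuzz from collision-resistant hash functions or one-shot signatures~\cite{MorShiYam_PDQP}, and (c) \cref{thm:sep_dCRPuzz_iO}. Your additional bookkeeping on query-complexity and composability of fully black-box reductions is consistent with what the paper leaves implicit.
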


\section{Separating Quantum Lightning from Indistinguishability Obfuscation and One-Way Permutations}
\label{sec:light}

In this section, we show the impossibility of fully black-box constructions of quantum lightning from iO and OWPs.
We start with the definition of quantum lightning and the fully black-box constructions.

\begin{definition}[Quantum Lightning~\cite{JC:Zhandry21}]
    A quantum lightning scheme is a pair of algorithms $(\Gen,\Ver)$ such that
    \begin{itemize}
        \item $\Gen(1^\secp)\to(s,\ket{\psi_s})$: A QPT algorithm that takes a security parameter $\secp\in\N$ as input and outputs a classical string $s$ and a quantum state $\ket{\psi_s}$.
        \item $\Ver(s,\sigma)\to\top/\bot$: A QPT algorithm that takes $s$ and a quantum state $\sigma$ as input and outputs $\top/\bot$.
    \end{itemize}
    We require the following correctness and security:
    \begin{itemize}
        \item \textbf{Correctness:} $\Pr[\top\gets\Ver(s,\ket{\psi_s}):(s,\ket{\psi_s})\gets\Gen(1^\secp)]\ge 1-\negl(\secp)$.
        \item \textbf{Security:} Consider the following game:
        \begin{enumerate}
            \item $\cA$ takes $1^\secp$ as input and outputs a classical string $s$ and a potentially entangled state $\sigma_{\regA\regB}$ over two registers $\regA$ and $\regB$.
            \item If both of $\Ver(s,\sigma_\regA)$ and $\Ver(s,\sigma_\regB)$ outputs $\top$, then $\cA$ wins. Otherwise, $\cA$ loses.
            Here, $\sigma_\regA$ and $\sigma_\regB$ are the reduced state of $\sigma_{\regA\regB}$ over the register $\regA$ and $\regB$, respectively.
        \end{enumerate}
        Then, for any QPT adversary $\cA$, $\Pr[\cA \text{ wins}]\le\negl(\secp)$.
    \end{itemize}
\end{definition}

\begin{definition}[Fully Black-Box Construction of Quantum Lightning from Subexponentially-Secure iO and OWPs]
\label{def:BB_QLight}
    A fully black-box construction of quantum lightning from iO and OWPs consists of a pair of oracle-aided algorithms $(\Samp,\Ver)$, a function $q:\N\rightarrow\N$, and a $q(\lambda)$ query algorithm $\cR$ such that  $q(\lambda) \leq 2^{o(\lambda)}$ and the following holds. 
    
    Let $f$ be any permutation (i.e. a length-preserving injective map on bitstrings), and let $(\obf,\eval)$ be any pair of functions that satisfy the correctness of iO for $f$-aided classical circuits, i.e., for all $\lambda\in\N$, for all oracle aided classical circuits $C \in \bin^\lambda$, for all $x \in \bin^*$ such that $|x|$ is the input length of $C$, and for all $r\in \bin^\lambda$
        \begin{align}
            \eval(\obf(C,r),x) = C^f(x).
        \end{align}
    Then the following properties must hold
    \begin{itemize}
        \item \textbf{Correctness:} 
        $(\Gen^{f,\obf,\eval},\Ver^{f,\obf,\eval})$ satisfies the syntax and the correctness of quantum lightning, i.e., $\Gen^{f,\obf,\eval}(1^\secp)$ outputs a classical string $s$ and a pure quantum state $\ket{\psi_s}$, $\Ver^{f,\obf,\eval}$ takes input bitstring $s$ and a quantum state $\sigma$ and outputs $\top/\bot$, and  $\Pr[\top\gets\Ver^{f,\obf,\eval}(s,\ket{\psi_s}):(s,\ket{\psi_s})\gets\Gen^{f,\obf,\eval}(1^\secp)]\ge 1-\negl(\secp)$.
        \item \textbf{Black-Box Security Proof:} For any polynomial $p$, and for any quantum algorithm $\cA$, if 
        \begin{align}
        \label{eq:BB_break_QLight}
            \Pr[\top\gets\Ver^{f,\obf,\eval}(s,\sigma_\regA) \land \top\gets\Ver^{f,\obf,\eval}(s,\sigma_\regB):(s,\sigma_{\regA\regB})\gets\cA(1^\secp)] \ge \frac{1}{p(\secp)}
        \end{align}
        holds for infinitely many $\secp$, then there exists a polynomial $p'$ such that
        \begin{align}
        \label{eq:BB_break_OWP_2}
            \Pr \left[ \cR^{\cA,f,\obf,\eval} (f(x))=x \right] \ge \frac{1}{p'(\secp)}
        \end{align}
        or 
        \begin{align}
        \label{eq:BB_break_iO_2}
            \left| \Pr\left[ \cG^{\obf,f}_{0,\lambda}(\cR^{\cA,f,\obf,\eval})=1\right]-\Pr\left[\cG^{\obf,f}_{1,\lambda}(\cR^{\cA,f,\obf,\eval})=1 \right] \right| \ge \frac{1}{r(\secp)}
        \end{align}
        holds for infinitely many $\secp$, where $\cG_{b,\secp}$ is the iO security game defined in \cref{def:oracle_iO}.
        Here, $\cR^\cA$ means that $\cR$ is allowed quantum query access to $A$,  $A^\dagger$, $A^*$, and $A^{\top}$ where $A$ is the unitary that implements the purified algorithm $\cA$.
    \end{itemize}
\end{definition}
To rule out black-box constructions, we will define two oracles $S$ and $\QCol^S$. $S$ is the oracle defined in \cref{def:oracle_set}, while $\QCol^S$ is a unitary oracle that will allow us to break every construction of quantum lightning using $S$, while preserving the security of iO and OWPs constructed from $S$ in \cref{subsec:io-owp-from-s}. First we define $\QCol^O$ for an arbitrary oracle $O$.

\begin{definition}[Unitary $\QCol$]
\label{def:QCol}
For any oracle $O$, and string $C$, let $\ket{\psi^O_C}$ be defined as follows:

\begin{enumerate}
    \item If $C$ is not a valid encoding of an oracle aided quantum circuit with two output registers $\reg{A}$ and $\reg{B}$, set $\ket{\psi^O_C} := \ket{\bot}.$
    \item Let $C^O \ket{0} = \sum_s \sqrt{p_s}\ket{s}_{\reg{A}}\ket{\psi_s}_{\reg{B}}$, where $\forall s, p_s \geq 0$ and $\ket{\psi_s}$ is some pure state.
    \item $\ket{\psi^O_C} := \sum_s \sqrt{p_s}\ket{s}\ket{\psi_s}\ket{\psi_s}$.
\end{enumerate}
Let $\QCol^O_C$ be the unitary defined as
\begin{align}
\QCol^O_C := \ket{\psi^O_C}\!\bra{\perp}+\ket{\perp}\!\bra{\psi^O_C} + (\bbI - \ket{\perp}\!\bra{\perp} - \ket{\psi^O_C}\!\bra{\psi^O_C}),
\end{align}
and let $\QCol^O$ be defined as
\begin{align}
\QCol^O\coloneqq
\sum_C \ket{C}\!\bra{C} \otimes \QCol^O_C.
\end{align}

\end{definition}

\begin{remark}
A note on conjugate, inverse, and transpose queries: 
We note that $\QCol^O$ is an involution (i.e. self-inverse) and that $(\QCol^O_C)^* = \QCol^O_{C^*}$, which means that a conjugate query can be efficiently implemented simply by conjugating the input circuit, therefore having query access to $\QCol^O$ is equivalent to having access to $(\QCol^O, (\QCol^O)^\dagger, (\QCol^O)^*, (\QCol^O)^\top)$ upto a small computational overhead.
\end{remark}

Recall that we showed in \cref{subsec:io-owp-from-s} that $S$ can be used to instantiate iO and OWPs that remain secure in the presence of any OW2H-compatible unitary oracle. We now show that for any set of oracles $\bbO$, the set $\{\QCol^O\}_{O\in\bbO}$ is OW2H-compatible for $\bbO$, which allows us to use the results from \cref{subsec:io-owp-from-s} to show that the instantiations remain secure in the presence of $\QCol^S$.

\begin{theorem}[OW2H-Compatibility of $\QCol$]
    \label{thm:QCol-punc}
    For any set of oracles $\bbO$, $\{ \QCol^O \}_{O \in \bbO}$ is OW2H-compatible for $\bbO$ with $c = 1/4$.
\end{theorem}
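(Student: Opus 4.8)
The plan is to mirror the proof of \cref{thm:CCol-punc} almost verbatim; the key observation is that $\QCol^O$ is already in exactly the ``swap $\ket{\perp}$ with $\ket{\varphi_C}$'' form required by \cref{thm:ow2h-comp}, with $\ket{\varphi_C}=\ket{\psi^O_C}$, $\ket{\varphi'_C}=\ket{\psi^{O'}_C}$, and the circuit register playing the role of the query register $\reg{X}$. This actually makes the argument \emph{simpler} than the $\CCol$ case: there, the compression unitary swaps $\ket{\perp}$ with the amplitude-square-root state $\ket{\DCol^O_C}$ rather than with $\ket{\psi^O_C}$ itself, which forces the detour through \cref{clm:real-distances-are-smaller}; here no such detour is needed, and \cref{clm:abcd} applies unchanged since $\ket{\psi^O_C}$ is defined identically in \cref{def:Col} and \cref{def:QCol}.

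First I would verify the first requirement of \cref{def:punc}. Given $x$, let $C_x$ be the deterministic oracle-aided circuit that on input $\ket{0}$ prepares $\ket{x}$ on its first output register, queries $O$ on that register, and writes the result onto its second output register, so that $C_x^O\ket{0}=\ket{x}_{\reg{A}}\ket{O(x)}_{\reg{B}}$ and hence $\ket{\psi^O_{C_x}}=\ket{x}\ket{O(x)}\ket{O(x)}$. A single query to $\QCol^O$ on $\ket{C_x}\ket{\perp}$ yields $\ket{C_x}\ket{x}\ket{O(x)}\ket{O(x)}$, from which $O(x)$ is read off. Since $\QCol^O$ is an involution and $\ket{\psi^O_{C_x}}$ is a computational-basis state, the procedure ``prepare $\ket{C_x}$ from $\ket{x}$, apply $\QCol^O$, copy out $O(x)$, apply $\QCol^O$ again, uncompute $\ket{C_x}$'' implements the standard quantum oracle $\sum_x\ket{x}\!\bra{x}\otimes X^{O(x)}$ coherently using $O(1)$ queries to $\QCol^O$; this lets an algorithm that has only $\QCol^O$ simulate $O$-queries of circuits it runs, with $O(1)$ overhead per query.

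Next, for the second requirement, I would define the extractor $\cB^{\cA,\QCol^O}$ exactly as in \cref{thm:CCol-punc}: sample $i\gets[q]$, run $\cA^{\QCol^O}$ up to just before its $i$-th query and measure the circuit register to obtain $C$; then sample $j\gets[q]$, run $C^O$ (simulating $O$-queries as above), and measure the query register of $C$'s $j$-th $O$-query in the computational basis, outputting the result. Since $\cA$ is a $q$-query algorithm, $|C|\le q$ and $C$ makes at most $q$ $O$-queries, so $\cB$ makes $\poly(q)$ queries overall. Writing $\Delta:=\|\ket{\psi}-\ket{\psi'}\|$ for the distance between the purified final states of $\cA^{\QCol^O}$ and $\cA^{\QCol^{O'}}$, \cref{thm:ow2h-comp} (applied with $\ket{\varphi_C}=\ket{\psi^O_C}$, $\ket{\varphi'_C}=\ket{\psi^{O'}_C}$) gives $\Exp_{C}\big[\|\ket{\psi^O_C}-\ket{\psi^{O'}_C}\|\big]\ge\Delta^2/(32q^2)$, where $C$ is distributed as the circuit sampled in the first stage of $\cB$. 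By \cref{clm:abcd}, $\|\ket{\psi^O_C}-\ket{\psi^{O'}_C}\|\le 3\|C^O\ket{0}-C^{O'}\ket{0}\|$, hence $\Exp_C\big[\|C^O\ket{0}-C^{O'}\ket{0}\|\big]\ge\Delta^2/(96q^2)$. Finally, \cref{lem:bbbv} (each $C$ makes at most $q$ queries) yields $\Pr[\cB^{\QCol^O}\in T]\ge\Exp_C\big[\|C^O\ket{0}-C^{O'}\ket{0}\|^2\big]/q$, and Jensen's inequality gives $\Pr[\cB^{\QCol^O}\in T]\ge\big(\Exp_C\big[\|C^O\ket{0}-C^{O'}\ket{0}\|\big]\big)^2/q\ge\Delta^4/(96^2q^5)$, i.e.\ $\Delta\le 4\sqrt{6}\,q^{5/4}\big(\Pr[\cB^{\QCol^O}\in T]\big)^{1/4}$, so $\{\QCol^O\}_{O\in\bbO}$ is OW2H-compatible for $\bbO$ with $c=1/4$.

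I expect the only genuinely delicate point to be the coherent simulation of $O$-queries by $\cB$ via $\QCol^O$: one should spell out that the prepare/apply-$\QCol^O$/copy-out/apply-$\QCol^O$/uncompute procedure returns all ancillas to $\ket{0}$ (and $\ket{\perp}$) precisely because $C_x$ is deterministic, so $\ket{\psi^O_{C_x}}$ is a computational-basis state, and because $\QCol^O$ is self-inverse; one should also note that syntactically invalid $C$ contribute nothing, since then $\ket{\psi^O_C}=\ket{\psi^{O'}_C}=\ket{\perp}$ and $\QCol^O_C=\QCol^{O'}_C=\bbI$. Everything else---matching the swap structure to \cref{thm:ow2h-comp}, quoting \cref{clm:abcd}, applying \cref{lem:bbbv} and Jensen---is routine, and the exact constants are unimportant.
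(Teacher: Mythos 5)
Your proposal is correct and follows essentially the same route as the paper: the paper's proof of \cref{thm:QCol-punc} simply says it is identical to the proof of \cref{thm:CCol-punc} except that \cref{clm:real-distances-are-smaller} is unnecessary, since $\QCol^O$ swaps $\ket{\perp}$ directly with $\ket{\psi^O_C}$ and so \cref{thm:ow2h-comp} immediately yields $\Exp_C\big[\|\ket{\psi^O_C}-\ket{\psi^{O'}_C}\|\big]\ge \epsilon^2/(32q^2)$, after which \cref{clm:abcd}, \cref{lem:bbbv}, and Jensen's inequality give the stated bound with $c=1/4$ exactly as you describe. Your additional care about simulating $O$-queries coherently via the involution property is a reasonable elaboration of a point the paper treats implicitly, but it does not change the argument.
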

\begin{proof}[Proof of \cref{thm:QCol-punc}]
Identical to the proof of Theorem \ref{thm:CCol-punc} except Claim $\ref{clm:real-distances-are-smaller}$ is not needed since the invocation of Theorem \ref{thm:ow2h-comp} directly gives $
\mathbb{E}\Big[ \| \ket{\psi^O_C} - \ket{\psi^{O'}_C} \| \Big] \geq \frac{\epsilon^2}{32q^2}$.
\end{proof}

We can now obtain the following corollary showing security of iO and OWPs from $S$ in the presence of $\QCol^S$. 
\begin{corollary}
\label{cor:io-owp-with-qcol}
Let $S = (f,\obf,\eval^f)$ be the random oracle defined in \cref{def:oracle_set}, and let $\QCol^S$ be the unitary oracle defined in \cref{def:QCol}.  
For all functions $q:\N\to\N$ such that $q(\secp)\leq2^{o(\lambda)}$, 
for all $q(\lambda)$-query adversaries $\cA$, 
and for all sufficiently large $\lambda$,
\begin{align}
\Prr_{S, x\leftarrow\bin^{\lambda}}\Big[\cA^{S, \QCol^S}\big(f(x)\big)=x\Big]\leq\frac{1}{2^{c^2\lambda/2}}
\end{align}
and
\begin{align}
\Big|\Prr_{S}\big[\cG^{\obf, f}_{0,\lambda}(\cA^{S, \QCol^S})=1\big]-\Prr_{S}\big[\cG^{\obf, f}_{1,\lambda}(\cA^{S, \QCol^S})=1\big]\Big|\leq \frac{1}{2^{c^2\lambda/2}},
\end{align}
where $\cG$ is the iO security game defined in \cref{def:io-game}.
\end{corollary}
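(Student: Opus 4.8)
The plan is to mirror exactly the proof of \cref{cor:io-owp-with-col}, since the only structural difference between the two corollaries is which oracle we are adding on top of $S$: in \cref{cor:io-owp-with-col} it was the classical random oracle $\CCol^S$, and here it is the unitary oracle $\QCol^S$. The key observation is that \cref{thm:QCol-punc} already establishes that, for any set of oracles $\bbO$, the collection $\{\QCol^O\}_{O\in\bbO}$ is OW2H-compatible for $\bbO$ with constant $c=1/4$. This means $\QCol^S$ itself plays the role of the OW2H-compatible unitary $W^S$ required by \cref{thm:owp} and \cref{thm:iO_security}, with no need to invoke the compressed-oracle simulation (\cref{thm:comp-oracle}) at all — that detour was needed in the classical-oracle case only because $\CCol^S$ is not itself a unitary, so one first had to pass to its compression unitary $W^S$.

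\textbf{Proof.} Suppose toward a contradiction that there exist a function $q:\N\to\N$ with $q(\secp)\leq 2^{o(\lambda)}$ and a $q(\lambda)$-query adversary $\cA$ violating one of the two claimed bounds for infinitely many $\lambda$. Recall from the remark following \cref{def:QCol} that query access to $\QCol^S$ is equivalent, up to a small computational overhead, to query access to $(\QCol^S,(\QCol^S)^\dagger,(\QCol^S)^*,(\QCol^S)^\top)$, and that queries to $S$ can be simulated by a single query to $\QCol^S$ (querying it on a circuit $C$ that simply queries $S$, writes the answer on the first output register, and leaves the second output register empty). Hence $\cA^{S,\QCol^S}$ can be implemented by a $\poly(q(\lambda))$-query adversary $\cB^{\QCol^S}$ making queries only to $\QCol^S$, and $\cB$ violates one of
\begin{align}
\Prr_{S,x\leftarrow\bin^\lambda}\Big[\cB^{\QCol^S}\big(f(x)\big)=x\Big] &\leq \frac{1}{2^{c^2\lambda/2}},\\
\Big|\Prr_{S}\big[\cG^{\obf,f}_{0,\lambda}(\cB^{\QCol^S})=1\big]-\Prr_{S}\big[\cG^{\obf,f}_{1,\lambda}(\cB^{\QCol^S})=1\big]\Big| &\leq \frac{1}{2^{c^2\lambda/2}}
\end{align}
for infinitely many $\lambda$. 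By \cref{thm:QCol-punc}, taking $\bbO=\bbS$, the collection $\{\QCol^O\}_{O\in\bbS}$ is OW2H-compatible for $\bbS$ with $c=1/4\in(0,1)$. Applying \cref{thm:owp} (for the inversion bound) and \cref{thm:iO_security} (for the indistinguishability bound) with this choice of $W=\{\QCol^O\}_{O\in\bbS}$ then contradicts the existence of such a $\cB$, since $q(\lambda)\leq 2^{o(\lambda)}$ implies $\poly(q(\lambda))\leq 2^{o(\lambda)}$. This completes the proof. \qed

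\textbf{Where the real work is.} I expect essentially no obstacle in this corollary itself — it is a pure bookkeeping combination of three previously established facts (\cref{thm:QCol-punc}, \cref{thm:owp}, \cref{thm:iO_security}) together with the self-inverse/conjugation remark that handles inverse, transpose, and conjugate queries. The one subtlety worth stating carefully is that the ability to simulate $S$-queries through $\QCol^S$ (so that we may apply the abstract OW2H machinery, which is phrased in terms of access to a single unitary $W^S$) follows directly from requirement (1) of \cref{def:punc}, which is precisely the content of the first paragraph of the proof of \cref{thm:QCol-punc}. All the genuine difficulty has already been discharged upstream: in \cref{thm:ow2h-comp} (the hybrid argument bounding distinguishing by the expected gap between the swapped-in states), in \cref{clm:abcd} (the factor-$3$ bound relating $\|\ket{\psi^O_C}-\ket{\psi^{O'}_C}\|$ to $\|C^O\ket{0}-C^{O'}\ket{0}\|$), and in the four-hybrid structure of \cref{thm:owp,thm:iO_security}.
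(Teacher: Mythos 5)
Your proposal is correct and follows essentially the same route as the paper: the paper's own proof likewise observes that $S$ can be simulated using $\QCol^S$, invokes the OW2H-compatibility of $\{\QCol^O\}_{O\in\bbO}$ (\cref{thm:QCol-punc}), and then applies \cref{thm:owp} and \cref{thm:iO_security} directly, with no compressed-oracle step needed since $\QCol^S$ is already a unitary. Your write-up is if anything slightly more explicit (spelling out the contradiction and the $\poly(q)$-query accounting), and you correctly cite \cref{thm:QCol-punc} where the paper's text cites \cref{thm:CCol-punc}, apparently a typo.
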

\begin{proof}
    By \cref{thm:CCol-punc}, for any set of oracles $\bbO$, the collection $\{\QCol^O\}_{O\in\bbO}$ is OW2H-compatible for $\bbO$. The corollary therefore follows from the observation that $S$ can be simulated using $\QCol^S$ and from the security guarantees of \cref{thm:owp} or \cref{thm:iO_security}.
\end{proof}

We have shown that there is a construction of iO and OWPs from $S$ that remains secure in the presence of $\QCol^S$. Next, we show that no construction of quantum lightning from $S$ can remain secure in the presence of $\QCol^S$.
\begin{theorem}
\label{thm:QLight_QCol}
    Let $(\Samp,\Ver)$ be a pair of oracle-aided algorithms and let $O$ be a classical oracle such that $(\Samp^O,\Ver^O)$ satisfy the syntax and the correctness of quantum lightning schemes.
    Then, there exists an oracle-aided QPT algorithm $\cE$ such that 
    \begin{align}
        \Pr[\top\gets\Ver^O(s,\sigma_\regA) \land \top\gets\Ver^O(s,\sigma_\regB):(s,\sigma_{\regA\regB})\gets\cE^{\QCol^O}(1^\secp)] \ge 1-\negl(\secp).
    \end{align}
\end{theorem}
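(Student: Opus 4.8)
The plan is to follow the template of the proof of \cref{thm:dCRPuzz_Col}. First I would normalize $\Gen^O$: since $\Gen^O(1^\secp)$ outputs a classical serial number $s$ together with a \emph{pure} bolt $\ket{\psi_s}$ that is determined by $s$, I may assume without loss of generality that $\Gen^O$ first applies a unitary $V^O_\secp$ to $\ket{0\cdots 0}$, obtaining a state of the form $V^O_\secp\ket{0\cdots 0} = \sum_s \sqrt{p_s}\,\ket{s}_\regA\,\ket{\phi_s}_\regB$, and then measures $\regA$ in the computational basis to read off $s$; here $\regA$ carries the serial number and $\regB$ carries all remaining qubits, the bolt $\ket{\psi_s}$ being a sub-register of $\regB$. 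Because the bolt is pure and deterministic given $s$, the $\regB$-content factorizes as $\ket{\phi_s}_\regB = \ket{\psi_s}_{\mathrm{bolt}}\otimes\ket{\xi_s}_{\mathrm{junk}}$, so the leftover ``junk'' $\ket{\xi_s}$ can simply be traced out (exactly as the register $\regC$ is handled in \cref{thm:dCRPuzz_Col}).

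Next I would define the attacker $\cE^{\QCol^O}$. On input $1^\secp$ it writes down the classical description $C$ of the circuit $V_\secp$, viewed as a unitary whose first output register is $\regA$ (the serial) and whose second output register $\regB$ is all remaining qubits; to keep $\cE$ from being vacuously deterministic once the oracles are fixed, it pads $C$ with a fresh random string that does not change its functionality, as in the footnote to \cref{thm:dCRPuzz_Col}. It then prepares $\ket{\bot}$, applies $\QCol^O_C$, measures the first register to obtain $s$, discards the junk sub-registers of the two resulting copies of $\regB$, and outputs $(s,\sigma_{\regR_1\regR_2})$ where $\regR_1,\regR_2$ hold the two surviving bolt sub-registers.

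For correctness, recall from \cref{def:QCol} that $\QCol^O_C$ maps $\ket{\bot}$ to $\ket{\psi^O_C} = \sum_s \sqrt{p_s}\ket{s}\ket{\phi_s}\ket{\phi_s}$. Hence $\cE$'s measurement of the first register returns $s$ with probability $p_s$ --- precisely the serial-number marginal of $\Gen^O$ --- and, conditioned on $s$, leaves the product state $\ket{\phi_s}\otimes\ket{\phi_s}$, from which discarding junk extracts $\ket{\psi_s}\otimes\ket{\psi_s}$. Thus, conditioned on $s$, each of $\sigma_{\regR_1}$ and $\sigma_{\regR_2}$ equals $\ket{\psi_s}\!\bra{\psi_s}$. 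Writing $q_s := \Pr[\top\gets\Ver^O(s,\ket{\psi_s})]$, the correctness of the quantum lightning scheme gives $\sum_s p_s q_s \ge 1-\negl(\secp)$, i.e.\ $\sum_s p_s(1-q_s)\le\negl(\secp)$, and therefore
\[
\Pr\big[\Ver^O(s,\sigma_{\regR_1})=\bot \ \text{or}\ \Ver^O(s,\sigma_{\regR_2})=\bot\big] \le 2\sum_s p_s(1-q_s) \le \negl(\secp)
\]
by a union bound, which is the claimed lower bound on the probability that both verifications accept.

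I do not expect a real obstacle here: all of the conceptual content is already baked into \cref{def:QCol}, and the security of iO and OWPs in the presence of $\QCol^S$ is \cref{cor:io-owp-with-qcol}. The only points that need care are the normal-form reduction on $\Gen$ together with the harmless handling of junk registers, and the standard padding trick to avoid trivial determinism --- both of which are already treated in the analogous \cref{thm:dCRPuzz_Col}.
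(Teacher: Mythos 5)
Your proposal is correct and follows essentially the same route as the paper's proof: purify $\Gen^O$ into a unitary $U^O_\secp$ producing $\sum_s\sqrt{p_s}\ket{s}_\regA\ket{\psi_s}_\regB$, query $\QCol^O$ on its description to swap $\ket{\bot}$ into $\sum_s\sqrt{p_s}\ket{s}\ket{\psi_s}\ket{\psi_s}$, measure the serial register, and conclude via correctness of the lightning scheme. Your extra care with junk registers and the explicit union bound, and the padding trick (which is actually unnecessary here since $\QCol^O$ is a unitary and $\cE$ itself performs the measurement, supplying fresh randomness), are only minor elaborations of the same argument.
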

\begin{proof}[Proof of \cref{thm:QLight_QCol}]
Without loss of generality, we can assume that $\Gen^O(\pp)\to(s,\ket{\psi_s})$ runs as follows.
\begin{enumerate}
    \item 
    Apply a unitary $U^O_\secp$ on $|0....0\rangle$ to generate a state, 
    \begin{align}
        U^O_\secp|0...0\rangle
        =\sum_{s}\sqrt{p_s} |s\rangle_\regA |\psi_s\rangle_{\regB},
    \end{align}
    where $p_s\ge 0$ for all $s$.
    \item 
    Measure the register $\regA$ in computational basis to get $s$.
    \item 
    Output $(s,\ket{\psi_s})$.
\end{enumerate}
We consider the oracle-aided QPT algorithm $\cE^{\QCol^O}$ that behaves as follows:
\begin{enumerate}
    \item Take $1^\secp$ as input.
    \item Query $\QCol^O$ on $\ket{U^O_\secp}\ket{\bot}$ and obtain 
    \begin{align}
        |U^O_\secp\rangle \otimes \sum_s \ket{s}_\regA\ket{\psi_s}_{\regB}\ket{\psi_s}_{\regB'}.
    \end{align}
    \item Measure $\regA$ in the computational basis and obtain the result $s$.
    \item Output $(s,\ket{\psi_s},\ket{\psi_s})$.
\end{enumerate}
Then, by the correctness of quantum lightning schemes, we have
\begin{align}
    \Pr&[\top\gets\Ver^O(s,\ket{\psi_s}) \land \top\gets\Ver^O(s,\ket{\psi_s}):(s,\ket{\psi_s},\ket{\psi_s})\gets\cE^{\QCol^O}(1^\secp)] \\ 
    &= \Pr[\top\gets\Ver^O(s,\ket{\psi_s}) \land \top\gets\Ver^O(s,\ket{\psi_s}):(s,\ket{\psi_s})\gets\Gen^O(1^\secp)] \\
    &\ge 1-\negl(\secp).
\end{align}
which concludes the proof.
\end{proof}
Finally
we show the impossibility of fully black-box construction of quantum lightning from iO and OWPs,
which is the main result of this section.

\begin{theorem}
\label{thm:sep_QLight_iO}
    Fully black-box constructions of quantum lightning from iO and OWPs do not exist.
\end{theorem}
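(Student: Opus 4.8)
The plan is to follow the same template as the proof of \cref{thm:sep_dCRPuzz_iO}, instantiated with the oracle pair $(S,\QCol^S)$, where $S=(f,\obf,\eval^f)$ is the random oracle of \cref{def:oracle_set} and $\QCol^S$ is the cloning unitary of \cref{def:QCol}. Suppose towards a contradiction that $(\Gen,\Ver)$ is a fully black-box construction of quantum lightning from iO and OWPs, with associated $q:\N\to\N$ satisfying $q(\lambda)\le 2^{o(\lambda)}$ and $q(\lambda)$-query reduction $\cR$, as in \cref{def:BB_QLight}. Since $f$ is a permutation and $(\obf,\eval^f)$ satisfy the correctness of iO for $f$-aided classical circuits, the correctness requirement of \cref{def:BB_QLight} immediately gives that $(\Gen^S,\Ver^S)$ satisfies the syntax and correctness of quantum lightning.

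Next I would invoke \cref{thm:QLight_QCol} with $O:=S$: this produces an oracle-aided QPT algorithm $\cE^{\QCol^S}$ that outputs $(s,\sigma_{\regA\regB})$ for which both $\Ver^S(s,\sigma_\regA)$ and $\Ver^S(s,\sigma_\regB)$ accept with probability $1-\negl(\lambda)$, hence $\cE^{\QCol^S}$ satisfies \cref{eq:BB_break_QLight} (say with $p(\lambda)=2$) for all large $\lambda$. By the black-box security guarantee of \cref{def:BB_QLight}, the composed algorithm $\widetilde{\cR}^{S,\QCol^S}:=\cR^{\cE^{\QCol^S},S}$ then either inverts $f$ on a random output with polynomial probability (\cref{eq:BB_break_OWP_2}) or wins the iO security game $\cG^{\obf,f}_{b,\lambda}$ with polynomial advantage (\cref{eq:BB_break_iO_2}), for infinitely many $\lambda$. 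The only point requiring care here is that \cref{def:BB_QLight} grants $\cR$ access not merely to $\cE$ but to $\cE$, $\cE^\dagger$, $\cE^*$, $\cE^\top$, i.e.\ to the unitary dilation of $\cE$ and its variants; however, since $\cE$'s only oracle is $\QCol^S$, each of these four variants can be implemented by a QPT algorithm making queries to $(\QCol^S,(\QCol^S)^\dagger,(\QCol^S)^*,(\QCol^S)^\top)$, and by the remark following \cref{def:QCol} (namely that $\QCol^S$ is an involution and $(\QCol^S_C)^*=\QCol^S_{C^*}$) all of these reduce to polynomially many queries to $\QCol^S$ itself; queries to $S$ are in turn simulable from $\QCol^S$. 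Hence $\widetilde{\cR}^{S,\QCol^S}$ is a $\poly(q(\lambda))$-query adversary whose only oracle is $(S,\QCol^S)$, and since $q(\lambda)\le 2^{o(\lambda)}$ we still have $\poly(q(\lambda))\le 2^{o(\lambda)}$.

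To conclude, I would note that the existence of such $\widetilde{\cR}^{S,\QCol^S}$ succeeding with polynomial probability in inverting $f$ on a random output, or with polynomial advantage in the iO game, directly contradicts the bounds established in \cref{cor:io-owp-with-qcol} (which in turn rest on \cref{thm:owp}, \cref{thm:iO_security}, and the OW2H-compatibility of $\{\QCol^O\}_O$ from \cref{thm:QCol-punc}). This contradiction completes the proof.

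The step I expect to be the main obstacle is the bookkeeping in the previous paragraph: verifying that plugging $\cE^{\QCol^S}$ together with its conjugate/transpose/inverse into $\cR$ genuinely produces a query-bounded adversary against $(S,\QCol^S)$ --- in particular that simulating all four unitary variants of $\cE$ keeps the total query count within $2^{o(\lambda)}$, and that the composed algorithm's inversion probability / iO-game advantage is inherited exactly as guaranteed by \cref{def:BB_QLight}. Everything else is a direct instantiation of the abstract separation methodology, mirroring \cref{thm:sep_dCRPuzz_iO} line for line with $\mathsf{Col}$ replaced by $\QCol$.
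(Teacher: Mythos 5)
Your proposal is correct and follows essentially the same route as the paper's own proof: correctness of $(\Gen^S,\Ver^S)$, invoking \cref{thm:QLight_QCol} to get the breaker $\cE^{\QCol^S}$, feeding it to the reduction $\cR$, and contradicting \cref{cor:io-owp-with-qcol}. The extra bookkeeping you flag about simulating $\cE$, $\cE^\dagger$, $\cE^*$, $\cE^\top$ via $\QCol^S$ is a valid refinement of a point the paper leaves implicit (it is handled by the remark after \cref{def:QCol}), not a different argument.
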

\begin{proof}[Proof of \cref{thm:sep_QLight_iO}]
Let $S = (f,\obf,\eval^f)$ be the random oracle defined in \cref{def:oracle_set}, and let $\QCol^S$ be the unitary oracle defined in \cref{def:QCol}.  
 Let $(\Gen, \Ver)$ be a fully black-box construction of quantum lightning from (subexponentially-secure) iO and one-way permutations for some $q(\lambda) \leq 2^{o(\lambda)}$ and $q(\lambda)$-query reduction $R$. Now, since $f$ is a permutation and $(\obf,\eval^{f})$ satisfy correctness for iO, by \cref{def:BB_QLight}, $(\Setup^S, \Samp^S)$ satisfy the correctness of quantum lightning. This means that by \cref{thm:QLight_QCol}, we have an oracle-aided QPT algorithm $\cE^{\QCol^S}$ that satisfies \cref{eq:BB_break_QLight}.  By \cref{def:BB_QLight}, this means that there exists a polynomial $p'$ such that $\widetilde{R}^{S, \QCol^S}:= R^{S, \cE^{ \QCol^S}}$ has advantage $1/p'(\lambda)$ in inverting $f$ on a random output or in winning the iO security game for $(\obf,\eval^{f})$, which contradicts the security properties shown in \cref{cor:io-owp-with-qcol}.
\end{proof}
Similarly to \cref{cor:final-col}, we obtain
\begin{corollary}
\label{cor:final-qcol}
    A fully black-box construction of 
    quantum lightning
    from 
    $
    X\in\{$public-key encryption, deniable encryption, non-interactive ZK, trapdoor functions, oblivious transfer, quantum money$\}$ 
    does not exist.
\end{corollary}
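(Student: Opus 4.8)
The plan is to obtain \cref{cor:final-qcol} as an essentially immediate consequence of \cref{thm:sep_QLight_iO} by composing fully black-box reductions. First I would recall that each primitive in the list admits a fully black-box construction from iO for oracle-aided circuits together with one-way permutations: public-key encryption, deniable encryption, non-interactive ZK, (injective) trapdoor functions, and oblivious transfer all follow from~\cite{FOCS:GGHRSW13,STOC:SahWat14}, while public-key quantum money follows from~\cite{JC:Zhandry21}. What matters here is that in every one of these constructions the one-way permutation is accessed only inside circuits that are handed to the obfuscator, i.e.\ the constructions use precisely the ``iO for $f$-aided circuits'' notion formalized in \cref{def:oracle_iO} and realized by the oracle $S$ of \cref{def:oracle_set}.

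Next I would argue by contradiction. Suppose there were a fully black-box construction $(\Gen_0,\Ver_0)$ of quantum lightning from some $X$ in the list, with an associated $q_0(\lambda)$-query reduction $\cR_0$ where $q_0(\lambda)\le 2^{o(\lambda)}$. Composing $(\Gen_0,\Ver_0)$ with the black-box construction of $X$ from iO and OWPs yields oracle-aided algorithms $(\Gen,\Ver)$ that form a correct quantum lightning scheme whenever $f$ is a permutation and $(\obf,\eval)$ satisfy iO-correctness for $f$-aided circuits; and composing $\cR_0$ with the security reduction of the ``$X$ from iO and OWPs'' construction yields a reduction $\cR$ whose query complexity is the product of two $2^{o(\lambda)}$-bounded functions, hence still $2^{o(\lambda)}$. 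Thus $(\Gen,\Ver,\cR)$ constitutes a fully black-box construction of quantum lightning from iO and OWPs in the sense of \cref{def:BB_QLight}, which \cref{thm:sep_QLight_iO} rules out. Equivalently, one may unpack this directly: relative to $(S,\QCol^S)$ the primitive $X$ exists with security against $(S,\QCol^S)$-query adversaries by the cited constructions and \cref{cor:io-owp-with-qcol}; \cref{thm:QLight_QCol} supplies an adversary $\cE^{\QCol^S}$ breaking $(\Gen,\Ver)$; feeding it to $\cR$ breaks either one-wayness of $f$ or iO-security of $(\obf,\eval^f)$ relative to $(S,\QCol^S)$, contradicting \cref{cor:io-owp-with-qcol}.

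I expect the only nontrivial bookkeeping to concern the interface between the reduction and the breaking adversary: \cref{def:BB_QLight} equips the quantum lightning reduction with access to $\cA,\cA^\dagger,\cA^*,\cA^\top$, whereas a generic reduction establishing $X$ from iO and OWPs may assume only plain oracle access to the $X$-breaker. I would resolve this by observing that in the proof of \cref{thm:sep_QLight_iO} the breaking adversary is $\cE^{\QCol^S}$, which makes a single query to the involution $\QCol^S$, and that, as noted in the remark following \cref{def:QCol}, $\QCol^S$ is self-inverse with $(\QCol^O_C)^{*}=\QCol^O_{C^{*}}$; consequently $\cE,\cE^\dagger,\cE^*,\cE^\top$ are each simulable with $O(1)$ queries to $\QCol^S$, so the composed reduction remains a subexponential-query algorithm whose only oracles are $S$ and $\QCol^S$. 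This query-complexity and adversary-interface accounting is the single place where I would slow down; everything else is direct composition together with the cited black-box constructions.
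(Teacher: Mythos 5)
Your proposal is correct and follows essentially the same route as the paper, which obtains \cref{cor:final-qcol} by composing the known fully black-box constructions of each $X$ from iO and OWPs with the hypothesized construction of quantum lightning from $X$, and then invoking \cref{thm:sep_QLight_iO} (the paper simply states this ``similarly to \cref{cor:final-col}'' without spelling out the composition). Your extra care about query-complexity bookkeeping and the $\cA,\cA^\dagger,\cA^*,\cA^\top$ interface, resolved via the involution property of $\QCol$, is consistent with the paper's remark following \cref{def:QCol} and does not change the argument.
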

\section{Separations using the Non-Collapsing Measurement Oracle}
\label{sec:PDQP}
In this section we show the non-collapsing measurement oracle $\cQ$, which solves problems in $\mathsf{SampPDQP}$,
does not break iO or\mor{and?} OWPs. 
Since $\mathsf{SampPDQP}$ can decide $\mathsf{SZK}$~\cite{ITCS:ABFL16}, and thus break homomorphic encryption, as well as non-interactive computational private
information retrieval~\cite{C:BogLee13,TCC:LiuVai16},
\label{sec:puncturingtheoremforPDQP} 
this result rules out the construction of either primitive in a fully black-box manner from iO and OWPs.
\if0
\takashi{Doesn't this directly follow from Theorem 6.7 since SZK hardness implies dCRH, which implies dCRPuzz? In that case, this section is not useful to get any new separation result?}\mor{I agree. In terms of applications, there is no new result, but the present section is stronger than Sec.6, because there the oracle Col is used and here the non-collapsing measurement oracle $\cQ$ is used, which is a multi-non-collapsing-measurements version of Col. We have to clarify this point.} 
\fi

The argument proceeds almost identically to the separation in \cref{sec:CR}, with the oracle $\CCol^S$ replaced with $\cQ^S$. The only difference is that proving that the corresponding compression unitary is OW2H-compatible is significantly more involved. 

\subsection{Defining the Non-Collapsing Measurement Oracle Relative to Oracles}
We modify the definition of $\cQ$ to allow for circuits that make oracle queries.
\begin{definition}[Non-Collapsing Measurement Oracle with Oracle Queries] 
\label{def:q-with-oracle}Let $O$ be a classical oracle and
let $C\coloneqq (C_1^{(\cdot)}, M_1, C_2^{(\cdot)}, M_2, \ldots, C_{T}^{(\cdot)}, M_{T})$ be a classical description of a quantum circuit,
where $C_i$ is an $\ell$-qubit quantum circuit that may make a single quantum query to an oracle at the end of its execution, and $M_i$ is the computational-basis measurement on $m_i$ qubits with $m_i\le \ell$.
Let $\DQ^O_C$ be defined as the output distribution of the following process:
\begin{enumerate}
    \item Set $|\psi_0\rangle := |0\rangle$ and $v_0 := 0$. 
    \item
    For each $i \in [T]$:
\begin{itemize}
    \item Sample $|\psi_i\rangle \leftarrow M_i \cdot U_i^f|\psi_{i-1}\rangle$.\mor{$C_i^f$?}
    \item Sample $v_i$ by measuring $|\psi_i\rangle$ in the computational basis.
\end{itemize}
\item Output $(v_1,...,v_T)$
\end{enumerate}
Since measured registers can be copied, we assume without loss of generality that the measured registers of $\ket{\psi_i}$ are never modified after the measurement. 
Also without loss of generality, we assume that $C_i^f$ is an application of a unitary $C_i$ followed by an oracle call to $f$. 
We may additionally assume without loss of generality that the oracle is implemented as a phase query (since even oracles that output bitstrings can be implemented with phase oracles upto some polynomial factor blowup in queries).

Let let $\DQ^O := \bigotimes_C \DQ^O_C$ be a product distribution in the manner described in \cref{sec:def_comp}. The oracle $\cQ^O$ is sampled from $\DQ^O$.
\end{definition}
\subsection{OW2H-Compatibility Theorem}
Here we show the analogue of \cref{thm:CCol-punc} for $\cQ$, which is the only difference between the proof in \cref{sec:CR} and this section. The final theorem will therefore follow from identical arguments.

\begin{theorem}
\label{thm:PDQP-punc}
   For any oracle $O$, let $W^O$ be the compression unitary (\cref{def:comp_U}) for $\DQ^O$. Then for any set of oracles $\bbO$, $W := \{ W^O \}_{O \in \bbO}$ is OW2H-compatible (\cref{def:punc}) for $\bbO$ with $c = 1/4$.
\end{theorem}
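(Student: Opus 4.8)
The plan is to mirror the proof of \cref{thm:CCol-punc}, replacing the single‑shot step encapsulated by \cref{clm:real-distances-are-smaller,clm:abcd} with an argument adapted to the multi‑round, non‑collapsing structure of $\DQ^O_C$. The first condition of \cref{def:punc} holds exactly as there: an $O$‑query is simulated by calling $W^O$ on the one‑round circuit $C=(C_1,M_1)$ whose $C_1$ queries $O$ and whose $M_1$ measures the answer register. For the second condition, fix a $q$‑query $\cA$ and write $T_{\mathrm{diff}}:=\{x:O(x)\ne O'(x)\}$ (reserving $T$ for the number of rounds of a queried circuit). I would take $\cB^{\cA,W^O}$ to be the algorithm that samples $i\leftarrow[q]$, runs $\cA^{W^O}$ up to just before its $i$‑th query and measures the query register to obtain a circuit $C=(C_1,M_1,\dots,C_T,M_T)$ with $T\le q$; then samples $k\leftarrow[T]$ and simulates the process of \cref{def:q-with-oracle} for $C$ with oracle $O$ (answering $O$‑queries through $W^O$ and performing the $M_j$ as ordinary measurements — the non‑collapsing readouts $v_j$ are irrelevant since they do not affect the state), stopping just before the single oracle query of $C_k$, and finally measures that query register and outputs the result. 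This $\cB$ makes $\poly(q)$ queries, and conditioned on the sampled $C$ it satisfies $\Pr[\cB^{\cA,W^O}\in T_{\mathrm{diff}}\mid C]=\tfrac1T\mu^O_C$, where $\mu^O_C$ denotes the expected total magnitude‑squared that the $T$ oracle queries of the process of \cref{def:q-with-oracle} for $C$, run with $O$, place on $T_{\mathrm{diff}}$.

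Applying \cref{thm:ow2h-comp} to $\cA$ and the compression unitaries $W^O,W^{O'}$ — which have the required form with $\ket{\varphi_C}=\ket{\DQ^O_C}$ — yields $\Exp_C[\|\ket{\DQ^O_C}-\ket{\DQ^{O'}_C}\|]\ge \epsilon^2/(32q^2)$, where $\epsilon=\|\ket{\psi}-\ket{\psi'}\|$ for the purified final states $\ket{\psi},\ket{\psi'}$ of $\cA^{W^O},\cA^{W^{O'}}$ and the expectation is over the $C$ sampled by the first stage of $\cB$. The crux of the whole argument is the claim
\begin{align}
\label{eq:pdqp-key-plan}
\|\ket{\DQ^O_C}-\ket{\DQ^{O'}_C}\|^2\ \le\ \poly(T)\cdot\mu^O_C\qquad\text{for every }C.
\end{align}
Granting \eqref{eq:pdqp-key-plan}, the theorem follows just as in \cref{thm:CCol-punc}: combine it with the expectation bound, use $\Pr[\cB\in T_{\mathrm{diff}}]=\Exp_C[\tfrac1T\mu^O_C]\ge\tfrac1q\Exp_C[\mu^O_C]$ together with Jensen's inequality (noting $T\le q$), and rearrange to obtain $\epsilon\le\poly(q)\cdot(\Pr[\cB^{W^O}\in T_{\mathrm{diff}}])^{1/4}$, i.e.\ OW2H‑compatibility with $c=1/4$.

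To prove \eqref{eq:pdqp-key-plan} I would first pass from Euclidean to statistical distance via \cref{lem:ED-to-SD}, giving $\|\ket{\DQ^O_C}-\ket{\DQ^{O'}_C}\|^2\le 2\,\SD(\DQ^O_C,\DQ^{O'}_C)$, and then observe that the output $(v_0{:=}0,v_1,\dots,v_T)$ of \cref{def:q-with-oracle} is a Markov chain: since the registers measured by $M_1,\dots,M_i$ are never modified afterward, $v_i$ contains the outcomes of $M_1,\dots,M_i$ as substrings, so $v_{i-1}$ pins down the pure post‑$M_{i-1}$ state $\ket{\psi_{i-1}}$ and hence the conditional law of $v_i$; thus \cref{lem:markov-tv} gives $\SD(\DQ^O_C,\DQ^{O'}_C)\le 2\sum_{i=1}^T\SD((v^O_{i-1},v^O_i),(v^{O'}_{i-1},v^{O'}_i))$. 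Each pair‑distance I would split by the chain rule for statistical distance into $\SD(v^O_{i-1},v^{O'}_{i-1})$ — bounded by the trace distance between the outputs of the $(i-1)$‑query sub‑process (purify by deferring the $M_j$'s), which by a BBBV‑style hybrid bound (via \cref{lem:bbbv}, or \cref{thm:ow2h-dist}) is $\poly(T)$ times the square root of the mass the first $i-1$ queries put on $T_{\mathrm{diff}}$ — and an averaged conditional term $\SD(v^O_i\mid v_{i-1},\,v^{O'}_i\mid v_{i-1})$. For the latter, conditioning on $v_{i-1}$ fixes the $M_{<i}$ outcomes, hence the oracle‑dependent state $\ket{\psi^O_{i-1}}$, and since the $M_i$‑measurement is absorbed into the final computational‑basis measurement, $v_i\mid v_{i-1}$ is distributed as a computational‑basis measurement of $C_i^O\ket{\psi^O_{i-1}}$; then
\begin{align}
\big\|C_i^O\ket{\psi^O_{i-1}}-C_i^{O'}\ket{\psi^{O'}_{i-1}}\big\|\le\big\|\ket{\psi^O_{i-1}}-\ket{\psi^{O'}_{i-1}}\big\|+\big\|(C_i^O-C_i^{O'})\ket{\psi^{O'}_{i-1}}\big\|,
\end{align}
whose first summand is again governed by the first $i-1$ queries and whose second summand — a phase‑oracle difference supported on $T_{\mathrm{diff}}$ — is at most twice the square root of the mass $C_i$'s query puts on $T_{\mathrm{diff}}$. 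Summing over $i$ and applying Cauchy--Schwarz collapses all per‑query contributions into $\poly(T)\cdot\mu^O_C$.

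The hard part — and the reason this is more delicate than \cref{thm:CCol-punc} — is precisely that the non‑collapsing readouts leave $\ket{\psi_i}$ intact for the subsequent coherent evolution, so $\DQ^O_C$ is \emph{not} the distribution of a single computational‑basis measurement on any pure state, and the slick \cref{clm:abcd}‑style bound has no analogue. The Markov decomposition is what rescues the argument (each \emph{pair} $(v_{i-1},v_i)$ is tractable because $v_{i-1}$ classically reveals all earlier $M$‑outcomes), but it forces one to reason about the branch‑conditioned, effectively post‑selected states $\ket{\psi^O_{i-1}}$, and to invoke a one‑way‑to‑hiding/BBBV bound valid for an \emph{arbitrary} second oracle $O'$ rather than an independently reprogrammed one; making the bookkeeping of these intermediate masses aggregate cleanly into $\mu^O_C$ with only a $\poly(T)$ loss is where the real work lies.
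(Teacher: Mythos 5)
Your setup matches the paper's proof almost exactly: the same simulator $\cB$ (measure a random $W^O$-query of $\cA$ to get $C$, then measure a random oracle query of the process for $C$), the same appeal to \cref{thm:ow2h-comp}/\cref{thm:ow2h-dist} and \cref{lem:ED-to-SD}, the same observation that $(v_1,\dots,v_T)$ is Markov because the collapsed outcomes are retained in each $v_i$, the same use of \cref{lem:markov-tv}, and the same final BBBV-plus-Jensen aggregation giving $c=1/4$. The divergence — and the gap — is in how you bound each pair term $\SD\bigl((v_{i-1},v_i),(v'_{i-1},v'_i)\bigr)$, which is exactly the content of your unproven key claim \eqref{eq:pdqp-key-plan}. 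You condition on $v_{i-1}$ via a chain rule and then argue about the branch-conditioned states $\ket{\psi^{O}_{i-1}},\ket{\psi^{O'}_{i-1}}$, asserting that their distance "is again governed by the first $i-1$ queries". Per branch this is false: post-selection can amplify an arbitrarily small global deviation into orthogonal branch states (e.g.\ a circuit preparing $\sqrt{1-\delta^2}\ket{0}\ket{a}+\delta\ket{1}\ket{a_1}$ under $O$ and $\sqrt{1-\delta^2}\ket{0}\ket{a}+\delta\ket{1}\ket{b_1}$ under $O'$ with $\langle a_1|b_1\rangle=0$, then measuring the first qubit), so no BBBV-type bound applies branchwise. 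The statement can only be saved by averaging over branches with their probabilities (a weighted Cauchy--Schwarz argument), and your second summand $\|(C_i^{O}-C_i^{O'})\ket{\psi^{O'}_{i-1}}\|$ is bounded by the query mass of the \emph{$O'$-conditioned} state on $T_{\mathrm{diff}}$, which is not what $\mu^O_C$ measures, so additional cross-oracle bookkeeping (or the phase-query observation) is needed there too. You candidly flag all of this as "where the real work lies", but that is precisely the step that constitutes the theorem's proof, so the proposal as written is incomplete.

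The paper closes this gap differently, avoiding conditioning altogether: it defers the collapsing measurements and observes that the joint pair $(v_i,v_{i+1})$ is the computational-basis measurement of a single pure state $\ket{\widetilde A}=\sum_s\alpha_s\ket{s}\ket{\phi_s}\otimes C^{O}_{i+1}\ket{s}\ket{\phi_s}$ (and analogously $\ket{\widetilde B}$ for $O'$, where the final $O'$-phase query in $C_{i+1}$ is replaced by an $O$-phase query since a diagonal phase before a computational-basis measurement is immaterial). A short claim, in the spirit of \cref{clm:abcd}, shows $\mathsf{TD}(\ket{\widetilde A},\ket{\widetilde B})\le 3\,\mathsf{TD}(\ket{A},\ket{B})$ where $\ket{A},\ket{B}$ are the \emph{unconditioned} global states $C^{O}_i\cdots C^{O}_1\ket{0}$ and $C^{O'}_i\cdots C^{O'}_1\ket{0}$; these are directly controlled by \cref{lem:bbbv} in terms of the query mass on $T_{\mathrm{diff}}$, and the per-$i$ bounds sum to the analogue of your \eqref{eq:pdqp-key-plan}. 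If you want to complete your route instead, you must carry out the branch-averaged bound explicitly (weighting each branch by its probability and using Cauchy--Schwarz to recover $\|\ket{A}-\ket{B}\|$), and re-express the oracle-difference term relative to the $O$-run; as submitted, the central inequality is assumed rather than proved.
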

\begin{proof}
First we can easily check that queries to $O$ can be simulated by querying $W^O$ on $C$, that queries $O$, writes the output on the first output register, and leaves the second register empty.

We show the second condition of OW2H-compatibility.
Let $\cB$ be the algorithm that samples $i \leftarrow [q]$, runs $\cA^{W^O}$, measures the query register of the $i$-th query in the computational basis, and obtains measurement $C$. $\cB$ then samples $j \leftarrow [q]$, runs $C^O$ (queries to $O$ can be simulated using $W^O$), measures the query register of the $j$-th query in the computational basis, and outputs the measurement result.
For any $O, O' \in \mathcal{O}$, for any $q$-query $\cA$,
define
$\epsilon := \left\| \ket{\psi} - \ket{\psi'} \right\|$.
Then by Theorem \ref{thm:ow2h-dist},
\begin{align}
\label{eq:pdqp-bound}
\mathbb{E}\Big[ \SD\left(\DQ^O_C, \DQ^{O'}_C\right)\Big] \geq \frac{\epsilon^2}{16q^2}.
\end{align}
where the expectation is taken over $C$ measured during the execution of $\cB$.

Now we will show that any $C$ such that $\mathsf{SD}(\DQ_C^O, \DQ_C^{O'})$ is noticeable must be querying $T$ with a noticeable query mass, i.e., measuring a random query made by $C^O$ will result in $x \in T$ with noticeable probability. 


Fix any $C = (C_1, M_1, C_2, M_2, \ldots C_\tau, M_\tau)$ and define $\delta_{C}\coloneqq\mathsf{SD}(\DQ_C^O, \DQ_C^{O'})$. Let $v =(v_1 \ldots v_{\tau})$ be a random variable representing the output of $\DQ_C^O$ and let $|\psi_1\rangle \ldots |\psi_{\tau}\rangle$ be the corresponding states that are measured to sample the $v_i$s. Also let $v' = (v'_1 \ldots v'_{\tau})$ be a random variable representing the output of $\cQ_C^{O'}$. First note that each $v_i$ fixes the corresponding state $|\psi_i\rangle$. Since the distribution of $v_{i+1}$ is determined by $|\psi_i\rangle$, $v$ is a Markov process. Similarly, $v'$ is a Markov process. By Lemma $\ref{lem:markov-tv}$ 
\begin{align}
   \delta_{C} = \mathsf{SD}(v, v')
    \leq \sum_{i\in[\tau-1]} 2\cdot \mathsf{SD}(\{v_i, v_{i+1}\},\{v'_i, v'_{i+1}\}).
\end{align}
The value $v_i$ is generated by applying a sequence of unitaries and measurements to obtain a final state which is measured in the computational basis. We may instead consider deferring the measurements until right before sampling $v_i$. Suppose these measurements are applied to the first $k$ bits of the state. We therefore obtain 
\begin{align}
\ket{A} :=C^O_{i}C^O_{i-1}\ldots C^O_1 \ket{0}=  \sum_{s\in\bin^k} \alpha_s \ket{s}\ket{\phi_s}.
\end{align}
(where $\alpha_s \geq 0$). The first register is measured in the computational basis to obtain the state $\ket{s}\ket{\phi_s}$ which is then measured to obtain $v_i$. $v_{i+1}$ can then obtained by applying $C^O_{i+1}$ to $\ket{s}\ket{\phi_s}$ and measuring in the computational basis. The distribution obtained by sampling $v_i$ and $v_{i+1}$ may therefore be obtained by a computational basis measurement on the state 
\begin{align}
\ket{\widetilde{A}} := \sum_{s\in\bin^k} \alpha_s \ket{s}\ket{\phi_s} \otimes C^O_{i+1}\ket{s}\ket{\phi_s}.
\end{align} 
Likewise, $v_i'$ is generated by the same process except $O'$ is used instead of $O$. Therefore, if we define
\begin{align}
\ket{B}:= C^{O'}_{i}C^{O'}_{i-1}\ldots C^{O'}_1 \ket{0} =
\sum_{s\in\bin^k} \alpha'_s \ket{s}\ket{\phi'_s}
\end{align} 
(where $\alpha'_s \geq 0$), then the distribution obtained by sampling $v'_i$ and $v'_{i+1}$ may therefore be obtained by a computational basis measurement on the 
state 
\begin{align}
\sum_{s\in\bin^k} \alpha'_s \ket{s}\ket{\phi'_s} \otimes C^{O'}_{i+1}\ket{s}\ket{\phi'_s}. 
\end{align}
Finally, since we assume each $C_i$ is a unitary followed by a single query to the phase oracle, and computational basis measurement distributions are unaffected by phases, we may instead obtain $v'_i$ and $v'_{i+1}$ by a computational basis measurement on the state 
\begin{align}
\ket{\widetilde{B}}:=\sum_{s\in\bin^k} \alpha'_s \ket{s}\ket{\phi'_s} \otimes C^{O}_{i+1}\ket{s}\ket{\phi'_s}.
\end{align}
The term $\mathsf{SD}(\{v_i, v_{i+1}\},\{v'_i, v'_{i+1}\})$ is therefore upper bounded by  $\mathsf{TD}(\ket{\widetilde{A}},\ket{\widetilde{B}})$\footnote{We abuse notation to use $\mathsf{TD}(\ket{\sigma},\ket{\sigma'})$ to refer to $\mathsf{TD}(\ket{\sigma}\bra{\sigma},\ket{\sigma'}\bra{\sigma'})$ for any pure states $\ket{\sigma}$ and $\ket{\sigma'}$.}. 
\begin{MyClaim}
\begin{align}
        \mathsf{TD}(\ket{\widetilde{A}}, \ket{\widetilde{B}})\leq  3 \cdot \mathsf{TD}(\ket{{A}}, \ket{{B}}).
\end{align}
\end{MyClaim}
\begin{proof}
Let $\ket{\psi_s}:= C_{i+1}^O\ket{s}\ket{\phi_s}$ and $\ket{\psi'_s}:= C_{i+1}^O\ket{s}\ket{\phi'_s}$. To obtain a bound, we consider the following states.
\begin{align}
\ket{\widetilde{C}}&:=\sum_{s\in\bin^k} \alpha'_s \ket{s}\ket{\phi'_s} \ket{\psi_s}\\
\ket{\widetilde{D}}&:=\sum_{s\in\bin^k} \alpha_s \ket{s}\ket{\phi'_s} \ket{\psi_s}.
\end{align}
Now, by triangle inequality,
\begin{align}
    \label{eq:query-triangle-ineq}
    \mathsf{TD}(\ket{\widetilde{A}}, \ket{\widetilde{B}}) \leq 
    \mathsf{TD}(\ket{\widetilde{A}}, \ket{\widetilde{C}}) +
    \mathsf{TD}(\ket{\widetilde{C}}, \ket{\widetilde{D}}) + 
    \mathsf{TD}(\ket{\widetilde{D}}, \ket{\widetilde{B}}).
\end{align}
We will bound each of the terms on the RHS. First 
\begin{align}
    \mathsf{TD}(\ket{\widetilde{A}}, \ket{\widetilde{C}}) &= \sqrt{1 - |\langle\widetilde{A}|\widetilde{C}\rangle|^2}
     = \sqrt{1 - \left|\sum_s \alpha_s\alpha'_s\langle \phi_s|\phi'_s\rangle\langle \psi_s|\psi_s\rangle\right|^2}\\
    &= \sqrt{1 - \left|\sum_s \alpha_s\alpha'_s\langle \phi_s|\phi'_s\rangle\right|^2}
     = \sqrt{1 - \left|\langle A|B\rangle\right|^2}
     = \mathsf{TD}(\ket{A}, \ket{B}).
\end{align}
Similarly,
\begin{align}
    \mathsf{TD}(\ket{\widetilde{D}}, \ket{\widetilde{B}}) &= \sqrt{1 - |\langle\widetilde{D}|\widetilde{B}\rangle|^2}
     = \sqrt{1 - \left|\sum_s \alpha_s\alpha'_s\langle \phi'_s|\phi'_s\rangle\langle \psi_s|\psi'_s\rangle\right|^2}\\
    &= \sqrt{1 - \left|\sum_s \alpha_s\alpha'_s\langle \psi_s|\psi'_s\rangle\right|^2}
    = \sqrt{1 - \left|\sum_s \alpha_s\alpha'_s\langle \phi_s|\phi'_s\rangle\right|^2}\\
     &= \sqrt{1 - \left|\langle A|B\rangle\right|^2}
     = \mathsf{TD}(\ket{A}, \ket{B}).
\end{align}
Finally,
\begin{align}
    |\langle\widetilde{C}|\widetilde{D}\rangle| &= |\sum_s \alpha_s\alpha'_s|\\
    &= \sum_s |\alpha_s\alpha'_s|\\
    &\geq \sum_s |\alpha_s\alpha'_s \langle \phi'_s|\phi_s\rangle|\\
    &\geq |\sum_s \alpha_s\alpha'_s \langle \phi'_s|\phi_s\rangle| \\
    &\geq |\langle A|B\rangle|,
\end{align}
which implies that 
\begin{align}
    \mathsf{TD}(\ket{\widetilde{C}}, \ket{\widetilde{D}}) \leq \mathsf{TD}(\ket{A}, \ket{B}).
\end{align}
The claim follows by substituting in \eqref{eq:query-triangle-ineq}.
\end{proof}
The term $\mathsf{SD}(\{v_i, v_{i+1}\},\{v'_i, v'_{i+1}\})$ is therefore upper bounded as
\begin{align}
    \mathsf{SD}(\{v_i, v_{i+1}\},\{v'_i, v'_{i+1}\})&\leq 3\cdot \mathsf{TD}(\ket{A},\ket{B}) \\
    &\leq 6\cdot \|\ket{A} - \ket{B}\|\\
    &= 6\cdot \|(C^{O}_{i}C^{O}_{i-1}\ldots C^{O}_1  - C^{O'}_{i}C^{O'}_{i-1}\ldots C^{O'}_1 )\ket{0}\|.
\end{align} 
Let $\eta(\ket{\sigma})$ represent the probability mass of queries to $x \in T$ in $\ket{\sigma}$, and let $\eta_{C,i} := \sum_{j\in [i]}\eta(C^{O}_{j}C^{O}_{j-1}\ldots C^{O}_1 \ket{0})$, i.e. $\eta_{C,i}$ is the total query mass on $T$ in the $i$ queries made by $u$ to $O$. Then, by Lemma \ref{lem:bbbv}
\begin{align}
    \mathsf{SD}(\{v_i, v_{i+1}\},\{v'_i, v'_{i+1}\})&\leq 6 \cdot \sqrt{i\cdot \eta_{C,i}}.
\end{align} 
Summing over $i$
\begin{align}
    \sum_{i} \mathsf{SD}(\{v_i, v_{i+1}\},\{v'_i, v'_{i+1}\})&\leq 6 \cdot \sum_{i\in[q]}\sqrt{i\cdot \eta_{i}} \\
    &\leq 6q\sqrt{q\eta_{C,q}}.
\end{align}
 Next we use the bound on $\delta_{C}$
\begin{align}
   \delta_{C} \leq \sum_i 2\cdot \mathsf{SD}(\{v_i, v_{i+1}\},\{v'_i, v'_{i+1}\})
    \leq 12q^{3/2}\eta_{C,q}^{1/2},
\end{align}
which implies
\begin{align}
   \eta_{C,q} \geq 144\delta^2_{C}/q^3. 
\end{align}
Now conditioned on $\cB^{W^O}$ measuring $C$, the probability that $\cB^{W^O} \in T$ is at least $\eta_{C,q}/q\geq 144\delta^2_{C}/q^4$
Therefore, by Jensen's inequality, and the 
\begin{align}
    \Pr[\cB^{W^O}\in T] \geq 144\Exp[\delta^2_C]/q^4
    \geq 144\left(\Exp[\delta_C]\right)^2/q^5
    \geq  144\frac{\epsilon^4}{16^2q^8},
\end{align}
which rearranges to give
\begin{align}
   \epsilon \leq \sqrt{4/3} q^{2}(\Prr[ \cB^{W^O} \in T])^{1/4},
\end{align}
concluding the proof of the theorem.
\end{proof}


By the same arguments as \ref{sec:CR}, replacing $\CCol$ with $\cQ$, we obtain the following corollaries. First we obtain that iO and OWP can exist even against adversries that can query $\cQ$.
\begin{corollary}
\label{cor:io-owp-with-q}
Let $S = (f,\obf,\eval^f)$ be the random oracle defined in \cref{def:oracle_set}, and let $\cQ^S$ be the random oracle defined in \cref{def:q-with-oracle}.  
For all functions $q:\N\to\N$ such that $q(\secp)\leq2^{o(\lambda)}$, 
for all $q(\lambda)$-query adversaries $\cA$, 
and for all sufficiently large $\lambda$,
\begin{align}
\Prr_{S,\cQ^S, x\leftarrow\bin^{\lambda}}\Big[\cA^{S, \cQ^S}\big(f(x)\big)=x\Big]\leq\frac{1}{2^{c^2\lambda/2}}
\end{align}
and
\begin{align}
\Big|\Prr_{S, \cQ^S}\big[\cG^{\obf, f}_{0,\lambda}(\cA^{S, \cQ^S})=1\big]-\Prr_{S, \cQ^S}\big[\cG^{\obf, f}_{1,\lambda}(\cA^{S, \cQ^S})=1\big]\Big|\leq \frac{1}{2^{c^2\lambda/2}},
\end{align}
where $\cG$ is the iO security game defined in \cref{def:io-game}.
\end{corollary}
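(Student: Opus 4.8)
The approach is to mirror the proof of \cref{cor:io-owp-with-col} essentially verbatim, substituting the non-collapsing measurement oracle $\cQ^S$ for the collision-finding oracle $\CCol^S$; the only ingredient that changes is the OW2H-compatibility statement, and that has already been re-established as \cref{thm:PDQP-punc}. Concretely, the claim will follow by combining three facts: (i) \cref{thm:PDQP-punc}, which says that for every collection of oracles $\bbO$ the compression unitaries $\{W^O\}_{O\in\bbO}$ for $\DQ^O$ are OW2H-compatible for $\bbO$ with constant $c=1/4\in(0,1)$; (ii) \cref{thm:comp-oracle}, which lets us replace quantum query access to $\cQ^S$ (a random oracle drawn from the product distribution $\DQ^S$, per \cref{def:q-with-oracle}) by stateful simulation via the compressed oracle $\mathsf{CStO}_{\DQ^S}$, which in turn costs two queries to $W^S$ per query; and (iii) \cref{thm:owp} and \cref{thm:iO_security}, which already guarantee one-wayness of $f$ and iO security of $(\obf,\eval^f)$ against any query-bounded adversary with access to an OW2H-compatible unitary collection for $\bbS$.

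First I would assume, for contradiction, that there is a function $q:\N\to\N$ with $q(\lambda)\le 2^{o(\lambda)}$ and a $q(\lambda)$-query adversary $\cA$ with access to $(S,\cQ^S)$ that, for infinitely many $\lambda$, either outputs $x$ from $f(x)$ with probability strictly above $2^{-c^2\lambda/2}$ or distinguishes $\cG^{\obf,f}_{0,\lambda}$ from $\cG^{\obf,f}_{1,\lambda}$ with advantage strictly above $2^{-c^2\lambda/2}$. I would then build a $2q(\lambda)$-query adversary $\cB$ that has access only to $W^S$ and achieves the same violation: $\cB$ runs $\cA$, answering each $\cQ^S$-query of $\cA$ by the compressed-oracle simulation (two $W^S$-queries, exact by \cref{thm:comp-oracle}) and each direct $S$-query by a single $W^S$-query (using the first condition of OW2H-compatibility, exactly as spelled out at the start of the proof of \cref{thm:PDQP-punc}). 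Since $q(\lambda)\le 2^{o(\lambda)}$ implies $2q(\lambda)\le 2^{o(\lambda)}$, $\cB$ still satisfies the query-growth hypothesis required by \cref{thm:owp} and \cref{thm:iO_security}.

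Finally, applying \cref{thm:owp} and \cref{thm:iO_security} to $\cB$ — which is legitimate because $\{W^O\}_{O\in\bbS}$ is OW2H-compatible for $\bbS$ by \cref{thm:PDQP-punc} — gives, for all sufficiently large $\lambda$, an inversion probability at most $2^{-c^2\lambda/2}$ and an iO advantage at most $2^{-c^2\lambda/2}$, contradicting the behaviour of $\cB$. The only thing that needs care is the routine bookkeeping: the constant-factor blowup in query count is harmless, and the $\poly(q(\lambda))$ slack inside the OW2H-compatibility bound never overwhelms $2^{-c^2\lambda/2}$ precisely because $q(\lambda)\le 2^{o(\lambda)}$. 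There is no real obstacle at this stage — the genuinely delicate argument (the Markov-chain decomposition from \cref{lem:markov-tv} and the inequality $\mathsf{TD}(\ket{\widetilde A},\ket{\widetilde B})\le 3\,\mathsf{TD}(\ket A,\ket B)$ that make the one-way-to-hiding reduction survive multiple interleaved non-collapsing measurements) lives entirely inside \cref{thm:PDQP-punc}, which we are taking as given.
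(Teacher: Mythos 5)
Your proposal is correct and matches the paper's own argument: the paper proves this corollary exactly by repeating the proof of \cref{cor:io-owp-with-col} with $\CCol^S$ replaced by $\cQ^S$, i.e.\ simulating $(S,\cQ^S)$-queries via the compressed oracle for $\DQ^S$ (using \cref{thm:comp-oracle}, at a constant-factor query blowup) and then invoking \cref{thm:owp} and \cref{thm:iO_security} through the OW2H-compatibility established in \cref{thm:PDQP-punc}. No gaps; the delicate work is indeed contained in \cref{thm:PDQP-punc}, which both you and the paper take as the only new ingredient.
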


Finally, we can obtain the following corollary.\mor{Isn't the following corollary already obtained from the result for $\mathsf{Col}^O$?}
\begin{corollary}
\label{cor:sep_PDQP_iO+OWP}
    There does not exist a fully black box construction of hard problems in SZK, homomorphic encryption, or non-interactive computational private
information retrieval from $
    X\in\{$public-key encryption, deniable encryption, non-interactive ZK, trapdoor functions, oblivious transfer, quantum money$\}$ 
\end{corollary}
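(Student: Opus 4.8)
The plan is to instantiate the black-box separation framework of the Technical Overview with the oracle pair $(S,\cQ^S)$, where $S$ is the oracle of \cref{def:oracle_set} and $\cQ^S$ is the non-collapsing measurement oracle of \cref{def:q-with-oracle}, mirroring the proof of \cref{thm:sep_dCRPuzz_iO} but with $\cQ^S$ in place of $\CCol^S$ and \cref{cor:io-owp-with-q} in place of \cref{cor:io-owp-with-col}. For Property~1 I would invoke \cref{cor:io-owp-with-q}: $S$ yields a one-way permutation and an iO for $f$-aided circuits whose security survives quantum query access to $\cQ^S$. Composing with the known fully black-box constructions of each $X \in \{$public-key encryption, deniable encryption, non-interactive ZK, trapdoor functions, oblivious transfer, quantum money$\}$ from iO (for oracle-aided circuits) and OWPs, instantiated with $S$, then gives implementations of every such $X$ that are secure against QPT adversaries equipped with $(S,\cQ^S)$.

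For Property~2 I would show that an $S$-aided QPT algorithm making a single query to $\cQ^S$ breaks any candidate hard problem in $\mathsf{SZK}$ built relative to $S$, and hence (via the implications of \cite{C:BogLee13,TCC:LiuVai16}) any candidate homomorphic encryption or non-interactive computational PIR scheme built relative to $S$. The key point is that the $\mathsf{PDQP}$ decision procedure for $\mathsf{SZK}$ of \cite{ITCS:ABFL16} relativizes: a statistical-difference instance is described by two $S$-aided sampling circuits, so one assembles the circuit that prepares the relevant superposition $\sum_s \sqrt{p_s}\ket{s}\ket{\psi_s}$ and performs the required unitaries interspersed with (non-)collapsing measurements, queries $\cQ^S$ on it once, and classically post-processes the returned sample tuple to estimate the statistical distance. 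The inverse-polynomial completeness/soundness gap is amplified to negligible error by parallel repetition and padding, so the resulting algorithm genuinely falls under the "break" clause of the black-box definition.

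With both properties established, the separation follows by the standard composition argument. Suppose toward contradiction that some $Y$ in the target list admitted a fully black-box construction from some $X$ in the source list. Since $X$ itself has a fully black-box construction from iO and OWPs, composition yields a fully black-box construction of $Y$ from iO and OWPs with a $q(\lambda) \le 2^{o(\lambda)}$-query reduction $R$. Instantiating iO and OWPs with $S$ preserves correctness of the $Y$-construction, so by Property~2 there is an $S$-aided QPT algorithm $\cE^{\cQ^S}$ that breaks it; then $R$ run with oracle access to $\cE^{\cQ^S}$ and to $(S,\cQ^S)$ breaks either OWP- or iO-security relative to $(S,\cQ^S)$, contradicting \cref{cor:io-owp-with-q}. (For the $\mathsf{SZK}$ case specifically, the separation can also be obtained from \cref{thm:sep_dCRPuzz_iO}, since a hard problem in $\mathsf{SZK}$ implies a distributional collision-resistant hash and hence a dCRPuzz.)

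The step I expect to be the main obstacle is the careful formalization of Property~2 in the oracle-aided black-box model. One must verify that the relativized $\mathsf{SZK}$ procedure needs only a single classical query to $\cQ^S$ on an $S$-aided circuit of exactly the form handled by \cref{def:q-with-oracle}, that the promise-problem instances produced by a candidate scheme (e.g. evaluated ciphertexts of $0$ versus $1$ in the homomorphic encryption case) can indeed be recast as such a statistical-difference instance with $S$-aided samplers, and that the gap can be driven below any inverse polynomial so that the resulting $\cE$ meets the "break" requirement in the style of \cref{def:BB_dCRPuzz}. Everything else is a routine repackaging of the arguments already developed in \cref{sec:CR}.
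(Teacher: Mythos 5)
Your proposal is correct and follows essentially the same route as the paper: the paper's proof is exactly a sketch that repeats the argument of \cref{thm:sep_dCRPuzz_iO} with $\cQ^S$ in place of $\CCol^S$, relying on \cref{thm:PDQP-punc} / \cref{cor:io-owp-with-q} for the security of iO and OWPs against $\cQ^S$, plus the cited observation that $\mathsf{SampPDQP}$ decides $\mathsf{SZK}$ and therefore breaks homomorphic encryption and non-interactive computational PIR. Your extra detail on formalizing the relativized $\mathsf{SZK}$ breaker, and your parenthetical note that the $\mathsf{SZK}$ case already follows from the $\mathsf{Col}$-based separation, are consistent with (and slightly more explicit than) what the paper states.
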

\begin{proof}(Proof Sketch)
 The proof follows almost identically to the proof of \cref{thm:sep_dCRPuzz_iO}, with the additional observation that $\mathsf{SampPDQP}$ can decide $\mathsf{SZK}$~\cite{ITCS:ABFL16}, and thus break homomorphic encryption, as well as non-interactive computational private
information retrieval~\cite{C:BogLee13,TCC:LiuVai16}.
 \end{proof}

\ifnum\anonymous=1
\else
{\bf Acknowledgments.}
DK and KT were supported in part by AFOSR, NSF 2112890, NSF CNS-2247727 and a Google
Research Scholar award. 
This material is based upon work supported by the Air Force Office of
Scientific Research under award number FA9550-23-1-0543.
TM is supported by
JST CREST JPMJCR23I3,
JST Moonshot R\verb|&|D JPMJMS2061-5-1-1, 
JST FOREST, 
MEXT QLEAP, 
the Grant-in Aid for Transformative Research Areas (A) 21H05183,
and 
the Grant-in-Aid for Scientific Research (A) No.22H00522.
YS is supported by JST SPRING, Grant Number JPMJSP2110. A\c{C} was supported by the following grants of Vipul Goyal: NSF award 1916939, a gift from Ripple, a DoE NETL award, a JP Morgan Faculty Fellowship, a PNC center for financial services innovation award, and a Cylab seed funding award.

Part of this work was done while the authors were visiting the Simons Institute for the Theory of Computing.
\fi

\ifnum\submission=0
\bibliographystyle{alpha} 
\else
\bibliographystyle{splncs04}
\fi
\bibliography{abbrev3,crypto,reference,text}


\end{document}